\newenvironment{jfnote}{ \bgroup \color{blue} }{\egroup}
\newcommand{\darkgreen}{\color[rgb]{0.0,0.45,0.0}} 
\newcommand{\vis}{{\rm visits}}
\newcommand{\red}{\color[rgb]{1.0,0.2,0.2}} 
\newcommand{\blue}{\color[rgb]{0.2,0.2,1.0}} 
\newcommand{\oldStuff}[1]{}
\newcommand{\Coord}{{\rm Coord}}
\newcommand{\og}{{\scriptscriptstyle \le}}
\newcommand{\Bg}{{\scalebox{1.0}{$\!\scriptscriptstyle /\!B$}}}
\newcommand{\cert}{\xi}
\DeclareMathOperator{\SHom}{\mathscr{H}\text{\kern -3pt {\calligra\large om}}\,}
\newcommand{\subgr}{{\rm subgr}}
\DeclareMathOperator{\ViSu}{VisSub}
\DeclareMathOperator{\legal}{legal}
\newcommand{\naturals}{{\mathbb N}}
\newcommand{\Eor}{E^{\mathrm{or}}}
\newcommand{\mec}[1]{{\bf #1}}	
\newcommand{\bec}[1]{{\boldsymbol #1}}	
\DeclareMathOperator{\Trace}{Trace}
\newcommand{\Edir}{E^{\mathrm{dir}}}
\theoremstyle{plain}
\newtheorem{theorem}{Theorem}[section]
\newtheorem{lemma}[theorem]{Lemma}
\newtheorem{corollary}[theorem]{Corollary}
\theoremstyle{definition}
\newtheorem{definition}[theorem]{Definition}
\newtheorem{xca}{Exercise}[section]
\newtheorem{example}[theorem]{Example}
\newcommand{\isom}{\simeq} 
\newcommand{\ignore}[1]{}
\newcommand{\reals}{{\mathbb R}}
\newcommand{\integers}{{\mathbb Z}}
\newcommand{\complex}{{\mathbb C}}
\newcommand\EE{\mathbb{E}}
\newcommand\II{\mathbb{I}}
\DeclareMathAlphabet{\mathcal}{OMS}{cmsy}{m}{n}
\newcommand\cA{\mathcal{A}}
\newcommand\cC{\mathcal{C}}
\newcommand\cD{\mathcal{D}}
\newcommand\cN{\mathcal{N}}
\newcommand\cP{\mathcal{P}}
\newcommand\cR{\mathcal{R}}
\newcommand\cS{\mathcal{S}}
\newcommand\cT{\mathcal{T}}
\newcommand\cU{\mathcal{U}}
\DeclareMathOperator{\Prob}{Prob}
\DeclareMathOperator{\VLG}{VLG}
\DeclareMathOperator{\Line}{Line}
\DeclareMathOperator{\SNBC}{SNBC}
\DeclareMathOperator{\snbc}{snbc}
\def\from{\colon}
\def\isom{\simeq}
\def\eqdef{\overset{\text{def}}{=}}
\DeclareMathOperator{\ord}{ord}
\DeclareRobustCommand
\p@\hbox{.}\mkern2mu\raise7\p@\hbox{.}\mkern1mu}}
\newcommand\xhookrightarrow[2][]{\ext@arrow 0062{\hookrightarrowfill@}{#1}{#2}}
\def\hookrightarrowfill@{\arrowfill@\lhook\relbar\rightarrow}
\newcommand{\myDeleteNote}[1]{{\red #1}}
\begin{document}

\title[Relativized Alon Conjecture II] 
{On the Relativized Alon Eigenvalue
Conjecture II: Asymptotic Expansion Theorems for Walks}

\author{Joel Friedman}
\address{Department of Computer Science,
        University of British Columbia, Vancouver, BC\ \ V6T 1Z4, CANADA}
\curraddr{}
\email{{\tt jf@cs.ubc.ca}}
\thanks{Research supported in part by an NSERC grant.}

\author{David Kohler}
\address{Department of Mathematics,
        University of British Columbia, Vancouver, BC\ \ V6T 1Z2, CANADA}
\curraddr{422 Richards St, Suite 170, Vancouver BC\ \  V6B 2Z4, CANADA}
\email{{David.kohler@a3.epfl.ch}}
\thanks{Research supported in part by an NSERC grant.}

%
\date{\today}

\subjclass[2010]{Primary 68R10}

\keywords{}

\begin{abstract}

This is the second in a series of articles devoted to showing that a typical
covering map of large degree to a fixed, regular graph has its new adjacency
eigenvalues within the bound conjectured by Alon for random regular graphs.

The first main result in this article concerns the function $f(k,n)$ defined as
the number of SNBC 
(strictly non-backtracking closed) walks of length $k$ of a given
homotopy type in a random covering graph of degree $n$ of a fixed graph.  We
prove the existence of asymptotic expansions in powers of $1/n$ for $f(k,n)$,
where the coefficients---functions of $k$---are proven to have some desirable
properties; namely, these coefficients are approximately a sum of polynomials
times exponential functions.

The second main result is a generalization of the first, where the
number of SNBC walks of length $k$ is multiplied by an indicator
function that the covering graph contains a certain type 
of {\em tangle}; the second result requires more terminology, although
its proof uses the same basic tools used to prove the first result.

The results in this article are mostly straightforward generalizations of 
methods used in previous works.
However, this article
(1) ``factors'' these methods into a number of short, conceptually
simple, and independent parts,
(2) writes each independent part in more general terms, and
(3) significantly simplifies of one of the previous
computations.
As such we expect that this article will make it easier to apply 
trace methods to
related models of random graphs.

\end{abstract}

\maketitle
\setcounter{tocdepth}{3}
\tableofcontents

\newcommand{\sePrelimProofs}{17}

\section{Introduction}

This paper is the second in a series of six articles whose main
results are to prove a relativized version of Alon's Second
Eigenvalue Conjecture,
conjectured in \cite{friedman_relative}, in the case where
the base graph is regular.

For a detailed introduction to this series of articles, we
refer the reader to this first article in this series.

In this article we prove theorems regarding the existence
asymptotic expansions that count the expected number of walks
subject to certain conditions,
in families of random graphs.
These expansion theorems are the basis for proving the
existence of asymptotic expansion regarding {\em certified traces}
studied in the third article of this series.


The proofs of the main theorems of this article are based on
on the methods of \cite{friedman_random_graphs}.
However, in this article we 
``factor'' the methods
in \cite{friedman_random_graphs} into a number of independent parts, and
we have stated the results in each part in as general terms as seems
reasonable.
Furthermore, some of the computations in \cite{friedman_random_graphs}
have been simplified.
In this sense, the results of this article involve some new ideas
beyond a straightforward generalization of the methods of
\cite{friedman_random_graphs}.
Of course, this article is longer due to the fact that we need to
introduce the more general setting of random covering maps that
replace random graphs, and that our second main theorem counts,
roughly speaking,
the number of walks times the number of times the isomorphism class
of a fixed graph lies in the random graph.

We wish to stress that all the main techniques in this article
appear in
our proof of the first main theorem, which is
a simple generalization
of the main expansion theorem, Theorem~2.18, of
\cite{friedman_random_graphs}.
These techniques can be understood already just in the context
of random graphs as in \cite{friedman_random_graphs}, and the
reader may wish to keep this case in mind.

In this article we consider functions $f=f(k,n)$ such as the
expected number of strictly non-backtracking, closed walks of length
$k$ of a fixed homotopy type in a random graph admitting a covering
map of degree $n$ to a fixed graph, $B$; in this article we do not
require $B$ to be regular.  We show that such functions have
expansions
$$
c_0(k) + c_1(k)/n + \cdots + c_{r-1}(k)/n^{r-1} + O(1)c_r(k)/n^r
$$
to any order $r$, where the coefficients $c_i=c_i(k)$ have 
desirable properties.  It is easy to see that such expansions
exist; the main work is to prove the theorems we require regarding the
$c_i(k)$.  

We refer the reader to the first article in this series for
a detailed description of the definitions we use and their
motivation.
These definitions are reviewed in Section~\ref{se_defs_review}.
The rest of this article is organized as follows.
In Section~\ref{se_main_thms} we state the main theorems in this
article.
In Section~\ref{se_length_mult} we give the ``length-multiplicity
formula'' that is the foundation of our proofs of both main theorems.
In Section~\ref{se_proof_outline} we outline the
proof of our first main theorem,
which we give in
Sections~\ref{se_cert_dot}--\ref{se_main_cert_walks};
Section~\ref{se_cert_dot} is devoted to proving our
``Certified Dot Convolution'' Lemma, an abstract lemma about
a type of convolution of two multivariate functions,
and Section~\ref{se_regular} is a lemma about certain statistics
regarding regular languages; both of these lemmas serve to
factor the proof of the expansion theorem, Theorem~2.18,
of \cite{friedman_random_graphs}.
In Section~\ref{se_pairs_prelim} we give some preliminary
terminology and ideas needed to prove our second main 
theorem, which we complete in
Section~\ref{se_main_cert_pairs}.

\section{Review of the Main Definitions}
\label{se_defs_review}

We refer the reader to Article~I for the definitions used in this article,
the motivation of such definitions, and an appendix there that lists all the
definitions and notation.
In this section we briefly review these definitions and notation. 

\subsection{Basic Notation and Conventions}
\label{su_very_basic}

We use $\reals,\complex,\integers,\naturals$
to denote, respectively, the
the real numbers, the complex numbers, the integers, and positive
integers or
natural numbers;
we use $\integers_{\ge 0}$ ($\reals_{>0}$, etc.)
to denote the set of non-negative
integers (of positive real numbers, etc.).
We denote $\{1,\ldots,n\}$ by $[n]$.

If $A$ is a set, we use $\naturals^A$ to denote the set of
maps $A \to \naturals$; we will refers to its elements as
{\em vectors}, denoted in bold face letters, e.g., $\mec k\in \naturals^A$
or $\mec k\from A\to\naturals$; we denote its {\em component}
in the regular face equivalents, i.e., for $a\in A$,
we use $k(a)\in\naturals$ to denote
the $a$-component of $\mec k$.
As usual, $\naturals^n$ denotes $\naturals^{[n]}=\naturals^{\{1,\ldots,n\}}$.
We use similar conventions for $\naturals$ replaced by $\reals$,
$\complex$, etc.

If $A$ is a set, then $\# A$ denotes the cardinality of $A$.
We often denote a set with all capital letters, and its cardinality
in lower case letters; for example,
when we define
$\SNBC(G,k)$, we will write
$\snbc(G,k)$ for $\#\SNBC(G,k)$.

If $A'\subset A$ are sets, then $\II_{A'}\from A\to\{0,1\}$ (with $A$
understood) denotes
the characteristic function of $A'$, i.e., $\II_{A'}(a)$ is $1$ if
$a\in A'$ and otherwise is $0$;
we also write $\II_{A'}$ (with $A$ understood) to mean $\II_{A'\cap A}$
when $A'$ is not necessarily a subset of $A$.

All probability spaces are finite; hence a probability space
is a pair $\cP=(\Omega,P)$ where $\Omega$ is a finite set and
$P\from \Omega\to\reals_{>0}$ with $\sum_{\omega\in\Omega}P(\omega)=1$;
hence an {\em event} is any subset of $\Omega$.
We emphasize that $\omega\in\Omega$ implies that $P(\omega)>0$ with
strict inequality; we refer to the elements of $\Omega$ as
the atoms of the probability space.
We use $\cP$ and $\Omega$ interchangeably when $P$ is
understood and confusion is unlikely.

A {\em complex-valued random variable} on $\cP$ or $\Omega$
is a function $f\from\Omega\to\complex$, and similarly for real-,
integer-, and natural-valued random variable; we denote its
$\cP$-expected value by
$$
\EE_{\omega\in\Omega}[f(\omega)]=\sum_{\omega\in\Omega}f(\omega)P(\omega).
$$
If $\Omega'\subset\Omega$ we denote the probability of $\Omega'$ by
$$
\Prob_{\cP}[\Omega']=\sum_{\omega\in\Omega'}P(\omega')
=
\EE_{\omega\in\Omega}[\II_{\Omega'}(\omega)].
$$
At times we write $\Prob_{\cP}[\Omega']$ where $\Omega'$ is
not a subset of $\Omega$, by which we mean
$\Prob_{\cP}[\Omega'\cap\Omega]$.

\subsection{Graphs, Our Basic Models, Walks}

A {\em directed graph},
or simply a {\em digraph},
is a tuple $G=(V_G,\Edir_G,h_G,t_G)$ consisting of sets
$V_G$ and $\Edir_G$ (of {\em vertices} and {\em directed edges}) and maps
$h_G,t_G$ ({\em heads}
and {\em tails}) $\Edir_G\to V_G$.
Therefore our digraphs can have multiple edges and
self-loops (i.e., $e\in\Edir_G$ with $h_G(e)=t_G(e)$).
A {\em graph} is a tuple $G=(V_G,\Edir_G,h_G,t_G,\iota_G)$
where $(V_G,\Edir_G,h_G,t_G)$ is a digraph and
$\iota_G\from \Edir_G\to \Edir_G$ is an involution with
$t_G\iota_G=h_G$;
the {\em edge set} of $G$, denoted $E_G$, is the
set of orbits of $\iota_G$, which (notation aside)
can be identified with $\Edir_G/\iota_G$,
the set of equivalence classes of
$\Edir_G$ modulo $\iota_G$;
if $\{e\}\in E_G$ is a singleton, then necessarily $e$ is a self-loop
with $\iota_G e =e $, and
we call $e$ a {\em half-loop}; other elements of $E_G$ are sets
$\{e,\iota_G e\}$ of size two, i.e., with $e\ne\iota_G e$, and for such $e$
we say that $e$ (or, at times, $\{e,\iota_G e\}$)
is a {\em whole-loop} if
$h_G e=t_G e$ (otherwise $e$ has distinct endpoints).

Hence these definitions allow our graphs to have multiple edges and 
two types of self-loops---whole-loops
and half-loops---as in
\cite{friedman_geometric_aspects,friedman_alon}.
The {\em indegree} and {\em outdegree} of a vertex in a digraph is
the number of edges whose tail, respectively whose head, is the vertex;
the {\em degree} of a vertex in a graph is its indegree (which equals
its outdegree) in the underlying digraph; 
therefore a whole-loop about a vertex contributes $2$
to its degree, whereas a half-loop contributes $1$.

An {\em orientation} of a graph, $G$, is a choice $\Eor_G\subset\Edir_G$
of $\iota_G$ representatives; i.e., $\Eor_G$ contains every half-loop, $e$,
and one element of each two-element set $\{e,\iota_G e\}$.

A {\em morphism $\pi\from G\to H$} of directed graphs is a pair
$\pi=(\pi_V,\pi_E)$ where $\pi_V\from V_G\to V_H$ and
$\pi_E\from \Edir_G\to\Edir_H$ are maps that intertwine the heads maps
and the tails maps of $G,H$ in the evident fashion;
such a morphism is {\em covering} (respectively, {\em \'etale},
elsewhere called an {\em immersion}) if for each $v\in V_G$,
$\pi_E$ maps those directed edges whose head is $v$ bijectively
(respectively, injectively) to those whose head is $\pi_V(v)$,
and the same with tail replacing head.
If $G,H$ are graphs, then a morphism $\pi\from G\to H$ is a morphism
of underlying directed graphs where $\pi_E\iota_G=\iota_H\pi_E$;
$\pi$ is called {\em covering} or {\em \'etale} if it is so as a morphism
of underlying directed graphs.
We use the words {\em morphism} and {\em map} interchangeably.

A walk in a graph or digraph, $G$, is an alternating sequence
$w=(v_0,e_1,\ldots,e_k,v_k)$ of vertices and directed edges
with $t_Ge_i=v_{i-1}$ and $h_Ge_i=v_i$ for $i\in[k]$;
$w$ is {\em closed} if $v_k=v_0$;
if $G$ is a graph,
$w$ is {\em non-backtracking}, or simply {\em NB},
if $\iota_Ge_i\ne e_{i+1}$
for $i\in[k-1]$, and {\em strictly 
non-backtracking closed}, or simply {\em SNBC},
if it is closed, non-backtracking, and 
$\iota_G e_k\ne e_1$.
The {\em visited subgraph} of a walk, $w$, in a graph $G$, denoted
$\ViSu_G(w)$ or simply
$\ViSu(w)$, is the smallest subgraph of $G$ containing all the vertices
and directed edges of $w$;
$\ViSu_G(w)$ generally depends on $G$, i.e., $\ViSu_G(w)$ cannot be inferred
from the sequence $v_0,e_1,\ldots,e_k,v_k$ alone without knowing
$\iota_G$.

The adjacency matrix, $A_G$,
of a graph or digraph, $G$, is defined as usual (its $(v_1,v_2)$-entry
is the number of directed edges from $v_1$ to $v_2$);
if $G$ is a graph on $n$ vertices, 
then $A_G$ is symmetric and we order its eigenvalues (counted with
multiplicities) and denote them
$$
\lambda_1(G)\ge \cdots \ge \lambda_n(G).
$$
If $G$ is a graph, its
Hashimoto matrix (also called the non-backtracking matrix), $H_G$,
is the adjacency matrix of the {\em oriented line graph} of $G$,
$\Line(G)$,
whose vertices are $\Edir_G$ and whose directed edges
are the subset of $\Edir_G\times\Edir_G$ consisting of pairs $(e_1,e_2)$
such that $e_1,e_2$ form the
directed edges of a non-backtracking walk (of length two) in $G$
(the tail of $(e_1,e_2)$ is $e_1$, and its head $e_2$);
therefore $H_G$
is the square matrix indexed on $\Edir_G$, whose $(e_1,e_2)$ entry
is $1$ or $0$ according to, respectively, whether or not
$e_1,e_2$ form a non-backtracking walk
(i.e., $h_G e_1=t_G e_2$ and $\iota_G e_1\ne e_2$).
We use $\mu_1(G)$ to denote the Perron-Frobenius eigenvalue of 
$H_G$, and use $\mu_i(G)$ with $1<i\le \#\Edir_G$ to denote the
other eigenvalues of $H_G$ (which are generally complex-valued)
in any order.

If $B,G$ are both digraphs,
we say that $G$ is a {\em coordinatized graph over $B$
of degree $n$}
if
\begin{equation}\label{eq_coord_def}
V_G=V_B\times [n], \quad\Edir_G=\Edir_B\times[n], \quad
t_G(e,i)=(t_B e,i),\quad
h_G(e,i)=(h_Be,\sigma(e)i)
\end{equation} 
for some map
$\sigma\from\Edir_B\to\cS_n$, where $\cS_n$ is the group
of permutations on $[n]$; we call $\sigma$ (which is uniquely determined by
\eqref{eq_coord_def}) {\em the permutation assignment
associated to $G$}.
[Any such $G$ comes with a map $G\to B$ given by 
``projection to the first component of
the pair,'' and this map is a covering map of degree $n$.]
If $B,G$ are graphs, we say that a graph $G$ is a 
{\em coordinatized graph over $B$
of degree $n$} if \eqref{eq_coord_def} holds and also
\begin{equation}\label{eq_coord_def_graph}
\iota_G(e,i) = \bigl( \iota_B e,\sigma(e)i \bigr) ,
\end{equation} 
which implies that 
\begin{equation}\label{eq_sigma_iota_B}
(e,i)=\iota_G\iota_G(e,i) = \bigl( e, \sigma(\iota_B e)\sigma(e)i \bigr)
\quad\forall e\in\Edir_B,\ i\in[n],
\end{equation}
and hence $\sigma(\iota_B e)=\sigma(e)^{-1}$;
we use $\Coord_n(B)$ to denote the set of all coordinatized covers
of a graph, $B$, of degree $n$.

The {\em order} of a graph, $G$, is $\ord(G)\eqdef (\#E_G)-(\#V_G)$.
Note that a half-loop and a whole-loop each contribute $1$ to 
$\#E_G$ and to the order of $G$.
The {\em Euler characteristic} of a graph, $G$, is
$\chi(G)\eqdef (\# V_G) - (\#\Edir_G)/2$.
Hence $\ord(G)\ge -\chi(G)$, with equality iff $G$ has no half-loops.

If $w$ is a walk in any $G\in\Coord_n(B)$, then one easily
sees that $\ViSu_G(w)$ can be inferred
from $B$ and $w$ alone.

If $B$ is a graph without half-loops, then the {\em permutation model over
$B$} refers to the probability spaces $\{\cC_n(B)\}_{n\in\naturals}$ where
the atoms of $\cC_n(B)$ are coordinatized coverings of degree $n$
over $B$ chosen with the uniform distribution.
More generally, a {\em model} over a graph, $B$, is a collection of
probability spaces, $\{\cC_n(B)\}_{n\in N}$, 
defined for $n\in N$ where $N\subset\naturals$ is an
infinite subset, and where the atoms of each $\cC_n(B)$ are elements
of $\Coord_n(B)$.
There are a number of models related to the permutation model,
which are generalizations of the models of \cite{friedman_alon},
that we call {\em our basic models} and are defined in Article~I;
let us give a rough description.

All of {\em our basic models} are {\em edge independent}, meaning that
for any orientation $\Eor_B\subset\Edir_B$, the values of 
the permutation assignment, $\sigma$, on $\Eor_B$ are independent
of one another (of course, $\sigma(\iota_G e)=(\sigma(e))^{-1}$,
so $\sigma$ is determined by its values on any orientation
$\Eor_B$); for edge independent models, it suffices to specify
the ($\cS_n$-valued)
random variable $\sigma(e)$ for each $e$ in $\Eor_B$ or $\Edir_B$.
The permutation model can be alternatively described as the 
edge independent model that assigns a uniformly chosen permutation
to each $e\in\Edir_B$ (which requires $B$ to have no half-loops);
the {\em full cycle} (or simply {\em cyclic}) model is the same, except
that if $e$ is a whole-loop then $\sigma(e)$ is chosen uniformly
among all permutations whose cyclic structure consists of a single
$n$-cycle.
If $B$ has half-loops, then we restrict $\cC_n(B)$ either to $n$ even
or $n$ odd and for each half-loop $e\in\Edir_B$ we
choose $\sigma(e)$ as follows: if $n$ is even we choose 
$\sigma(e)$ uniformly among all perfect matchings,
i.e., involutions (maps equal to their inverse) with no fixed points;
if $n$ is odd then we choose $\sigma(e)$ uniformly among
all {\em nearly perfect matchings}, meaning involutions with one
fixed point.
We combine terms when $B$ has half-loops: for example,
the term {\em full cycle-involution} (or simply {\em cyclic-involution})
{\em model of odd degree over $B$} refers
to the model where the degree, $n$, is odd,
where $\sigma(e)$ follows the full cycle rule when $e$ is
not a half-loop, and where $\sigma(e)$ is a near perfect matching
when $e$ is a half-loop;
similarly for the {\em full cycle-involution} (or simply 
{\em cyclic-involution})
{\em model of even degree}
and the {\em permutation-involution model of even degree}
or {\em of odd degree}.

If $B$ is a graph, then a model, $\{\cC_n(B)\}_{n\in N}$, over $B$
may well have $N\ne \naturals$ (e.g., our basic models above when
$B$ has half-loops); in this case many formulas involving
the variable $n$ are only defined for $n\in N$.  For brevity, we
often do not explicitly write $n\in N$ in such formulas; 
for example we usually write
$$
\lim_{n\to\infty} \quad\mbox{to abbreviate}\quad
\lim_{n\in N,\ n\to\infty} \ .
$$
Also we often write simply $\cC_n(B)$ or $\{\cC_n(B)\}$ for
$\{\cC_n(B)\}_{n\in N}$ if confusion is unlikely to occur.

A graph is {\em pruned} if all its vertices are of degree at least
two (this differs from the more standard definition of {\em pruned} 
meaning that there are
no leaves).  If $w$ is any SNBC walk in a graph, $G$, then
we easily see that
$\ViSu_G(w)$ is necessarily pruned: i.e., any of its vertices must be
incident upon a whole-loop or two distinct edges
[note that a walk of length $k=1$ about a half-loop, $(v_0,e_1,v_1)$, by
definition, is not SNBC since $\iota_G e_k=e_1$].
It easily follows that $\ViSu_G(w)$ is contained in the graph
obtained from $G$ by repeatedly ``pruning any leaves''
(i.e., discarding any vertex of degree one and its incident edge)
from $G$.
Since our trace methods only concern (Hashimoto matrices and)
SNBC walks, it suffices to work with models $\cC_n(B)$ where
$B$ is pruned.
It is not hard to see that if $B$ is pruned and connected,
then $\ord(B)=0$ iff $B$ is a cycle,
and $\mu_1(B)>1$ iff $\chi(B)<0$;
this is formally proven in Article~III (Lemma~6.4).
Our theorems are not usually interesting unless $\mu_1(B)>\mu_1^{1/2}(B)$,
so we tend to restrict our main theorems
to the case $\mu_1(B)>1$ or, equivalently,
$\chi(B)<0$; some of our techniques work without these restrictions.

\subsection{Asymptotic Expansions}
\label{su_asymptotic_expansions}


A function $f\from\naturals\to\complex$ is a {\em polyexponential} if
it is a sum of functions $p(k)\mu^k$, where $p$ is a polynomial
and $\mu\in\complex$, with the convention
that for $\mu=0$ we understand $p(k)\mu^k$ to mean
any function that vanishes for sufficiently large $k$\footnote{
  This convention is used because then for any fixed matrix, $M$,
  any entry of $M^k$, as a function of $k$, is a polyexponential
  function of $k$; more specifically, the $\mu=0$ convention
  is due to the fact that a Jordan block of eigenvalue $0$ is
  nilpotent.
  }; we refer to the $\mu$
needed to express $f$ as the {\em exponents} or {\em bases} of $f$.
A function $f\from\naturals\to\complex$ is {\em of growth $\rho$}
for a $\rho\in\reals$ if $|f(k)|=o(1)(\rho+\epsilon)^k$ for any $\epsilon>0$.
A function $f\from\naturals\to\complex$ is $(B,\nu)$-bounded if it
is the sum of a function of growth $\nu$ plus a polyexponential function
whose bases are bounded by $\mu_1(B)$ (the Perron-Frobenius eigenvalue
of $H_B$); the {\em larger bases} of $f$ (with respect to $\nu$) are
those bases of the polyexponential function that are larger in
absolute value than $\nu$.
Moreover, such an $f$ is called {\em $(B,\nu)$-Ramanujan} if its
larger bases are all eigenvalues of $H_B$.

We say that a function $f=f(k,n)$ taking some subset of $\naturals^2$ to
$\complex$ has a 
{\em $(B,\nu)$-bounded expansion of order $r$} if for some
constant $C$ we have
\begin{equation}\label{eq_B_nu_defs_summ}
f(k,n) = c_0(k)+\cdots+c_{r-1}(k)+ O(1) c_r(k)/n^r,
\end{equation} 
whenever $f(k,n)$ is defined and $1\le k\le n^{1/2}/C$, where
for $0\le i\le r-1$, the $c_i(k)$ are $(B,\nu)$-bounded and $c_r(k)$
is of growth $\mu_1(B)$.
Furthermore, such an expansion is called {\em $(B,\nu)$-Ramanujan}
if for $0\le i\le r-1$, the $c_i(k)$ are {\em $(B,\nu)$-Ramanujan}.

Typically our functions $f(k,n)$ as in
\eqref{eq_B_nu_defs_summ} are defined for all $k\in\naturals$
and $n\in N$ for an infinite set $N\subset\naturals$ representing
the possible degrees of our random covering maps in the model
$\{\cC_n(B)\}_{n\in N}$ at hand.

\subsection{Tangles}
\label{su_tangles}

A {\em $(\ge\nu)$-tangle} is any 
connected graph, $\psi$, with $\mu_1(\psi)\ge\nu$,
where $\mu_1(\psi)$ denotes the Perron-Frobenius eigenvalue of $H_B$;
a {\em $(\ge\nu,<r)$-tangle} is any $(\ge\nu)$-tangle of order less than
$r$;
similarly for $(>\nu)$-tangles, i.e.,
$\psi$ satisfying the weak inequality $\mu_1(\psi)>\nu$,
and for $(>\nu,r)$-tangles.
We use ${\rm TangleFree}(\ge\nu,<r)$ to denote those graphs that don't
contain a subgraph that is $(\ge\nu,<r)$-tangle, and
${\rm HasTangles}(\ge\nu,<r)$ for those that do; we
never use $(>\nu)$-tangles in defining TangleFree and HasTangles,
for the technical reason
(see Article~III or Lemma~9.2 of \cite{friedman_alon}) that
for $\nu>1$ and any $r\in\naturals$
that there are only finitely many 
$(\ge\nu,<r)$-tangles, up to isomorphism, that are minimal
with respect to inclusion\footnote{
  By contrast, there are infinitely many minimal $(>\nu,<r)$-tangles
  for some values of $\nu>1$ and $r$: indeed, consider any connected pruned
  graph $\psi$, and set $r=\ord(\psi)+2$, $\nu=\mu_1(\psi)$.  Then if
  we fix two vertices in $\psi$ and let $\psi_s$ be the graph that is
  $\psi$ with an additional edge of length $s$ between these two 
  vertices, then $\psi_s$ is an $(>\nu,<r)$-tangle.  However, if
  $\psi'$ is $\psi$ with any single edge deleted, and $\psi'_s$ is 
  $\psi_s$ with this edge deleted, then one can show that
  $\mu_1(\psi'_s)<\nu$ for $s$ sufficiently large.  It follows that
  for $s$ sufficiently large, $\psi_s$ are minimal $(>\nu,<r)$-tangles.
}.

\subsection{$B$-Graphs, Ordered Graphs, and Strongly Algebraic Models}
\label{su_B_ordered_strongly_alg}

An {\em ordered graph}, $G^\og$, is a graph, $G$, endowed with an
{\em ordering}, meaning
an orientation (i.e., $\iota_G$-orbit representatives), 
$\Eor_G\subset\Edir_G$, 
and total orderings of $V_G$ and $E_G$;
a walk, $w=(v_0,\ldots,e_k,v_k)$ in a graph endows $\ViSu(w)$ with a
{\em first-encountered} ordering:
namely, $v\le v'$ if the first occurrence of $v$ comes before that
of $v'$ in the sequence $v_0,v_1,\ldots,v_k$,
similarly for $e\le e'$, and we orient each edge in the
order in which it is first traversed (some edges may be traversed
in only one direction).
We use $\ViSu^\og(w)$ to refer to $\ViSu(w)$ with this ordering.

A {\em morphism} $G^\og\to H^\og$ of ordered graphs is a morphism
$G\to H$ that respects the ordering in the evident fashion.
We are mostly interested in {\em isomorphisms} of ordered graphs;
we easily see that any isomorphism $G^\og\to G^\og$ must be the
identity morphism; it follows that if $G^\og$ and $H^\og$ are
isomorphic, then there is a unique isomorphism $G^\og\to H^\og$.

If $B$ is a graph, then a $B$-graph, $G_\Bg$, is a graph $G$ endowed 
with a map $G\to B$ (its {\em $B$-graph} structure).
A {\em morphism} $G_\Bg\to H_\Bg$ of $B$-graphs is a morphism
$G\to H$ that respects the $B$-structures in the evident sense.
An {\em ordered $B$-graph}, $G^\og_\Bg$, is a graph endowed with
both an ordering and a $B$-graph structure; a morphism of
ordered $B$-graphs is a morphism of the underlying graphs that
respects both the ordering and $B$-graph structures.
If $w$ is a walk in a $B$-graph, $G_\Bg$, we use $\ViSu_\Bg(w)$ to denote
$\ViSu(w)$ with the $B$-graph structure it inherits from $G$ in
the evident sense; we use $\ViSu_\Bg^\og(w)$ to denote
$\ViSu_\Bg(w)$ with its first-encountered ordering.

At times we drop the superscript $\,^\og$ and the subscript $\,_\Bg$;
for example, we write $G\in\Coord_n(B)$ instead of $G_\Bg\in\cC_n(B)$
(despite the fact that we constantly utilize
the $B$-graph structure on elements of
$\Coord_n(B)$).

A $B$-graph $G_\Bg$ is {\em covering} or {\'etale} if its structure
map $G\to B$ is.

If $\pi\from S\to B$ is a $B$-graph, we use
$\mec a=\mec a_{S_\Bg}$ to denote the vector
$\Edir_B\to\integers_{\ge 0}$ given by
$a_{S_\Bg}(e) = \# \pi^{-1}(e)$;
since $a_{S_\Bg}(\iota_B e) = a_{S_\Bg}(e)$ for all $e\in\Edir_B$,
we sometimes view $\mec a$ as a function $E_B\to\integers_{\ge 0}$, i.e.,
as the function taking $\{e,\iota_B e\}$ to 
$a_{S_\Bg}(e)=a_{S_\Bg}(\iota_B e)$.
We similarly define $\mec b_{S_\Bg}\from V_B\to\integers_{\ge 0}$ by
setting $b_{S_\Bg}(v) = \#\pi^{-1}(v)$.
If $w$ is a walk in a $B$-graph, we set $\mec a_w$ to be
$\mec a_{S_\Bg}$ where $S_\Bg=\ViSu_\Bg(w)$, and similarly for $\mec b_w$.
We refer to $\mec a,\mec b$ (in either context) as
{\em $B$-fibre counting functions}.

If $S_\Bg^\og$ is an ordered $B$-graph and $G_\Bg$ is a $B$-graph, we 
use $[S_\Bg^\og]\cap G_\Bg$ to denote the set of ordered graphs ${G'}_\Bg^\og$
such that $G'_\Bg\subset G_\Bg$ and ${G'}_\Bg^\og\isom S_\Bg^\og$
(as ordered $B$-graphs); this set is naturally identified with the
set of injective morphisms $S_\Bg\to G_\Bg$, and the cardinality of these
sets is independent of the ordering on $S_\Bg^\og$.


A $B$-graph, $S_\Bg$, or an ordered $B$-graph, $S_\Bg^\og$,
{\em occurs in a model $\{\cC_n(B)\}_{n\in N}$}
if for all sufficiently large
$n\in N$, $S_\Bg$ is isomorphic to a $B$-subgraph of some element
of $\cC_n(B)$; similary a graph, $S$, {\em occurs in 
$\{\cC_n(B)\}_{n\in N}$} if it can be endowed with a $B$-graph
structure, $S_\Bg$, that occurs in 
$\{\cC_n(B)\}_{n\in N}$.

A model $\{\cC_n(B)\}_{n\in N}$ of coverings of $B$ is {\em strongly
algebraic} if
\begin{enumerate}
\item for each $r\in\naturals$
there is a function, $g=g(k)$, of growth $\mu_1(B)$
such that if $k\le n/4$ we have
\begin{equation}\label{eq_algebraic_order_bound}
\EE_{G\in\cC_n(B)}[ \snbc_{\ge r}(G,k)] \le
g(k)/n^r
\end{equation}
where $\snbc_{\ge r}(G,k)$ is the number of SNBC walks of length
$k$ in $G$ whose visited subgraph is of order at least $r$;
\item
for any $r$ there exists
a function $g$ of growth $1$ and real $C>0$ such that the following
holds:
for any ordered $B$-graph, $S_\Bg^\og$, that is pruned and of
order less than $r$,
\begin{enumerate}
\item
if $S_\Bg$ occurs in $\cC_n(B)$, then for
$1\le \#\Edir_S\le n^{1/2}/C$,
\begin{equation}\label{eq_expansion_S}
\EE_{G\in\cC_n(B)}\Bigl[ \#\bigl([S_\Bg^\og]\cap G\bigr) \Bigr]
=
c_0 + \cdots + c_{r-1}/n^{r-1}
+ O(1) g(\# E_S) /n^r
\end{equation} 
where the $O(1)$ term is bounded in absolute value by $C$
(and therefore independent of $n$ and $S_\Bg$), and
where $c_i=c_i(S_\Bg)\in\reals$ such that
$c_i$ is $0$ if $i<\ord(S)$ and $c_i>0$ for $i=\ord(S)$;
and
\item
if $S_\Bg$ does not occur in $\cC_n(B)$, then for any
$n$ with $\#\Edir_S\le n^{1/2}/C$,
\begin{equation}\label{eq_zero_S_in_G}
\EE_{G\in\cC_n(B)}\Bigl[ \#\bigl([S_\Bg^\og]\cap G\bigr) \Bigr]
= 0 
\end{equation} 
(or, equivalently, no graph in $\cC_n(B)$ has a $B$-subgraph isomorphic to
$S_\Bg^\og$);
\end{enumerate}
\item
$c_0=c_0(S_\Bg)$ equals $1$ if $S$ is a cycle (i.e., $\ord(S)=0$ and
$S$ is connected) that occurs in $\cC_n(B)$;
\item
$S_\Bg$ occurs in $\cC_n(B)$ iff $S_\Bg$ is an \'etale $B$-graph
and $S$ has no half-loops; and
\item
there exist
polynomials $p_i=p_i(\mec a,\mec b)$ such that $p_0=1$
(i.e., identically 1), and for every
\'etale $B$-graph, $S_\Bg^\og$, we have that
\begin{equation}\label{eq_strongly_algebraic}
c_{\ord(S)+i}(S_\Bg) = p_i(\mec a_{S_\Bg},\mec b_{S_\Bg}) \ .
\end{equation}
\end{enumerate}
Notice that condition~(3), regarding $S$ that are cycles, is implied
by conditions~(4) and~(5); we leave in condition~(3) since this makes the
definition of {\em algebraic} (below) simpler.
Notice that \eqref{eq_expansion_S} and \eqref{eq_strongly_algebraic}
are the main reasons that we work with
ordered $B$-graphs: indeed, the coefficients depend only on
the $B$-fibre counting function $\mec a,\mec b$, which 
depend on the structure of
$S_\Bg^\og$ as a $B$-graph; this is not true if we don't work with
ordered graphs: i.e.,
\eqref{eq_expansion_S} fails to
hold if we replace $[S_\Bg^\og]$
with $[S_\Bg]$ (when $S_\Bg$ has nontrivial automorphisms), where
$[S_\Bg]\cap G$ refers to the number of $B$-subgraphs of $G$ isomorphic
to $S_\Bg$; the reason is that
$$
\#[S_\Bg^\og]\cap G_\Bg = \bigl( \#{\rm Aut}(S_\Bg)\bigr)
\bigl( \#[S_\Bg]\cap G_\Bg \bigr)
$$
where ${\rm Aut}(S_\Bg)$ is the group of automorphisms of $S_\Bg$, 
and it is $[S_\Bg^\og]\cap G_\Bg$ rather than $[S_\Bg]\cap G_\Bg$
that turns out to have the ``better'' properties;
see Section~6 of Article~I for examples.
Ordered graphs are convenient to use for a number of other reasons.

\ignore{
\myDeleteNote{Stuff deleted here and below on September 13, 2018.}
}

\subsection{Homotopy Type}

The homotopy type of a walk and of an ordered subgraph are defined
by {\em suppressing} its ``uninteresting'' vertices of degree two;
examples are given in Section~6 of Article~I.
Here is how we make this precise.

A {\em bead} in a graph is a vertex of degree two that is not
incident upon a self-loop.
Let $S$ be a graph and $V'\subset V_S$ be a {\em proper bead subset} of 
$V_S$,
meaning that $V'$ consists only of beads of $V$,
and that no connected component of $S$ has all its vertices in
$V'$ (this can only happen for connected components of $S$ that
are cycles);
we define the {\em bead suppression} $S/V'$ to be the following
graph: (1) its
vertex set $V_{S/V'}$
is $V''=V_S\setminus V'$, (2) its directed edges, $\Edir_{S/V'}$ consist
of
the {\em $V$'-beaded paths}, i.e., non-backtracking walks
in $S$ between elements of $V''$ whose intermediate vertices lie in $V'$,
(3) $t_{S/V'}$ and $h_{S/V'}$ give the first and last vertex of
the beaded path, and (4) $\iota_{S/V'}$ takes a beaded path
to its reverse walk
(i.e., takes $(v_0,e_1,\ldots,v_k)$ to
$(v_k,\iota_S e_k,\ldots,\iota_S e_1,v_0)$).
One can recover $S$ from the suppression $S/V'$ for pedantic reasons,
since we have defined its directed edges to be beaded paths of $S$.
If $S^\og=\ViSu^\og(w)$ where $w$ is a non-backtracking walk,
then the ordering of $S$ can be inferred by the naturally
corresponding order on $S/V'$, and we use $S^\og/V'$ to denote
$S/V'$ with this ordering.

Let $w$ be a non-backtracking walk in a graph, and 
$S^\og=\ViSu^\og(w)$ its visited
subgraph; the {\em reduction} of $w$ is the ordered graph,
$R^\og$, denoted $S^\og/V'$,
whose underlying graph is
$S/V'$ where $V'$ is the set of beads of $S$ except
the first and last vertices of $w$ (if one or both are beads),
and whose ordering is naturally arises from that on $S^\og$;
the {\em edge lengths} of $w$ is the function $E_{S/V'}\to\naturals$
taking an edge of $S/V'$ to the length of the beaded path it represents
in $S$;
we say that $w$ is {\em of homotopy type} $T^\og$ for any ordered
graph $T^\og$ that is isomorphic to $S^\og/V'$; in this case
the lengths of $S^\og/V'$ naturally give lengths $E_T\to\naturals$
by the unique isomorphism from $T^\og$ to $S^\og/V'$.
If $S^\og$ is the visited subgraph of a non-backtracking walk,
we define the reduction, homotopy type, and edge-lengths of $S^\og$ to
be that of the walk, since these notions depend only on $S^\og$ and
not the particular walk.

If $T$ is a graph and $\mec k\from E_T\to\naturals$ a function, then
we use $\VLG(T,\mec k)$ (for {\em variable-length graph}) to denote
any graph obtained from $T$ by gluing in a path of length $k(e)$
for each $e\in E_T$.  If $S^\og$ is of homotopy type $T^\og$
and $\mec k\from E_T\to \naturals$ its edge lengths,
then $\VLG(T,\mec k)$ is isomorphic to $S$ (as a graph).
Hence the construction of variable-length graphs is a sort of
inverse to bead suppression.

If $T^\og$ is an ordering on $T$ that arises as the first encountered
ordering of a non-backtracking walk on $T$ (whose visited subgraph
is all of $T$), then this ordering gives rise to a natural
ordering on $\VLG(T,\mec k)$ that we denote $\VLG^\og(T^\og,\mec k)$.
Again, this ordering on the variable-length graph is a sort of
inverse to bead suppression on ordered graphs.

\subsection{$B$-graphs and Wordings}

If $w_B=(v_0,e_1,\ldots,e_k,v_k)$ with $k\ge 1$ is a walk in a graph
$B$, then we can identify
$w_B$ with the string $e_1,e_2,\ldots,e_k$ over the alphabet
$\Edir_B$.
For technical reasons, the definitions below of
a {\em $B$-wording} and 
the {\em induced wording}, are given as strings over $\Edir_B$ rather
than the full alternating string of vertices and directed edges.
The reason is that 
doing this gives the correct notion of the {\em eigenvalues} of
an algebraic model (defined below).

Let $w$ be a non-backtracking walk in a $B$-graph, whose reduction
is $S^\og/V'$, and let
$S_\Bg^\og=\ViSu_\Bg^\og$.
Then the {\em wording induced by $w$} on $S^\og/V'$ is
the map $W$ from $\Edir_{S/V'}$ to strings in $\Edir_B$
of positive length, 
taking a
directed edge $e\in\Edir_{S/V'}$ to the string of $\Edir_B$ edges
in the non-backtracking walk in $B$
that lies under the walk in $S$ that it represents.
Abstractly, we say that a {\em $B$-wording} of a graph $T$
is a map $W$ from $\Edir_T$ to words over the alphabet
$\Edir_B$ that represent (the directed edges of)
non-backtracking walks in $B$ such that
(1) $W(\iota_T e)$ is the reverse word (corresponding to
the reverse walk) in $B$ of $W(e)$, 
(2) if $e\in\Edir_T$ is a half-loop, then $W(e)$ is of length one
whose single letter is a half-loop, and
(3) the tail of the first directed edge in $W(e)$ 
(corresponding to the first vertex in the associated walk in $B$)
depends only on $t_T e$;
the {\em edge-lengths} of $W$ is the function $E_T\to\naturals$
taking $e$ to the length of $W(e)$.
[Hence the wording induced by $w$ above is, indeed, a $B$-wording.]

Given a graph, $T$, and a $B$-wording $W$, there is a $B$-graph,
unique up to isomorphism, whose underlying graph is $\VLG(T,\mec k)$
where $\mec k$ is the edge-lengths of $W$, and where the $B$-graph
structure maps the non-backtracking walk in $\VLG(T,\mec k)$
corresponding to an $e\in\Edir_T$ to the non-backtracking walk in $B$
given by $W(e)$.
We denote any such $B$-graph by $\VLG(T,W)$; again this is
a sort of inverse to starting with a non-backtracking walk
and producing the wording it induces on its visited subgraph.

Notice that if $S_\Bg^\og=\VLG(T^\og,W)$ for a $B$-wording, $W$,
then the $B$-fibre counting functions
$\mec a_{S_\Bg}$ and $\mec b_{S_\Bg}$ can be
inferred from $W$, and we may therefore write $\mec a_W$ and
$\mec b_W$.

\subsection{Algebraic Models}

By a $B$-type we mean a pair $T^{\rm type}=(T,\cR)$ consisting
of a graph, $T$, and a map from $\Edir_T$ to the set
of regular languages over the alphabet $\Edir_B$ (in the sense of regular
language theory) such that
(1) all words in $\cR(e)$ are positive length strings corresponding to
non-backtracking walks in $B$, 
(2) if for $e\in\Edir_T$ we have $w=e_1\ldots e_k\in\cR(e)$,
then $w^R\eqdef \iota_B e_k\ldots\iota_B e_1$ lies in $\cR(\iota_T e)$,
and (3) if $W\from \Edir_T\to(\Edir_B)^*$ (where $(\Edir_B)^*$ is
the set of strings over $\Edir_B$) satisfies
$W(e)\in\cR(e)$ and $W(\iota_T e)=W(e)^R$ for all $e\in \Edir_T$,
then $W$ is a $B$-wording.
A $B$-wording $W$ of $T$ is {\em of type $T^{\rm type}$} if
$W(e)\in\cR(e)$ for each $e\in\Edir_T$.

Let $\cC_n(B)$ be a model that satisfies (1)--(3) of the definition
of strongly algebraic.
If $\cT$ a subset of $B$-graphs,
we say that the model is {\em algebraic restricted to $\cT$}
if 
either all $S_\Bg\in\cT$ occur in $\cC_n(B)$ or they all do not,
and if so
there are polynomials $p_0,p_1,\ldots$ such that
$c_i(S_\Bg)=p_i(S_\Bg)$ for any $S_\Bg\in\cT$. 
We say that $\cC_n(B)$ is {\em algebraic} if 
\begin{enumerate}
\item
setting $h(k)$ to be
the number of $B$-graph isomorphism classes of \'etale $B$-graphs
$S_\Bg$ such that $S$ is a cycle of length $k$ and $S$ does
not occur in $\cC_n(B)$, we have that 
$h$ is a function of growth $(d-1)^{1/2}$; and
\item
for any
pruned, ordered graph, $T^\og$, there is a finite number of
$B$-types, $T_j^{\rm type}=(T^\og,\cR_j)$, $j=1,\ldots,s$, 
such that (1) any $B$-wording, $W$, of $T$ belongs to exactly one
$\cR_j$, and
(2) $\cC_n(B)$ is algebraic when restricted to $T_j^{\rm type}$.
\end{enumerate}

[In Article~I we show that
if instead each $B$-wording belong to 
{\em at least one} $B$-type $T_j^{\rm type}$, then one can choose a
another set of
$B$-types that satisfy (2) and where each $B$-wording belongs
to {\em a unique} $B$-type;
however, the uniqueness
is ultimately needed in our proofs,
so we use uniqueness in our definition of algebraic.]

We remark that one can say that a walk, $w$, in a $B$-graph,
or an ordered $B$-graphs, $S_\Bg^\og$, is of {\em homotopy type $T^\og$},
but when $T$ has non-trivial automorphism one {\em cannot} say
that is of $B$-type $(T,\cR)$ unless---for example---one orders
$T$ and speaks of an {\em ordered $B$-type}, $(T^\og,\cR)$.
[This will be of concern only in Article~II.]

We define the {\em eigenvalues} of a regular language, $R$, to be the minimal
set $\mu_1,\ldots,\mu_m$ such that for any $k\ge 1$,
the number of words of length $k$ in the language
is given as
$$
\sum_{i=1}^m p_i(k)\mu_i^k
$$
for some polynomials $p_i=p_i(k)$, with the convention that
if $\mu_i=0$ then $p_i(k)\mu_i^k$ refers to any function that 
vanishes for $k$ sufficiently large (the reason for this is that
a Jordan block of eigenvalue $0$ is a nilpotent matrix).
Similarly, we define the eigenvalues of a $B$-type $T^{\rm type}=(T,\cR)$
as the union of all the eigenvalues of the $\cR(e)$.
Similarly a {\em set of eigenvalues} of a graph, $T$
(respectively, an algebraic model, $\cC_n(B)$)
is
any set containing the eigenvalues containing the eigenvalues
of some choice of $B$-types used in the definition of algebraic
for $T$-wordings (respectively, for $T$-wordings for all $T$).

[In Article~V we prove that all of our basic models are algebraic;
some of our basic models, such as the
permutation-involution model and the cyclic models, are not
strongly algebraic.]

We remark that a homotopy type, $T^\og$,
of a non-backtracking walk, can only have beads as its first or last 
vertices; however, in the definition of algebraic we require
a condition on {\em all pruned graphs}, $T$, 
which includes $T$ that may have many beads and may not be connected;
this is needed
when we define homotopy types of pairs in Article~II.

\subsection{SNBC Counting Functions}

If $T^\og$ is an ordered graph and $\mec k\from E_T\to\naturals$, 
we use $\SNBC(T^\og,\mec k;G,k)$ to denote the set of SNBC walks in $G$
of length $k$ and of homotopy type $T^\og$ and edge lengths $\mec k$.
We similarly define
$$
\SNBC(T^\og,\ge\bec\xi;G,k) \eqdef 
\bigcup_{\mec k\ge\bec\xi} \SNBC(T^\og,\mec k;G,k)
$$
where $\mec k\ge\bec\xi$ means that $k(e)\ge\xi(e)$ for all $e\in E_T$.
We denote the cardinality of these sets by replacing $\SNBC$ with
$\snbc$;
we call $\snbc(T^\og,\ge\bec\xi;G,k)$ the set of 
{\em $\bec\xi$-certified
traces of homotopy type $T^\og$ of length $k$ in $G$};
in Article~III we will refer to certain $\bec\xi$ as {\em certificates}.

\section{Main Theorems in this Article}
\label{se_main_thms}

In this section we state the main theorems that we prove.  First we state
the simplest theorem.

\begin{theorem}\label{th_main_illustration}
Let $B$ be a graph, and $\{\cC_n(B)\}_{n\in N}$ an algebraic model over
$B$.  Let 
$T^\og$ be an ordered graph,
and let
$$
\nu = \max\bigl( \mu_1^{1/2}(B), \mu_1(T) \bigr).
$$
Then 
$$
f(k,n)\eqdef \EE_{G\in\cC_n(B)}[\snbc(T^\og;G;k)]
$$
has a $(B,\nu)$-bounded expansion
$$
c_0(k)+\cdots+c_{r-1}(k)/n^{r-1}+ O(1) c_r(k)/n^r
$$
to any order $r$, and the
bases of the coefficients of the expansion is a subset of 
any set of eigenvalues for the model $\cC_n(B)$.
\end{theorem}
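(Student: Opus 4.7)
The plan is to reduce the expected SNBC walk count to a controlled sum over $B$-wordings of $T$ via a length-multiplicity decomposition, and then analyze that sum using the algebraicity of $\cC_n(B)$ together with two abstract lemmas (Certified Dot Convolution and a regular-language statistics lemma).

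First, I would invoke the length-multiplicity formula of Section~\ref{se_length_mult}. Every SNBC walk $w$ of homotopy type $T^\og$ in $G$ uniquely determines its visited $B$-subgraph $\ViSu_\Bg^\og(w)$, whose underlying graph is $\VLG(T,\mec k)$ for some edge-length vector $\mec k$ and whose $B$-structure corresponds to a $B$-wording $W$ of $T$. Grouping walks by $W$ gives
\[
\snbc(T^\og;G;k) \;=\; \sum_{W} m(T^\og,W;k)\,\#\bigl([\VLG^\og(T^\og,W)]\cap G\bigr),
\]
where $m(T^\og,W;k)$ counts the SNBC walks of length $k$ in the canonical $B$-graph $\VLG^\og(T^\og,W)$ that are surjective and induce its specified ordering, and depends only on $T^\og$, $W$, and $k$. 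Taking expectations and applying the algebraic property termwise yields, for every $r\ge 1$,
\[
\EE_{G\in\cC_n(B)}\bigl[\#([\VLG^\og(T^\og,W)]\cap G)\bigr] \;=\; \sum_{i=0}^{r-1}\frac{p_{i,W}(\mec a_W,\mec b_W)}{n^i} \;+\; O(1)\frac{g(\#E_{\VLG(T,W)})}{n^r},
\]
where $p_{i,W}$ is a polynomial depending only on the $B$-type of $W$.

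Second, I would use the algebraic hypothesis to partition all $B$-wordings of $T$ into finitely many $B$-types $(T^\og,\cR_j)$, $j=1,\ldots,s$, on each of which the polynomials $p_i^{(j)}$ are fixed; interchanging sums, the $i$-th coefficient of $f(k,n)$ becomes
\[
c_i(k) \;=\; \sum_{j=1}^{s}\,\sum_{W\in\cR_j} m(T^\og,W;k)\,p_i^{(j)}(\mec a_W,\mec b_W).
\]
The inner sum can be rewritten as a sum, over a canonical family of walks on $T^\og$ indexed by $m$, of weights that are polynomial in the letter-counts of the associated words in the regular languages $\{\cR_j(e)\}_{e\in\Edir_T}$. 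This is precisely the structure addressed by the Certified Dot Convolution Lemma of Section~\ref{se_cert_dot}; combined with the regular-language eigenvalue lemma of Section~\ref{se_regular}, it yields that each $c_i(k)$ is a polyexponential in $k$ whose bases lie among the eigenvalues of the $\cR_j$. By the definition of the eigenvalues of an algebraic model these bases are eigenvalues of $\cC_n(B)$, and since each $\cR_j(e)$ consists of non-backtracking walks in $B$ they are bounded by $\mu_1(B)$; hence each $c_i$ is $(B,\nu)$-bounded.

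Finally, the remainder is controlled by combining the uniform bound $g(\#E_{\VLG(T,W)})/n^r$ with the tail bound \eqref{eq_algebraic_order_bound}, which estimates the contribution of wordings of order $\ge r$; together these yield a remainder of the form $O(1)c_r(k)/n^r$ with $c_r(k)$ of growth $\mu_1(B)$. The main obstacle I anticipate is the rigorous evaluation of the inner sum over $W\in\cR_j$: simultaneously managing the infinite family of wordings in $\cR_j$, the polynomial weights $p_i^{(j)}(\mec a_W,\mec b_W)$, and the walk-structured multiplicity $m(T^\og,W;k)$---while cleanly locating the resulting polyexponential bases among the eigenvalues of $\cC_n(B)$---is exactly what forces the somewhat technical apparatus of the Certified Dot Convolution and regular-language lemmas.
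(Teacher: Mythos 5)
Your proposal follows essentially the same route the paper takes for Theorem~\ref{th_main_certified_walks} (of which Theorem~\ref{th_main_illustration} is the special case $\bec\xi=\mec 1$): the length-multiplicity decomposition of Lemma~\ref{le_length_mult}, the algebraic expansion partitioned into ordered $B$-types, the regular-language eigenvalue lemma (Lemma~\ref{le_regular}, Corollary~\ref{co_regular_unweighted}, assembled via Lemma~\ref{le_most_of_the_work} into Lemma~\ref{le_precursor}), and the Certified Dot Convolution Lemma (Lemma~\ref{le_third_main_lemma}) are exactly the chain the paper assembles.

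One misattribution worth correcting: the remainder term is not controlled by the tail bound~\eqref{eq_algebraic_order_bound}. That bound concerns walks whose visited subgraph has order $\ge r$, but for a \emph{fixed} homotopy type $T^\og$ every visited subgraph is a $\VLG(T,W)$ of order exactly $\ord(T)$, so there is nothing of that kind to sum over. The remainder in Lemma~\ref{le_precursor} is instead controlled by summing the uniform $O(1/n^r)\,g(|W|)$ bound over all wordings $W$ of given edge-lengths $\mec k$ and estimating the count $\#\,T^{\rm type}[\mec k]$ of such wordings; this count is bounded by a product over $e\in E_T$ of the number of non-backtracking walks of length $k(e)$ in $B$, which is of growth $\mu_1(B)$ in $\mec k$. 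The dot convolution of that bound with $\legal(T^\og,\cdot)$ then produces the final $O(1)c_r(k)/n^r$. A second, more minor imprecision: the coefficients $c_i(k)$ emerging from the Certified Dot Convolution are not pure polyexponentials but \emph{approximate} polyexponentials (polyexponential plus a growth-$\nu$ error, with the $\mu_1^{1/2}(B)$ contribution to $\nu$ coming from the $\mec m\ge\mec 2$ branch and the $\mu_1(T)$ contribution from Lemma~\ref{le_first_main_lemma} with $\bec\xi=\mec 1$); your concluding statement that they are $(B,\nu)$-bounded is the correct takeaway.
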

We recall from the discussion regarding \eqref{eq_B_nu_defs_summ}
that the above expansion holds for $k,n$ where $f(k,n)$ is defined and
$1\le k\le n^{1/2}/C$ for some constant $C$ depending on $r$;
since $f(k,n)$ is only defined for $n\in N$ (and any $k$), the above expansion
holds for all $(k,n)$ with $n\in N$ and $1\le k\le n^{1/2}/C$.

The proof of the above theorem is a straightforward generalization of
the theorems \cite{friedman_random_graphs}.
The difference in this article is that we ``factor'' this proof into a
number of independent parts, and each part is stated in as general
a form as is reasonably possible.

We also note that Theorem~\ref{th_main_illustration} shows the
limits of the trace methods of \cite{friedman_random_graphs}:
when $T$ is a bouquet of $m$ whole-loops and $\mu_1(T)=2m-1$
exceeds $d-1$ (for any integer $d\ge 2$), 
then Theorem~\ref{th_main_illustration} can no longer
produce $(B,\nu)$-bounded coefficients with $\nu$ arbitrarily close
to $(d-1)^{1/2}$.  For this reason \cite{friedman_random_graphs} 
limits itself to $r$ of order $d^{1/2}$; Puder
\cite{puder} gets a better bound; in our language and methods, 
Puder's improvement is due to the fact that $c_i$
is of growth
$$
\max\bigl( (d-1)^{1/2}, 2i+1 \bigr)
$$
for all $i$, which can be obtained
from Theorem~\ref{th_main_illustration}, and which Puder
exploits in the range where $i$ is greater than order $d^{1/2}$
(up to $2i+1\le d-1$).

The next two theorems are successively stronger.  However, the proofs
of these theorems are easy adaptations of the proof of 
Theorem~\ref{th_main_illustration}, except for the additional
terminology needed regarding homotopy type; hence the reader can read almost
the entire article with only
Theorem~\ref{th_main_illustration} in mind.
Unlike Theorem~\ref{th_main_illustration}, the motivation for the
next theorem will not be clear until we discuss certified traces
in Article~III (see also Appendix~A of Article~I).

\begin{theorem}\label{th_main_certified_walks}
Let $B$ be a graph, and $\{\cC_n(B)\}_{n\in N}$ an algebraic model over
$B$.  Let 
$T^\og$ be an ordered graph, let
$\bec\cert\from E_T\to\naturals$ be a function, 
and let
$$
\nu = \max\Bigl( \mu_1^{1/2}(B), \mu_1\bigl(\VLG(T,\bec\cert)\bigr) \Bigr).
$$
Then for any $r\ge 1$ we have
\begin{equation}\label{eq_snbc_sum_first_main_thm}
f(k,n)\eqdef \EE_{G\in\cC_n(B)}[\snbc(T^\og,\ge\bec\cert;G;k)]
\end{equation} 
has a $(B,\nu)$-bounded expansion 
$$
c_0(k)+\cdots+c_{r-1}(k)/n^{r-1}+ O(1) c_r(k)/n^r
$$
to order $r$, where the
bases of the coefficients of the expansion is a subset of any set of
eigenvalues of $T^\og$ in the model $\cC_n(B)$;
furthermore $c_i(k)=0$ for
$i<\ord(T)$.
\end{theorem}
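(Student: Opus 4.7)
The plan is to adapt the proof of Theorem~\ref{th_main_illustration}, carrying the certification $\mec k \ge \bec\cert$ through every step. First, I would apply the length-multiplicity formula of Section~\ref{se_length_mult} to write, for each $G\in\cC_n(B)$,
$$
\snbc(T^\og, \ge\bec\cert; G, k)
= \sum_{W} \lambda(T^\og, W)\,\#\bigl([\VLG^\og(T^\og, W)] \cap G\bigr),
$$
where $W$ ranges over $B$-wordings of $T$ whose edge-length vector $\mec k_W$ satisfies $\mec k_W \ge \bec\cert$ and has total length $k$, and $\lambda$ is a purely combinatorial multiplicity factor. Taking expectations, applying \eqref{eq_expansion_S} and \eqref{eq_strongly_algebraic} to $S_\Bg^\og = \VLG^\og(T^\og, W)$, and using \eqref{eq_algebraic_order_bound} to collect the tail into a remainder, yields an expansion
$$
f(k,n) = \sum_{i=0}^{r-1} \frac{c_i(k)}{n^i} + O(1)\,\frac{c_r(k)}{n^r},
$$
in which, for $i < r$, each $c_i(k)$ is a sum over admissible wordings of $\lambda(T^\og,W)$ times a polynomial in $\mec a_W, \mec b_W$ supplied by the algebraic hypothesis. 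The vanishing $c_i(k) = 0$ for $i < \ord(T)$ follows because $\ord(\VLG(T,\mec k)) = \ord(T)$ and the coefficients in \eqref{eq_expansion_S} vanish below $\ord(S)$.

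Next, I would invoke the algebraic partition of $B$-wordings of $T$ into finitely many $B$-types $T_j^{\rm type} = (T^\og, \cR_j)$, reducing each $c_i(k)$ to a finite sum, one per type, of wording-sums restricted to type $T_j^{\rm type}$ and to $\mec k \ge \bec\cert$ with $\sum_e k(e) = k$. The regular-language lemma of Section~\ref{se_regular} supplies polyexponential expressions---with bases among the eigenvalues of $\cR_j(e)$---for the counts $\#\{w\in\cR_j(e):|w|=\ell\}$ and for the polynomial-weighted variants needed to evaluate $p_{j,i}(\mec a_W, \mec b_W)$, since $\mec a_W$ and $\mec b_W$ are accumulated letter-by-letter along $W$. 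Feeding these into the Certified Dot Convolution Lemma of Section~\ref{se_cert_dot}, which convolves such expressions over $e\in\Eor_T$ under the joint constraints $\mec k \ge \bec\cert$ and $\sum_e k(e) = k$, produces a polyexponential in $k$ whose bases lie in the union of the eigenvalues of the $\cR_j(e)$---hence in any prescribed set of eigenvalues of $T^\og$ in $\cC_n(B)$---and whose dominant base is bounded by $\mu_1\bigl(\VLG(T,\bec\cert)\bigr)$. The $\mu_1^{1/2}(B)$ component of $\nu$ is then needed by the $(B,\nu)$-bounded expansion framework to accommodate the remainder $c_r(k)$, whose growth $\mu_1(B)$ comes from \eqref{eq_expansion_S} and \eqref{eq_algebraic_order_bound}.

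The main obstacle, which I expect to absorb most of the technical work, is formulating the Certified Dot Convolution Lemma in sufficient generality: it must simultaneously handle the convolution of several polyexponential factors indexed by $\Eor_T$, a polynomial weight coming from $p_{j,i}$ and the multiplicity factor $\lambda$, and the strict lower bound $\mec k \ge \bec\cert$, all while producing a polyexponential in $k$ whose dominant growth rate is exactly $\mu_1(\VLG(T,\bec\cert))$ rather than the potentially larger $\mu_1(T)$ one would obtain without certification. Once this abstract lemma is in place, the remainder of the proof is a straightforward bookkeeping adaptation of the argument for Theorem~\ref{th_main_illustration}: certification restricts only the domain of summation and introduces no new bases.
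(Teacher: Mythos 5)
Your high-level plan correctly identifies the ingredients (length-multiplicity formula, the algebraic expansion \eqref{eq_expansion_S}--\eqref{eq_strongly_algebraic}, the finite $B$-type decomposition, the regular-language lemma, and a certified convolution), but the structure of the convolution is misstated in a way that hides the actual mechanism. Lemma~\ref{le_length_mult} decouples the edge lengths $\mec k$ (lengths of beaded paths in $T$) from the edge multiplicities $\mec m$ (times each edge of $T$ is traversed), so that the walk-length constraint is the \emph{bilinear} one $\mec k\cdot\mec m=k$, and the expected count becomes
$$
\sum_{\mec k\cdot\mec m = k,\ \mec k\ge\bec\cert} F_1(\mec k,n)\,F_2(\mec m),
\qquad F_2(\mec m)=\legal(T^\og,\mec m).
$$
Your description of the convolution as running ``over $e\in\Eor_T$ under the joint constraints $\mec k\ge\bec\cert$ and $\sum_e k(e)=k$'' is an additive convolution over the coordinates of $\mec k$ alone; it captures only the $\mec m=\mec 1$ contribution and omits all SNBC walks of length $k$ that traverse some beaded path more than once. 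A multiplicity factor $\lambda(T^\og,W)$ that does not depend on $k$ cannot repair this, because the number of length-$k$ walks with a given visited ordered subgraph is $\vis(S^\og,k)=\sum_{\mec k_S\cdot\mec m=k}\legal(T^\og,\mec m)$ (Lemma~\ref{le_fundamental_legal_formula}), which genuinely varies with $k$.

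This matters because both components of $\nu$ originate in the $\mec m$-variable, which your formulation has no place for. The $\mu_1^{1/2}(B)$ term comes from the $\mec m\ge\mec 2$ part of the dot convolution: Lemma~\ref{le_growth_lemma} uses $2(k_i-\cert_i)+\cert_i m_i\le k_i m_i$ to show those terms are of growth $\max(\beta^{1/2},\rho,1)$, and this ``Broder--Shamir square root'' (Subsection~\ref{su_methods_broder_shamir}) enters the error part of \emph{every} $c_i(k)$, $i<r$---not merely the remainder $c_r(k)$ as you suggest. The $\mu_1(\VLG(T,\bec\cert))$ term comes from Lemma~\ref{le_first_main_lemma}, which bounds $\omega(M)=\sum_{\bec\cert\cdot\mec m=M}\legal(T^\og,\mec m)$ by $\Trace\bigl(H^M_{\VLG(T,\bec\cert)}\bigr)$, i.e., from $F_2$, which your formulation does not carry as a separate factor of $\mec m$. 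Once the dot convolution $\sum_{\mec k\cdot\mec m=k,\ \mec k\ge\bec\cert}F_1(\mec k,n)F_2(\mec m)$ is restored, the rest of your outline---Lemma~\ref{le_second_main_lemma} giving polyexponential coefficients for $F_1$ via Lemma~\ref{le_precursor} and the regular-language machinery, and the vanishing $c_i=0$ for $i<\ord(T)$ from the algebraic hypothesis---does match the paper's argument.
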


The proof of the theorem above is easily modified to prove the
next theorem, although the terminology
is more cumbersome.  It is used in Article~III when certified traces are
multiplied by indicator functions of tangles; see Article~I for an
overview of its
importance, which is akin to
equation~(39) in the proof of Theorem~9.3 of \cite{friedman_alon}.

\begin{theorem}\label{th_main_certified_pairs}
Let $\{\cC_n(B)\}_{n\in N}$ be an algebraic model over a graph $B$.
Let $T^\og$ be an ordered graph, let $\bec\xi\from E_T\to\naturals$ be
a function, and let
\begin{equation}\label{eq_nu_as_in_main_cert_pairs}
\nu = \max\Bigl( \mu_1^{1/2}(B), \mu_1\bigl(\VLG(T,\bec\cert)\bigr) \Bigr).
\end{equation}
Let $\psi_\Bg^\og$ be any pruned ordered $B$-graph.
Then for any
$r\ge 1$ we have
\begin{equation}\label{eq_subgraphs_times_walks}
\EE_{G\in\cC_n(B)}[ 
(\#[\psi_\Bg^\og]\cap G)
\snbc(T^\og;\ge\bec\xi,G,k) ]
\end{equation} 
has a $(B,\nu)$-bounded expansion of order $r$,
$$
c_0(k)+\cdots+c_{r-1}(k)/n^{r-1}+ O(1) c_r(k)/n^r;
$$
the bases of
the coefficients in the expansion are some subset of any set of
eigenvalues of the model, and $c_i(k)=0$ for $i$ less than the
order of all $B$-graphs that occur in $\cC_n(B)$ and contain both a walk of 
homotopy type $T^\og$
and a subgraph isomorphic to $\psi_\Bg$.
\end{theorem}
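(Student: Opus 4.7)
The plan is to follow the same blueprint as the proof of Theorem~\ref{th_main_certified_walks}, but replace the length-multiplicity decomposition of $\snbc(T^\og,\ge\bec\xi;G,k)$ with one that folds in the additional factor $\#[\psi_\Bg^\og]\cap G$. Concretely, given a walk $w$ in $G$ of homotopy type $T^\og$ with edge lengths $\ge\bec\xi$ and an embedding $\iota\from\psi_\Bg\injection G$, consider the ordered $B$-subgraph $S_\Bg^\og\subset G$ obtained as the union of $\ViSu_\Bg^\og(w)$ and $\iota(\psi_\Bg^\og)$, endowed with the ordering induced by the walk and the ordering on $\psi_\Bg^\og$. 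Every such pair $(w,\iota)$ contributes to exactly one isomorphism class of ordered $B$-graphs $S_\Bg^\og$, so we obtain a length-multiplicity identity
\begin{equation*}
(\#[\psi_\Bg^\og]\cap G)\,\snbc(T^\og,\ge\bec\xi;G,k)
=\sum_{S_\Bg^\og}M(S_\Bg^\og;T^\og,\psi_\Bg^\og,\bec\xi,k)\,\#\bigl([S_\Bg^\og]\cap G\bigr),
\end{equation*}
where $M(S_\Bg^\og;T^\og,\psi_\Bg^\og,\bec\xi,k)$ counts pairs consisting of an SNBC walk in $S_\Bg^\og$ of total length $k$, of homotopy type $T^\og$ and edge lengths at least $\bec\xi$, together with an ordered $B$-embedding of $\psi_\Bg^\og$, such that their combined image is all of $S_\Bg^\og$. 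This parallels the length-multiplicity formula of Section~\ref{se_length_mult}, with the sole modification that the ``surjectivity'' condition is enforced on the combined image of walk and $\psi$-copy.

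Next I would take expectations, invoke the algebraic-model expansion for $\EE[\#[S_\Bg^\og]\cap G]$ afforded by Theorem~\ref{th_main_certified_walks}'s setup and the definition of algebraic models (equations~\eqref{eq_expansion_S} and~\eqref{eq_zero_S_in_G}), and split the resulting sum by $\ord(S)$. For $\ord(S)\le r-1$ only finitely many $S_\Bg^\og$ contribute, and each contributes $M(S_\Bg^\og;\cdots;k)$ times $c_0(S_\Bg)+\cdots+c_{r-1}(S_\Bg)/n^{r-1}+O(1)g(\#E_S)/n^r$; for $\ord(S)\ge r$ we absorb the contribution into the error term using \eqref{eq_algebraic_order_bound}, exactly as in the proof of Theorem~\ref{th_main_certified_walks}.

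To verify the asymptotic structure of the coefficients, it remains to analyze $M(S_\Bg^\og;T^\og,\psi_\Bg^\og,\bec\xi,k)$ as a function of $k$ for each fixed $S_\Bg^\og$. Since only finitely many ordered $B$-embeddings of $\psi_\Bg^\og$ into $S_\Bg^\og$ exist, the dependence on $k$ comes entirely from counting SNBC walks in $S_\Bg^\og$ of type $T^\og$ with edge lengths at least $\bec\xi$ whose visited subgraph combines with a fixed copy of $\psi$ to equal $S_\Bg^\og$. Using the algebraicity of $\cC_n(B)$ together with the Certified Dot Convolution Lemma of Section~\ref{se_cert_dot} and the regular-language statistics of Section~\ref{se_regular}, these counts are expressible as finite sums of polyexponential functions of $k$ whose exponents are drawn from eigenvalues of the relevant $B$-types, hence from the given set of eigenvalues of the model; moreover they are $(B,\nu)$-bounded because each edge of $T$ of length $\ge \xi(e)$ contributes an exponent bounded by $\mu_1(\VLG(T,\bec\xi))\le\nu$, up to a residual error of growth $\mu_1(B)$ from the convolution step, exactly as in the proof of Theorem~\ref{th_main_certified_walks}.

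The vanishing of $c_i(k)$ for $i$ below the stated threshold combines two facts. First, only $S_\Bg^\og$ with $M(\cdot)\ne 0$ actually contribute, so $S$ must contain both a walk of homotopy type $T^\og$ and a subgraph isomorphic to $\psi_\Bg$; and $\EE[\#[S_\Bg^\og]\cap G]=0$ unless $S_\Bg$ occurs in $\cC_n(B)$. Second, the algebraic-model condition guarantees $c_i(S_\Bg)=0$ for $i<\ord(S)$, so the contribution of any such $S_\Bg^\og$ to the $c_i$ of the whole expansion vanishes for $i<\ord(S)$. Taking the minimum of $\ord(S)$ over all $S_\Bg$ that occur in $\cC_n(B)$ and contain both structures gives the claimed vanishing. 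The principal obstacle in the argument is bookkeeping: formulating the length-multiplicity identity so that $M$ depends only on the ordered $B$-graph $S_\Bg^\og$ (not on how it sits in $G$), and then carving $M$ into finitely many pieces corresponding to $B$-types on $T^\og$ so that the convolution lemma applies cleanly; all the analytic estimates are already abstracted in the lemmas of Sections~\ref{se_cert_dot} and~\ref{se_regular}.
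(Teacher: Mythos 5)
Your proposed length-multiplicity identity
\begin{equation*}
(\#[\psi_\Bg^\og]\cap G)\,\snbc(T^\og,\ge\bec\xi;G,k)
=\sum_{S_\Bg^\og}M(S_\Bg^\og;T^\og,\psi_\Bg^\og,\bec\xi,k)\,\#\bigl([S_\Bg^\og]\cap G\bigr)
\end{equation*}
is the right opening move and is close in spirit to \eqref{eq_length_mult_pairs}, but there are two substantive gaps between this sketch and a proof.

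First, the bookkeeping you flag as ``the principal obstacle'' is not actually resolved. The union $\ViSu_\Bg(w)\cup\iota(\psi_\Bg)$ does not carry a canonical ordering induced ``by the walk and the ordering on $\psi_\Bg^\og$'': on a shared edge the walk's first-traversal orientation and the $\psi$-orientation can disagree (the paper's Figure~\ref{fi_pair_reduction} shows exactly this). More importantly, even with some chosen convention for the ordering, the ordered graph $S_\Bg^\og$ forgets which subgraph is the walk's visited subgraph and which is the $\psi$-copy, so $M$ cannot be defined as a function of $S_\Bg^\og$ alone without further care. The paper's remedy is precisely the {\em packaged pair} $(U;\ViSu^\og,\tilde S^\og)$ together with the {\em pair homotopy type} $X^{\rm pairH}=(X;X_1^\og,{X_2}^\og_\Bg)$: this remembers both orderings and decomposes the infinite family of possible unions into finitely many pair homotopy types, each parametrized by edge lengths $\mec K$. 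Your assertion that ``for $\ord(S)\le r-1$ only finitely many $S_\Bg^\og$ contribute'' is false for the expansion coefficients (there are infinitely many $S_\Bg^\og$ of bounded order; only finitely many contribute to a {\em fixed} $k$), and without the pair-homotopy-type parametrization you never isolate the convolution structure in $\mec K$ on which Lemmas~\ref{le_precursor}, \ref{le_first_main_lemma}, and~\ref{le_third_main_lemma} operate.

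Second, you do not address the crucial translation of the constraint $\mec k\ge\bec\xi$. That constraint lives on $T^\og$-edges, but once $\psi$ subdivides them the natural length parameter is $\mec K$ on the edges of $X$, and each $T$-edge length is a {\em sum} of $X_1$-edge lengths (\eqref{eq_edge_lengths_relation}). The condition $k(e_T)\ge\xi(e_T)$ is thus a linear inequality in several $K(e_X)$, not a per-coordinate bound, so the certified dot convolution lemma---which requires a rectangular domain $\mec K\ge\bec\Xi$---does not apply directly. The paper handles this with the inclusion-exclusion over the finite set of ``split'' vectors $\Xi_1,\ldots,\Xi_s$ satisfying \eqref{eq_pair_equality_sum}, writing the desired count as a signed sum of rectangularly-constrained counts and then using Lemma~\ref{le_vlg_compare} to show each piece is still $(B,\nu)$-bounded with the same $\nu$. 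This step is entirely absent from your proposal, and without it the convolution lemma cannot be invoked. Once you add the pair homotopy type scaffolding and the inclusion-exclusion over split certificates, your outline would recover the paper's proof.
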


Note that Theorem~\ref{th_main_certified_pairs} reduces to
of Theorem~\ref{th_main_certified_walks} in the case where
$\psi_\Bg^\og$ is the empty graph (which has a unique
ordering and $B$-graph structure).

\ignore{\tiny\red

\subsection{First Step: SNBC Counts for $T^{\rm type}$ Versus $T^\og$}

The first step the proofs of our main theorems is to
replace $\snbc(T^\og,\ge\bec\xi;G_\Bg,k)$ by the analogous count
with $T^\og$ replaced by a $B$-type $T^{\rm type}=(T^\og,\cR)$.

\begin{definition}
Let $B$ be a graph, $T^\og$ an ordered graph,
and $T^{\rm type}=(T^\og,\cR)$ an ordered $B$-type.  We define
$$
\SNBC(T^{\rm type},\ge\bec\xi;G_\Bg,k)
$$
to be the subset of elements of (i.e., walks in)
$\SNBC(T^\og,\ge\bec\xi;G_\Bg,k)$ whose wording on $T^\og$
belongs to $T^{\rm type}=(T^\og,\cR)$.  
We use $\snbc$ instead of $\SNBC$ to denote
the cardinality of this set.
\end{definition}

By the definition of algebraic, for each $T$ there are finitely
many ordered $B$-types
$T^{\rm type}_j=(T^\og,\cR_j)$ such that for any $B$-graph, $G$ we have
\begin{equation}\label{eq_sum_over_ordered_types}
\snbc(T^\og,\ge\bec\xi;G_\Bg,k) = \sum_j 
\snbc(T^{\rm type}_j,\ge\bec\xi;G_\Bg,k) .
\end{equation} 
Since the right-hand-side is a finite sum
(and since a finite sum of $(B,\nu)$-bounded expansions is again such an
expansion), 
it suffices to prove these our main theorems where
$\snbc(T^\og,\ge\bec\xi;G,k)$
is replaced with
$\snbc(T^{\rm type},\ge\bec\xi;G,k)$ for an ordered $B$-type
$T^{\rm type}=(T^\og,\cR)$.

}

\section{The Length-Multiplicity Formula}
\label{se_length_mult}

In this section we give the first step in 
the proof of Theorem~\ref{th_main_certified_walks}, which we call the
{\em length-multiplicity formula}.

Assume the hypotheses of Theorems~\ref{th_main_certified_walks}.
We wish to prove that
\begin{equation}\label{eq_expected_T_veck_k}
\EE_{G\in\cC_n(B)}\bigl[\snbc(T^{\rm type},\ge\bec\xi;G;k)\bigr] 
=
\sum_{\mec k\ge \bec\xi}
\EE_{G\in\cC_n(B)}\bigl[\snbc(T^{\rm type},\mec k;G;k)\bigr]
\end{equation}
has a $(B,\nu)$-bounded asymptotic expansion to any order $r$.

\subsection{The Difficulty With Walks of Order One or More}

Our strategy, on the finest level,
is to fix an ordered $B$-graph, $S_\Bg^\og$,
and consider the $G\in\cC_n(B)$
expected number of walks, $w$, in $\SNBC(G,k)$
such that $\ViSu_\Bg^\og(w)$ is isomorphic to $S_\Bg^\og$.
The expression
$$
\EE_{G\in\cC_n(B)}\Bigl[ \#\bigl([S_\Bg^\og]\cap G\bigr) \Bigr],
$$
tells us how many subgraphs of $G_\Bg$, endowed with an ordering,
are isomorphic to $S_\Bg^\og$.  For any ordered subgraph,
${\tilde S}_\Bg^\og$, in $[S_\Bg^\og]\cap G$, the number of
SNBC walks in $G$ with this visited subgraph is a function
\begin{equation}\label{eq_define_vis_snbc}
\vis({\tilde S}_\Bg^\og,G,k) \eqdef
\# \bigl\{ w\in\SNBC(G,k) \ \bigm| 
\ \ViSu_\Bg^\og(w)={\tilde S}_\Bg^\og \bigr\}
\end{equation} 
To simplify matters we note that
\begin{equation}\label{eq_simplify_vis}
\vis({\tilde S}_\Bg^\og,G,k) = \vis(\tilde S^\og,k)
\end{equation} 
depends only on $\tilde S^\og$ and $k$, and not on $G$ or the $B$-structure
on $\tilde S_\Bg^\og$.
Unfortunately, this function $\vis(\tilde S^\og,k)$ is generally
a complicated function of $k$ for fixed ${\tilde S}^\og$.

\begin{example}
The only case where $\vis(\tilde S^\og,k)$ is very simple is
the case where $T^\og$ has order $0$;
therefore (see Section~6 of Article~I), $\tilde S$ is 
necessarily a cycle of some length $k_1$, and then
\begin{equation}\label{eq_h_ord_zero}
\vis(\tilde S^\og,k) =
\left\{ \begin{array}{ll} $1$ & \mbox{if $k_1$ divides $k$, and} \\
$0$ & \mbox{otherwise}
\end{array}\right.
\end{equation} 
assuming that ${\tilde S}_\Bg^\og$ is the visited subgraph of
some SNBC walk.
[Recall from Article~I that if $k_1$ divides $k$, then
there are $2k_1$ SNBC walks of length $k$ in a cycle, $C$, of length $k_1$,
but there are only $0$ or $1$ when $C$ is endowed with an ordering
(and $1$ when the ordering arises from some SNBC walk whose ordered visited 
subgraph is $C^\og$).]
\end{example}

\subsection{The Abstract Length-Multiplicity Formula}

Now we make some observations on $\vis(\tilde S^\og,k)$.

First, for any $S_\Bg^\og$,
taking expectations in \eqref{eq_define_vis_snbc}
over all $\tilde S_\Bg^\og$ isomorphic to $S_\Bg^\og$ shows 
that for any $n$ and $S_\Bg^\og$ we have
\begin{align}
\nonumber
&
\EE_{G\in\cC_n(B)}\Bigl[ \#\bigl\{
w\in\SNBC(G,k) \ \bigm| \ \ViSu_\Bg^\og(w)\isom S_\Bg^\og 
\bigr\} \Bigr] 
\\
\label{eq_expected_walks_fixed_vs}
& =
\vis(S^\og,k)
\,\EE_{G\in\cC_n(B)}\Bigl[ \#\bigl([S_\Bg^\og]\cap G\bigr) \Bigr] .
\end{align} 

\begin{example}
To get some intuition regarding
\eqref{eq_expected_walks_fixed_vs}, recall
(Section~6 of Article~I) that when $T^\og$ is the unique SNBC walk
homotopy type of order
zero, then we sum 
\eqref{eq_expected_walks_fixed_vs} over all $S_\Bg^\og$ of length
$k'$, and by \eqref{eq_h_ord_zero} it suffices to consider $k'$
dividing $k$.
We then use the fact that $\cC_n(B)$ is algebraic to expand the
expected value in 
\eqref{eq_expected_walks_fixed_vs} an asymptotic series.  For example,
if $\cC_n(B)$ is the permutation model and $B$ is a bouquet of
$d/2$ whole-loops (as in \cite{broder,friedman_random_graphs}), then
the first two terms are
$$
\EE_{G\in\cC_n(B)}\Bigl[ \#\bigl([S_\Bg^\og]\cap G\bigr) \Bigr]
=
1 + c_1(S_\Bg)/n + O(1)/n^2,
$$
where 
$$
c_1(S_\Bg) = \sum_{1\le j_1< j_2 \le d/2} a_S(f_{j_1}) a_S(f_{j_2})
$$
with $f_1,\ldots,f_{d/2}$ being any orientation of $B$.
We then sum this over all $S_\Bg^\og$ whose length is $k'$,
which we can identify with all SNBC walks in $B$ of length $k'$,
to obtain the first two terms in 
\eqref{eq_expected_T_veck_k} for this $T^\og$ of order $0$.
\end{example}

The {\em length-multiplicity} formula is an approach to
working with $\vis(S^\og,k)$ when $T^\og$ is a homotopy type of
order $1$ or more, used in \cite{friedman_random_graphs,friedman_alon}.

Let us introduce some useful definitions.

\begin{definition}\label{de_legal}
A walk, $w$, in an ordered graph, $S^\og$, is {\em legal (in $S^\og$)}
if it is SNBC (strictly non-backtracking closed)
and $\ViSu_S^\og(w)=S^\og$.  For such a walk we define
the {\em edge multiplicities of $w$ in $S$} as the vector 
$\mec m\from E_S\to\naturals$ where $m(e)$ is the number of times
that $e$ is traversed along $w$ in either direction.
For an ordered graph, $S^\og$, and function
$\mec m\from E_{S}\to\naturals$, let $\legal(S^\og,\mec m)$ denote the
number of legal walks, $w$, in $S^\og$ with edge multiplicities $\mec m$.
\end{definition}

The following simple observation is fundamental to our methods.

\begin{lemma}\label{le_fundamental_legal_formula}
Let $S^\og$ be an ordered graph that is the homotopy type of
some SNBC walk; let $T^\og$ be its homotopy type and let
$\mec k=\mec k_S$ be the edge lengths $E_T\to\naturals$ that
$S^\og$ induces on $T^\og$.
Then
\begin{equation}\label{eq_fundamental_num_walks}
\vis(S^\og,k)  = \sum_{\mec k_S\cdot \mec m=k} 
\legal(T^\og,\mec m).
\end{equation} 
\end{lemma}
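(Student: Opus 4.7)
The plan is to establish a length-preserving bijection between SNBC walks $w$ in $S^\og$ of length $k$ with $\ViSu_S^\og(w)=S^\og$ on the one hand, and pairs $(w',\mec m)$ on the other hand, where $w'$ is a legal walk in $T^\og$ with edge-multiplicity vector $\mec m$ satisfying $\mec k_S\cdot\mec m=k$. Summing over all such $\mec m$ then yields \eqref{eq_fundamental_num_walks}. The forward map exploits the structure of beaded paths: if $V'\subset V_S$ is the proper bead subset with $T=S/V'$, then each $v\in V'$ has degree two in $S$ and is not incident on a self-loop, so the non-backtracking condition forces any SNBC walk $w$ that enters $v$ along one edge to exit along the other. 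By induction along a beaded path, whenever $w$ enters a beaded path of $S$ it must traverse it completely to the opposite non-bead endpoint. Thus $w$ decomposes canonically into a concatenation of beaded-path traversals, each corresponding to a single directed edge of $T$; reading these off in order gives a walk $w'$ in $T$.

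Next I would verify that $w'$ is legal in $T^\og$. Non-backtracking and the SNBC closure condition at the join descend from the corresponding properties of $w$ at the non-bead vertices, and closedness holds because the endpoints of $w$ are non-beads of $S$, so they project to a single vertex of $T$. Setting $m(e)$ equal to the number of traversals of $e\in E_T$ by $w'$ in either direction, each such traversal contributes exactly $k_S(e)$ directed edges to $w$, giving $\sum_{e\in E_T} m(e)\,k_S(e)=k$, i.e.\ $\mec k_S\cdot\mec m=k$. Finally, the visited subgraph of $w'$ in $T$ equals $T^\og$ with its inherited ordering, because $w$ has ordered visited subgraph $S^\og$ and the first-encountered ordering on $T$ is precisely the bead-suppression of the first-encountered ordering on $S$.

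The inverse map substitutes, for each traversal of a directed edge $e\in\Edir_T$ by $w'$, the corresponding beaded-path traversal in $S$ (which is literally what $e$ represents, by definition of $S/V'$). This reconstructs an SNBC walk $w$ in $S$ of total length $\mec k_S\cdot\mec m=k$ and is easily checked to invert the forward map. The one step that deserves genuine care is the compatibility of the first-encountered orderings under bead-suppression and expansion; this is the main potential obstacle, but it reduces to the observation that the order in which beaded paths of $S$ are first traversed by $w$ coincides with the order in which the corresponding edges of $T$ are first traversed by $w'$, together with the analogous statement for the non-bead vertices. Once this ordering compatibility is in hand, the bijection is immediate and the formula follows.
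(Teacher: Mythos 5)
Your proposal follows essentially the same route as the paper: both build the canonical bijection between legal walks in $S^\og$ and legal walks in $T^\og$ by observing that once an SNBC walk enters a $V'$-beaded path it must traverse it to the far endpoint, then verify that this correspondence respects the SNBC property, the visited-subgraph condition, the first-encountered ordering, and converts length $k$ into the constraint $\mec k_S\cdot\mec m=k$. The only cosmetic difference is your phrasing of endpoints as ``non-beads'' of $S$, whereas the paper notes that the first vertex of $S^\og$ may itself be a bead but is excluded from $V'$ and so lies in $V_T$; the argument is unaffected.
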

The proof is simple, but a bit long to write down carefully.
\begin{proof}
By definition, $T^\og=S^\og/V'$ where $V'$ is the set of all beads
of $S$ except the first vertex of $S^\og$ (if it is a bead).
For each $e_T\in\Edir_T$, let ${\rm beaded}(e_T)$ denote the
$V'$-beaded path in $S$ corresponding to $e_T$, and let 
${\rm first}(e_T)\in\Edir_S$ be
the first directed edge in this beaded path.

Now we describe a simple 
one-to-one correspondence between legal walk in $S^\og$ and $T^\og$:
if $w_S$ is a non-backtracking walk in $S$, then whenever $w_S$
traverses ${\rm first}(e_T)$ for some $e_T$, then it must
immediately traverse the rest of ${\rm beaded}(e_T)$, whereupon
it ends on a vertex of $V_S/V'=V_T$.
This sets up a correspondence between non-backtracking walks in $S$
that begin and end in vertices of $V_T=V_S/V'$ and non-backtracking
walks in $T$.
We easily see that if $w_S$ corresponds to $w_T$, then
(1) $w_S$ is SNBC in $S$ iff $w_T$ is so in $T$, 
(2) $w_S$ visits all of $S$ iff $w_T$ does so in $T$, and
(3) the first encountered order of $w_S$ on $S$ is $S^\og$ iff
the same on $w_T$ on $T$ is $T^\og$.
Since legal walks in $S^\og$ begin and end on the first vertex of 
$S^\og$, which lies in $V_T=V_S\setminus V'$, it follows that
under this correspondence, $w_S$ is legal in $S^\og$
iff its corresponding walk, $w_T$, is legal in $T^\og$.
Furthermore, if the multiplicity of $w_T$ is
$\mec m\from E_T\to \naturals$, then the multiplicity of $w_S$
on each edge $E_S$ is the same as than on edge of $E_T$ to which
it corresponds.  Hence the length of $w_S$ equals
$\mec k\cdot\mec m$.
It follows that $w_S$ is length $k$ iff the corresponding walk
$w_T$ in $T$ has $\mec k\cdot\mec m=k$.  
Hence the sum over all $w_S$
that are legal walks in $S^\og$ length $k$ equals the sum
of all its corresponding legal walks $T^\og$ with 
$\mec k\cdot\mec m=k$; hence we conclude
\eqref{eq_fundamental_num_walks}
\end{proof}


[The same lemma holds for non-backtracking walks using the
more general correspondence of non-backtracking walks in $S$ and $T$, since
by definition the homotopy type of such walks never suppresses the
first or the last vertex.  Both \cite{broder,friedman_random_graphs}
worked with non-backtracking walks, and hence use this
modification of Lemma~\ref{le_fundamental_legal_formula}; on the other
hand,
this series of articles
and \cite{friedman_alon} work only with SNBC walks.]

In view of \eqref{eq_expected_walks_fixed_vs} 
and \eqref{eq_fundamental_num_walks}, we have
\begin{align}
\label{eq_expect_walks_isom_to_fixed}
& 
\EE_{G\in\cC_n(B)}\Bigl[\# \bigl\{ w\in \SNBC(G,k)  \ \bigm|  
\ \ViSu_\Bg^\og(w)\isom S_\Bg^\og \bigr\} \Bigr]
\\
\label{eq_individual_term_length_mult}
& = 
\EE_{G\in\cC_n(B)}\Bigl[ \#\bigl([S_\Bg^\og]\cap G\bigr) \Bigr] 
\sum_{\mec k_S\cdot \mec m=k} 
\legal(T^\og,\mec m).
\end{align} 

\begin{definition}\label{de_cT_length_set}
Let $T^\og$ be an ordered graph, and let $\cT$ be any class of
$B$-graphs of homotopy type $T^\og$ that is the union of isomorphism
classes of ordered $B$-graphs (i.e., if $S_\Bg^\og\in \cT$, then
any ordered $B$-graph isomorphic to $S_\Bg^\og$ lies in $\cT$ as 
well).  For any $\mec k\from E_T\to\naturals$, the
{\em subset of $\cT$ elements of lengths $\mec k$}, denoted
$\cT[\mec k]$, is the subset of $S_\Bg^\og$ whose edge lengths
$\mec k_S$ equal $\mec k$.
\end{definition}

If $\cT$ denotes any subclass of 
ordered $B$-graphs of homotopy type $T^\og$ which is a union
of isomorphism classes of ordered $B$-graphs, then
summing
the above equality over one representative in $\cT$ 
(for each class of ordered $B$-graphs) we have
\begin{align}
\label{eq_abstract_length_mult_lhs}
& \EE_{G\in\cC_n(B)}\Bigl[\# \bigl\{ w\in \SNBC(G,k)  \ \bigm|  
\ \ViSu_\Bg^\og(w)\in \cT \bigr\} \Bigr]
\\ 
\label{eq_abstract_length_mult_rhs}
& =
\sum_{\mec k\cdot \mec m=k} 
\left(
\sum_{[S_\Bg^\og]\in \cT[\mec k]}
\EE_{G\in\cC_n(B)}\Bigl[ \#\bigl([S_\Bg^\og]\cap G\bigr) \Bigr] 
\right) 
\legal(T^\og,\mec m)
\end{align} 
where the sum $[S_\Bg^\og]\in \cT[\mec k]$ means that we sum over one
representative $S_\Bg^\og$ of each isomorphism class of ordered
$B$-graphs.  
In fact, we shall apply this with a fairly broad set of values for
$\cT$, and 
our most general
{\em length-multiplicity} formula is based on the above formula.
Let us state this ``abstract'' length-multiplicity formula.

\begin{lemma}\label{le_abs_length_mult}
Let $B$ be any graph, $n\in\naturals$, and $\cC_n(B)$ a probability space
of $B$-graphs of degree $n$.
Let $T^\og$ be an ordered graph, and let 
$\cT$ be any class of
ordered $B$-graphs of homotopy type $T^\og$ that is the union of isomorphism
classes of ordered $B$-graphs.
Then for any $k,n\in\naturals$
$$
\EE_{G\in\cC_n(B)}\Bigl[\# \bigl\{ w\in \SNBC(G,k)  \ \bigm|  
\ \ViSu_\Bg^\og(w)\in \cT \bigr\} \Bigr]
= \sum_{\mec k\cdot\mec m=k} F_1(\mec k,n)F_2(\mec m),
$$
where 
\begin{equation}\label{eq_define_abs_F_1_2}
F_1(\mec k,n) \eqdef
\sum_{[S_\Bg^\og]\in\cT[\mec k]}
\EE_{G\in\cC_n(B)}\Bigl[\#[S_\Bg^\og]\cap G_\Bg\Bigr],\quad
F_2(\mec m)\eqdef \legal(T^\og,\mec m) \ .
\end{equation}
\end{lemma}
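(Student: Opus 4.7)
The plan is to assemble the formula directly from the pieces already established in the preceding discussion; no new ideas are needed, only careful bookkeeping. The key inputs are (a) equation~\eqref{eq_expected_walks_fixed_vs}, which decouples the expected count of walks whose visited subgraph has a prescribed isomorphism type $S_\Bg^\og$ into the product $\vis(S^\og,k)\cdot\EE[\#[S_\Bg^\og]\cap G]$, and (b) Lemma~\ref{le_fundamental_legal_formula}, which expands $\vis(S^\og,k)$ as $\sum_{\mec k_S\cdot\mec m=k}\legal(T^\og,\mec m)$ whenever $S^\og$ has homotopy type $T^\og$ and edge-length vector $\mec k_S$. These two facts are precisely the ingredients encoded in equations~\eqref{eq_expect_walks_isom_to_fixed}--\eqref{eq_individual_term_length_mult}.

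First, I would observe that since $\cT$ is a union of isomorphism classes of ordered $B$-graphs, one can partition the walks counted on the left-hand side according to the isomorphism class of $\ViSu_\Bg^\og(w)$: picking one representative $S_\Bg^\og$ from each class $[S_\Bg^\og]\in\cT$, the left-hand side equals
\[
\sum_{[S_\Bg^\og]\in\cT}
\EE_{G\in\cC_n(B)}\Bigl[\#\bigl\{w\in\SNBC(G,k)\ \bigm|\ \ViSu_\Bg^\og(w)\isom S_\Bg^\og\bigr\}\Bigr].
\]
By \eqref{eq_expected_walks_fixed_vs} each summand equals $\vis(S^\og,k)\,\EE_{G\in\cC_n(B)}[\#([S_\Bg^\og]\cap G)]$, and by Lemma~\ref{le_fundamental_legal_formula} we may substitute $\vis(S^\og,k)=\sum_{\mec k_S\cdot\mec m=k}\legal(T^\og,\mec m)$, where $\mec k_S$ is the edge-length vector on $T$ determined by $S^\og$.

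Second, I would reverse the order of summation. Since $\legal(T^\og,\mec m)$ depends only on $\mec m$ and $T^\og$ (not on $S_\Bg^\og$), and since the only way $S_\Bg^\og$ enters the inner sum is through its edge-length vector $\mec k_S$, I can group representatives by their common value of $\mec k_S$. This collects, for each pair $(\mec k,\mec m)$ with $\mec k\cdot\mec m=k$, a factor $\legal(T^\og,\mec m)$ multiplied by $\sum_{[S_\Bg^\og]\in\cT[\mec k]}\EE[\#([S_\Bg^\og]\cap G_\Bg)]$, which is exactly $F_1(\mec k,n)F_2(\mec m)$ in the notation of~\eqref{eq_define_abs_F_1_2}. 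The double sum thereby collapses to $\sum_{\mec k\cdot\mec m=k}F_1(\mec k,n)F_2(\mec m)$, which is the claim.

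The only step requiring care is the legitimacy of the swap and the disjointness of the partition by isomorphism class: one must note that every SNBC walk $w$ counted on the left has a unique $\ViSu_\Bg^\og(w)$ up to equality (not merely up to isomorphism), so assigning it to the class $[\ViSu_\Bg^\og(w)]\in\cT$ is well defined, and that picking a representative $S_\Bg^\og$ from each class and counting $\{w:\ViSu_\Bg^\og(w)\isom S_\Bg^\og\}$ recovers all such walks exactly once. No other obstacle is expected; the argument is a pure reorganization of sums, and all the substantive content has already been proved in Lemma~\ref{le_fundamental_legal_formula} and in the derivation of~\eqref{eq_expected_walks_fixed_vs}.
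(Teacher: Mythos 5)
Your proposal is correct and follows the same route as the paper: partition the walks by the isomorphism class of $\ViSu_\Bg^\og(w)$, apply \eqref{eq_expected_walks_fixed_vs} and Lemma~\ref{le_fundamental_legal_formula} to each class, then regroup by the edge-length vector $\mec k$---this is precisely the derivation of \eqref{eq_abstract_length_mult_lhs}--\eqref{eq_abstract_length_mult_rhs}, from which the paper declares the lemma immediate.
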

\begin{proof}
This is immediate from
\eqref{eq_abstract_length_mult_lhs} and
\eqref{eq_abstract_length_mult_rhs}.
\end{proof}

\subsection{The Length-Multiplicity Formula in Applications}

To prove the main expansions theorem in this article, such as for
\eqref{eq_expected_T_veck_k}, we will apply
Lemma~\ref{le_abs_length_mult} to some very special cases
that we now describe.

\begin{definition}
Let $B$ be a graph.  By an {\em ordered $B$-type} we mean
an ordered graph, $T^\og$, and a $B$-type $(T,\cR)$;
we use $T^{\rm Otype}=(T^\og,\cR)$ to denote an ordered $B$-type.
Let $w$ be a non-backtracking walk in some $B$-graph;
we say that $w$ or $S_\Bg^\og=\ViSu_\Bg^\og(w)$ 
is {\em of ordered $B$-type}
(or just {\em of type}) $T^{\rm Otype}=(T^\og,\cR)$, if $S_\Bg^\og$ is
of homotopy type $T^\og$, and the wording that $S_\Bg^\og$ induces on
$T$ is a wording in $\cR$.
\end{definition}
Notice that, in contrast to the above definition, it is more problematic
to say whether or not $S^\og_\Bg$ {\em is of homotopy type $(T,\cR)$}
when $T$ has nontrivial automorphisms:
without an ordering on $T$, the wording that $S_\Bg^\og$ induces on
$T$ depends on isomorphism of
the reduction of $S_\Bg^\og$ to $T$.
To prove Theorem~\ref{th_main_certified_walks} we will write
\eqref{eq_snbc_sum_first_main_thm} as a sum over ordered $B$-types by
partition all SNBC walks of homotopy type $T^\og$
according to their ordered $B$-type; this is why we will need
{\em ordered $B$-types} as opposed to $B$-types.

\begin{definition}
If $T^\og$ is an ordered graph and $\mec k\from E_T\to\naturals$ a function,
we let $\subgr_B^\og(T^\og,\mec k;G)$ be the number of ordered $B$-graphs
in $G$ of homotopy type $T^\og$ and 
edge lengths $\mec k$.
If, moreover,
$T^{\rm Otype}=(T^\og,\cR)$ is a ordered $B$-type, then we similarly
use $\subgr_B^\og(T^{\rm Otype},\mec k;G)$ to denote the number of
such ordered $B$-graphs that in addition are of type $T^{\rm Otype}$.
\end{definition}

We may also write
$$
\subgr_B^\og(T^\og,\mec k; G) = \# T^\og[\mec k]
$$
in the notation of Definition~\ref{de_cT_length_set}, understanding
$T^\og$ to refer to all ordered $B$-graphs of homotopy type
$T^\og$.  Similarly we write
$$
\subgr_B^\og(T^{\rm Oype},\mec k; G) = \# T^{\rm Otype}[\mec k]
$$
for any ordered $B$-type, $T^{\rm Otype}$.

\ignore{\tiny\red
Later on we will use the fact that 
\begin{equation}\label{eq_subgraphs_type_length}
\subgr_B^\og(T^\og,\mec k;G) = \sum_{[S_\Bg^\og]\in T^\og[\mec k]}
\bigl( \# [S_\Bg^\og]\cap G \bigr)
\end{equation} 
where $T^\og[\mec k]$ denotes the isomorphism classes of ordered $B$-graphs
of homotopy type $T^\og$ and edge lengths $\mec k\from E_T\to\naturals$;
furthermore, we will apply this in the context of an algebraic model,
where we have asymptotic expansions for the $G\in\cC_n(B)$ expected value
of $\# [S_\Bg^\og\cap G]$.  
For now we will ignore this structure and make some general remarks
that work with $\subgr_B^\og(T^\og,\mec k;G)$ in any context.

Note that
any walk in a $B$-graph is legal in and only in $\ViSu^\og_\Bg(w)$.
Furthermore, the legal walks in any ordered $B$-graph $S_\Bg^\og$ are
in one-to-one correspondence with the legal walks in the homotopy
reduction of $w$.
It follows that for any $B$-graph, $G_\Bg$, we have
$$
\snbc(T^\og,\mec k;G_\Bg,k)=
\subgr_B^\og(T^\og,\mec k;G)
\sum_{\mec m,\ {\rm s.t.}\ \mec k\cdot\mec m=k}
\legal(T^\og,\mec m) \ .
$$
Taking expected values yields the following lemma.
}

\begin{lemma}\label{le_length_mult}
Let $B$ be any graph, $n\in\naturals$, and $\cC_n(B)$ a probability space
of $B$-graphs of degree $n$.
For any $B$-type, $T^{\rm Otype}=(T^\og,\cR)$ 
and $k\in\naturals$ we have
\begin{equation}\label{eq_certified_type}
\EE_{G_\Bg\in\cC_n(B)}[ \snbc(T^{\rm Otype},\ge\bec\xi;G_\Bg,k) ]
= \sum_{\mec k\cdot\mec m=k, \ \mec k\ge \bec\xi} F_1(\mec k,n)F_2(\mec m),
\end{equation} 
where 
\begin{equation}\label{eq_define_F_1}
F_1(\mec k,n) \eqdef
\EE_{G\in\cC_n(B)}\Bigl[\subgr_B\bigl(T^{\rm Otype},\mec k;G_\Bg\bigr)\Bigr],
\quad
F_2(\mec m)\eqdef \legal(T^\og,\mec m) \ .
\end{equation}
Similarly the same holds with $T^{\rm Otype}$ replaced by $T^\og$.
\end{lemma}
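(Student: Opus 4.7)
The plan is to deduce this lemma as a direct specialization of the abstract length-multiplicity formula, Lemma~\ref{le_abs_length_mult}, by choosing the class $\cT$ appropriately. Specifically, take $\cT$ to consist of all ordered $B$-graphs $S_\Bg^\og$ whose homotopy type is $T^\og$, whose induced wording on $T$ lies in $\cR$, and whose edge-lengths vector satisfies $\mec k_S\ge\bec\xi$. This $\cT$ is visibly a union of isomorphism classes of ordered $B$-graphs, so Lemma~\ref{le_abs_length_mult} applies, and $\cT[\mec k]$ is empty unless $\mec k\ge\bec\xi$, in which case it is the set of isomorphism classes of ordered $B$-graphs of ordered $B$-type $T^{\rm Otype}$ with edge-lengths exactly $\mec k$.

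Next, I would unpack the left-hand side. By the definition of being ``of ordered $B$-type $T^{\rm Otype}$,'' an SNBC walk $w$ in $G_\Bg$ satisfies $\ViSu_\Bg^\og(w)\in\cT$ if and only if $w$ is of type $T^{\rm Otype}$ and its edge-lengths satisfy $\mec k\ge\bec\xi$. Hence
\[
\#\bigl\{\,w\in\SNBC(G,k)\ \bigm|\ \ViSu_\Bg^\og(w)\in\cT\,\bigr\}
=\snbc(T^{\rm Otype},\ge\bec\xi;G_\Bg,k),
\]
so the left-hand side of Lemma~\ref{le_abs_length_mult} coincides with the left-hand side of \eqref{eq_certified_type}.

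For the right-hand side, I would invoke the definition of $\subgr_B^\og(T^{\rm Otype},\mec k;G_\Bg)$: it counts ordered $B$-subgraphs of $G_\Bg$ of homotopy type $T^\og$, of ordered $B$-type $T^{\rm Otype}$, and of edge-lengths $\mec k$. Partitioning this count by isomorphism class, we obtain
\[
\subgr_B^\og(T^{\rm Otype},\mec k;G_\Bg)
=\sum_{[S_\Bg^\og]\in\cT[\mec k]}\#\bigl([S_\Bg^\og]\cap G_\Bg\bigr)
\]
whenever $\mec k\ge\bec\xi$. Taking expectations over $G\in\cC_n(B)$ and comparing with \eqref{eq_define_abs_F_1_2} identifies $F_1(\mec k,n)$ in the abstract lemma with $\EE_{G\in\cC_n(B)}[\subgr_B^\og(T^{\rm Otype},\mec k;G_\Bg)]$. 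Substituting into Lemma~\ref{le_abs_length_mult} yields \eqref{eq_certified_type}. The statement for $T^\og$ in place of $T^{\rm Otype}$ follows by the same argument, taking $\cR(e)$ to be the collection of all $B$-wordings of $T$ restricted to $\Edir_T$.

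No step is genuinely difficult: the substantive content---the bijection between SNBC walks in $S^\og$ of length $k$ and legal walks in $T^\og$ with multiplicities satisfying $\mec k_S\cdot\mec m=k$---was already carried out in Lemma~\ref{le_fundamental_legal_formula} and repackaged by Lemma~\ref{le_abs_length_mult}. The only point requiring care is checking that partitioning SNBC walks by ordered $B$-type (rather than just by homotopy type) is well-defined; this is where the ordering on $T$ matters, because without it the induced wording on $T$ would be defined only up to automorphisms of $T$, and the condition ``$W\in\cR$'' would no longer be a property of the walk alone.
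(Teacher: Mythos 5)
Your proposal is correct and takes exactly the same route as the paper: specialize Lemma~\ref{le_abs_length_mult} by letting $\cT$ be the ordered $B$-graphs of ordered $B$-type $T^{\rm Otype}$ with edge-lengths $\ge\bec\xi$, then observe that $\cT[\mec k]$ recovers $\subgr_B(T^{\rm Otype},\mec k;\cdot)$. The paper's proof is just a one-sentence version of the same specialization; the only small infelicity in your write-up is the phrasing of the last sentence (for the $T^\og$ case the paper simply drops the $\cR$-constraint on $\cT$ rather than describing a universal $\cR$), but this is a matter of wording rather than substance.
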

\begin{proof}
In Lemma~\ref{le_abs_length_mult} we take
$\cT$ equal to all ordered $B$-subgraphs of $B$-type $T^{\rm Otype}$
and edge lengths $\mec k\ge\bec\xi$; this yields
\eqref{eq_certified_type} and
\eqref{eq_define_F_1}.  For the similar statement with $T^\og$ replacing
by $T^{\rm Otype}$, we let $\cT$ to be those 
ordered $B$-subgraphs of types $T^\og$ and edge lengths $\mec k\ge\bec\xi$.
\end{proof}

\subsection{Remarks on the Length-Multiplicity Formula}

Lemma~\ref{le_length_mult}
has the effect of writing a ``decoupled'' expression
for \eqref{eq_certified_type}
in terms of edge lengths, $\mec k$, of walks and their
multiplicities $\mec m$, an expression which we will call a
{\em certified dot convolution} (in Definition~\ref{de_cert_dot_conv}).
This is our approach to proving Theorem~\ref{th_main_certified_walks}.

We will apply Lemma~\ref{le_length_mult} to $B$-types
$T^{\rm Otype}$ where $\cC_n(B)$ is algebraic, i.e.,
where there are polynomials $p_0,p_1,\ldots$ such that
any $S_\Bg^\og$ of this type has
$$
\EE_{G\in\cC_n(B)}\Bigl[ \#\bigl([S_\Bg^\og]\cap G\bigr) \Bigr]
=
c_0(S_\Bg) + c_1(S_\Bg)/n + \cdots + c_{r-1}(S_\Bg)/n^{r-1}
O(1)g(\#E_S)/n^r,
$$
where $c_i(S_\Bg)=p_i(\mec a_S)$, and $g$ is a function of
growth $1$ depending on $r$ (but not $S$).
The main technical difficulty in proving 
Theorem~\ref{th_main_certified_walks} is to
sum such $p_i(\mec a_S)$ over all $S$ of a given $B$-type,
i.e., to sum $p_i(\mec a_W)$ over all wordings, $W$, of type
$T^{\rm Otype}$;
this can be done by adapting the methods of 
\cite{friedman_random_graphs}.

\subsection{The Methods of Broder-Shamir and Multiplicities Greater
Than One}
\label{su_methods_broder_shamir}

To understand how we use Lemma~\ref{le_length_mult}, it is helpful
to keep in mind the special case where $T^\og$ is the homotopy type
of a cycle.
In this case $T$ has one vertex and one edge, a whole-loop.  It
follows that the formula $\mec k\cdot\mec m=k$ reads
$k_1m_1=k$, and hence $k_1$ must be a divisor of $k$.
It then follows that the case $m_1=1$, i.e., $k_1=k$, gives the
dominant terms in the asymptotic expansions (see Section~6 of Article~I),
and when $m_1\ge 2$ we have $k_1\le k/2$, which gives terms
whose coefficients are functions of growth $(d-1)^{1/2}$ or smaller.
This calculation goes back to \cite{broder}.

Even for general $T^\og$, of any order, these observations guide
our style of estimating \eqref{eq_certified_type}.
Namely, if $\mec m\ge \mec 2$, i.e.,
$m(e)\ge 2$ for each $e\in E_T$, then such a term gives rise
to expansion coefficients of growth $(d-1)^{1/2}$ or smaller.
So our approach to estimating 
\eqref{eq_certified_type} will separate the case $\mec m\ge\mec 2$
from all the other cases (see Subsection~\ref{su_growth_lemma}).
Moreover, for any $E'\subset E_T$ we
will consider the case where $m(e)=1$ for $e\in E'$ and
$m(e)\ge 2$ for $e\notin E'$.
Intuitively speaking, in this case the $\mec k\cdot \mec m=k$
condition gives larger terms (and expansion coefficients)
when the part of $\mec k\cdot\mec m$ involving the $e\in E'$
is as large as possible (i.e., close to $k$), and the part
involving $e\notin E'$ is as small as possible.
What we will show (similarly to \cite{broder,friedman_random_graphs}),
roughly speaking,
is that for a fixed $E'$, when each $e\notin E'$
has values $k(e)=1,2,3,\ldots$, we get coefficients that decay
geometrically in these values of $k(e)$ for $e\notin E'$.
We make this precise and more general in
Lemma~\ref{le_third_main_lemma}.

\section{Outline of the Proof of Theorem~\ref{th_main_certified_walks}}
\label{se_proof_outline}

In this section we outline the proof of Theorem~\ref{th_main_certified_walks}
and introduce some notation and concepts that we will need.
The proof of Theorem~\ref{th_main_certified_pairs} is similar although
involves more cumbersome notation.

Our proof is based on Lemma~\ref{le_length_mult}; we require some 
definitions.

\subsection{Definitions: The Dot Convolution and Polyexponential Functions}

\begin{definition}\label{de_cert_dot_conv}
Let $f,g$ be two functions $\naturals^s\to \complex$ for some integer
$n\ge 1$, and fix a $\bec\cert\in\naturals^s$.  We defined the
{\em $\bec\cert$-certified dot convolution of $f$ and $g$} to be the
function $\naturals\to\complex$, denoted $f\star_{\ge\bec\cert} g$,
given by
\begin{equation}\label{eq_cert_dot_conv_def}
(f\star_{\ge\bec\cert} g)(k) \eqdef
\sum_{\mec k\cdot\mec m=k,\ \mec k\ge \bec\cert}
f(\mec k) g(\mec m) .
\end{equation}
\end{definition}

\begin{definition}
A function 
$f=f(\mec k)=f(k_1,\ldots,k_m)\from \naturals^m\to\complex$ 
is called a {\em polyexponential
function} if it is a linear combination of products of 
(univariate) polyexponential functions 
(see Subsection~\ref{su_asymptotic_expansions})
in the individual variables $k_1,\ldots,k_m$.  Equivalently, 
$f$ is polyexponential
if it can be written as 
$$
f(\mec k) = \sum_{\bec\beta\in M\subset\complex^m}
\bec\beta^{\mec k} p_{\bec\beta}(\mec k)
$$
where 
(1) the sum is over $\bec\beta$ in some finite subset, $M$, of $\complex^m$,
(2) $p_{\bec\beta}=p_{\bec\beta}(\mec k)$ is a (multivariate) polynomial, and
(3) we use the ``tensor'' notation
$$
\bec\beta^{\mec k} = \beta_1^{k_1}\ldots\beta_m^{k_m},
$$
and (4) we adopt the convention
that for any $\beta_i=0$, the term $\beta_i^{k_i}$ is omitted, and
$p_{\bec\beta}(\mec k)$ is nonzero for only finitely many values of $k_i$,
and for all fixed values of all such $k_i$, the function
$p_{\bec\beta}(\mec k)$ is a polynomial in the remaining $k_1,\ldots,k_m$
(and this polynomial can depend on the values of $k_i$ with $\beta_i=0$).
We refer to all the $\beta_i$ appearing in the finite set of $\bec\beta$
as the {\em bases} of $f$.
\end{definition}
Our convention for $\beta=0$ above is motivated in Article~I; the point
is that if $J$ is a Jordan block corresponding to an eigenvalue $0$,
then $J^k$ is nonzero for only finitely many values of $k$.

\begin{definition}
Let $\rho\ge 0$ be a real number and $m\in\naturals$.
A function $f\from\naturals^m\to\complex$ is called a {\em of growth rate 
$\rho$} 
(or simply {\em growth $\rho$}) if for every $\epsilon>0$ we have
$$
\bigl| f(\mec k) \bigr| \le (\rho+\epsilon)^{\mec k\cdot\mec 1} 
= 
(\rho+\epsilon)^{k_1+\cdots+k_m} 
$$
for $\mec k\cdot\mec 1$ sufficiently large; equivalently for every
$\epsilon>0$ there is a $C=C(\epsilon)\in\reals$ such that
\begin{equation}\label{eq_multi_growth_definition}
\bigl| f(\mec k) \bigr| 
\le C(\epsilon)(\rho+\epsilon)^{\mec k\cdot\mec 1} 
\end{equation} 
for all $\mec k$.
\end{definition}
Often \eqref{eq_multi_growth_definition} is more convenient for estimates,
even though the first inequality is simpler.

The following definition generalizes the notion of a $(B,\nu)$-bounded
and $(B,\nu)$-Ramanujan functions
univariate function to multivariate functions.
\begin{definition}
Let $B$ be a graph and $\nu>0$ be real numbers.
We say that a function $f\from\naturals^m\to\complex$ is 
{\em $(B,\nu)$-bounded} if there is a polyexponential function
$P\from\naturals^m\to\complex$ whose bases are bounded in absolute
value by $\mu_1(B)$ such that
$f-P$ is of growth rate $\nu$; furthermore we say that $f$ is
$(B,\nu)$-Ramanujan if the bases are a subset of the $\mu_i(B)$, 
i.e., of the eigenvalues of $H_B$.
\end{definition}

\subsection{Outline of the Proof of Theorem~\ref{th_main_certified_walks}}

Our proof of 
Theorem~\ref{th_main_certified_walks} is based on the following three
lemmas.
We prove the first lemma immediately, since the proof is very easy.

\begin{lemma}\label{le_first_main_lemma}
Let $T^\og$ be any ordered graph, and $\bec\cert\from E_T\to\naturals$.
Let
$$
\omega(M) \eqdef
\sum_{\bec\cert\cdot \mec m= M} F_2(\mec m)
$$
where $F_2(\mec m)$ is as in \eqref{eq_define_F_1}, i.e.,
$$
F_2(\mec m) \eqdef \legal(T^\og,\mec m) \ .
$$
Then $\omega(M)$ is of growth rate
$$
\mu_1\bigl( \VLG(T,\bec\cert) \bigr) \ .
$$
\end{lemma}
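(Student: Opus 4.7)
The plan is to realize $\omega(M)$ as a count of SNBC walks in $\Gamma \eqdef \VLG(T,\bec\cert)$ and bound this count by the Perron--Frobenius eigenvalue $\mu_1(\Gamma)$ of the Hashimoto matrix $H_\Gamma$. If $T^\og$ does not arise as the first-encountered ordering of any legal walk in $T$, then $\legal(T^\og,\mec m)=0$ for all $\mec m$ and $\omega\equiv 0$, so the conclusion is trivial; therefore assume $T^\og$ does arise this way. Setting $S^\og \eqdef \VLG^\og(T^\og,\bec\cert)$, which is then well-defined and has homotopy type $T^\og$ with edge lengths $\bec\cert$, Lemma~\ref{le_fundamental_legal_formula} yields
$$
\omega(M)=\sum_{\bec\cert\cdot\mec m=M}\legal(T^\og,\mec m)=\vis(S^\og,M),
$$
so $\omega(M)$ equals the number of SNBC walks of length $M$ in $\Gamma$ whose first-encountered ordered visited subgraph equals $S^\og$.

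For the upper bound, every walk counted by $\vis(S^\og,M)$ is an SNBC walk of length $M$ in $\Gamma$, so
$$
\omega(M)\le\#\SNBC(\Gamma,M)=\tr\bigl(H_\Gamma^M\bigr)\le(\#\Edir_\Gamma)\,\mu_1(\Gamma)^M,
$$
using $|\mu_i(H_\Gamma)|\le\mu_1(\Gamma)$ for every Hashimoto eigenvalue. Thus for any $\epsilon>0$ and all large $M$, $\omega(M)\le(\mu_1(\Gamma)+\epsilon)^M$, establishing that the growth rate is at most $\mu_1(\Gamma)$.

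For the matching lower bound, fix a legal walk $w^*$ in $S^\og$ of some length $M^*$ based at the first vertex $v_0$ of $S^\og$; such a $w^*$ exists because the assumed legal walk in $T^\og$ lifts along $T\hookrightarrow\Gamma$ to a legal walk in $S^\og$. For $M\ge M^*$, any closed non-backtracking walk $u$ of length $M-M^*$ at $v_0$ whose first edge differs from the reverse of the last edge of $w^*$ and whose last edge differs from the reverse of the first edge of $w^*$ yields an SNBC walk $w^*\cdot u$ of length $M$; since $w^*$ already exhausts $\Gamma$ as its visited subgraph with first-encountered ordering $S^\og$, any extension retains this ordered visited subgraph, so $w^*\cdot u$ contributes to $\omega(M)$. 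By Perron--Frobenius applied to $H_\Gamma$, along an arithmetic progression of values of $M-M^*$ determined by the period of $H_\Gamma$, the number of closed non-backtracking walks of length $M-M^*$ based at $v_0$ is at least $c\,\mu_1(\Gamma)^{M-M^*}$ for some $c>0$, and excluding the two forbidden incidences at $v_0$ removes only a constant-factor fraction. Hence $\omega(M)\ge c'\mu_1(\Gamma)^M$ for infinitely many $M$, precluding any growth rate strictly smaller than $\mu_1(\Gamma)$.

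The main obstacle is the lower bound when $H_\Gamma$ is not primitive: several eigenvalues may share maximum modulus, so the count of closed non-backtracking walks at $v_0$ oscillates in $M$ and the bound $c\,\mu_1^{M-M^*}$ only holds along an arithmetic progression rather than pointwise. This nonetheless suffices for the growth-rate claim, since matching the growth rate requires only a lower bound on $\omega(M)$ that holds infinitely often, exactly what the arithmetic-progression argument supplies.
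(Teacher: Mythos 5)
Your upper bound is exactly the paper's argument: identify the walks counted by $\omega(M)$ with SNBC walks of length $M$ in $\VLG(T,\bec\cert)$ and bound their number by $\Trace\bigl(H_{\VLG(T,\bec\cert)}^M\bigr)\le \#\Edir_{\VLG(T,\bec\cert)}\,\mu_1\bigl(\VLG(T,\bec\cert)\bigr)^M$, so for what the lemma asserts your proof is correct and essentially the same as the paper's (the paper uses an injection of legal walks rather than the exact equality $\omega(M)=\vis(S^\og,M)$, but this is immaterial). Note, however, that the entire second half of your write-up is unnecessary: in this paper ``of growth $\rho$'' is a one-sided notion ($|f(k)|=o(1)(\rho+\epsilon)^k$ for every $\epsilon>0$), so no matching lower bound is required, and it is precisely in that superfluous part that your argument is loosest (the claim that excluding the two forbidden incidences at $v_0$ only costs a constant factor, and the implicit irreducibility of $H_\Gamma$, are not justified); since that part can simply be deleted, the proof of the lemma stands.
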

\begin{proof}
Let $T'=\VLG(T,\bec\cert)$.  
The SNBC closed walks in $T^\og$ 
inject into the set of SNBC closed walks in $T'$, in a way that
if the walk in $T$ has multiplicities $\mec m$, then its length
in $T'$ is 
$\bec\xi\cdot\mec m$.  It follows that
$\omega(M) \le \Trace(H_{T'}^M)$ which is bounded by
$\#\Edir_{T'}$ times $\mu_1(T')^M$.
\end{proof}

\begin{lemma}\label{le_second_main_lemma}
Let $B$ be a graph, $T^{\rm Otype}$ a $B$-type, and
$\{\cC_n(B)\}_{n\in N}$ an algebraic model.
Let $F_1(\mec k)$ be as in \eqref{eq_define_F_1}, i.e.,
$$
F_1(\mec k,n) \eqdef
\EE_{G\in\cC_n(B)}\Bigl[\subgr_B\bigl(T^{\rm Otype},\mec k;G_\Bg)\Bigr]\ .
$$
Then for any $r$ there is
a constant $C$ such that for all $\mec k\cdot\mec 1\le n^{1/2}/C$
(and $n\in N$)
\begin{equation}\label{eq_Otype_lemma}
F_1(\mec k,n)=c_0(\mec k)+\cdots+c_{r-1}(\mec k)/n^{r-1}+
O(1/n^r)C_r(\mec k),
\end{equation} 
whose coefficients, $c_i(\mec k)$ are
polyexponential functions
of $\mec k$ (i.e., purely polyexponential, with no error term), whose
bases are eigenvalues of the type,
and where $C_r(\mec k)$ is a function
of growth $\mu_1(B)$.
\end{lemma}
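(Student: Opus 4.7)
The plan is to pass from the expected count of ordered $B$-subgraphs of $G$ to a sum over $B$-wordings, apply the algebraic expansion term-by-term, and then establish polyexponential behavior of the polynomial-weighted sums via a transfer-matrix argument on the regular languages $\cR(e)$.

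First, I would rewrite $F_1$ as a sum over wordings. Writing $T^{\rm Otype}=(T^\og,\cR)$, each ordered $B$-subgraph of $G_\Bg$ of type $T^{\rm Otype}$ with edge-lengths $\mec k$ corresponds uniquely to a $B$-wording $W$ of $T^\og$ satisfying $W(e)\in\cR(e)$ for all $e\in\Edir_T$ and having edge-lengths $\mec k$, the correspondence being $S_\Bg^\og=\VLG^\og(T^\og,W)$; this uses that an ordered graph has only the identity automorphism, forcing distinct wordings to yield non-isomorphic ordered $B$-graphs. Hence
$$
F_1(\mec k,n)=\sum_W \EE_{G\in\cC_n(B)}\Bigl[\#\bigl([\VLG^\og(T^\og,W)]\cap G_\Bg\bigr)\Bigr].
$$
Since $\cC_n(B)$ is algebraic restricted to $T^{\rm Otype}$, there exist polynomials $p_0,\ldots,p_{r-1}$ (depending only on $T^{\rm Otype}$ and $r$), a function $g$ of growth one, and a constant $C$ such that each summand equals $\sum_{i=0}^{r-1} p_i(\mec a_W,\mec b_W)/n^i + O(1)\,g(\#\Edir_S)/n^r$, valid whenever $\#\Edir_S\le n^{1/2}/C$; since $\#\Edir_S$ is an affine function of $\mec k\cdot\mec 1$, this is ensured by $\mec k\cdot\mec 1\le n^{1/2}/C'$ for an adjusted $C'$. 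Interchanging the finite sum over $i$ with the sum over $W$ yields the candidate coefficients $c_i(\mec k)=\sum_W p_i(\mec a_W,\mec b_W)$.

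The core step is to show each $c_i(\mec k)$ is polyexponential in $\mec k$ with bases among the eigenvalues of $T^{\rm Otype}$. For each $e\in\Eor_T$ (an orientation of $T$), fix a deterministic finite automaton recognizing $\cR(e)$ with state-transition matrix $M_e$, and introduce formal variables $\{x_{e_B}\}_{e_B\in\Edir_B}$ to form the weighted matrix $M_e(\mathbf x)$ whose $(s,s')$ entry is $x_{e_B}$ on a transition from $s$ to $s'$ reading $e_B$, and $0$ otherwise. A wording $W$ is determined by a projection $\phi:V_T\to V_B$ (the tail vertex of the first letter of each $W(e)$, which depends only on $t_T e$) together with, for each $e\in\Eor_T$, a word in $\cR(e)$ of length $k(e)$ starting at $\phi(t_T e)$ and ending at $\phi(h_T e)$. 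For fixed $\phi$, the generating function $\sum_W \mathbf x^{\mec a_W}$ over wordings projecting via $\phi$ factors as a product over $e\in\Eor_T$ of a designated entry of $M_e(\mathbf x)^{k(e)}$. Since $\mec b_W$ is linear in $\mec a_W$ modulo a constant depending only on $\phi$, the sum $\sum_W p_i(\mec a_W,\mec b_W)$ for fixed $\phi$ is obtained by applying the differential operator in $\{x_{e_B}\}$ corresponding to $p_i$ to this generating function and then setting $\mathbf x=\mathbf 1$. Since derivatives of $M_e(\mathbf x)^{k(e)}$ evaluated at $\mathbf x=\mathbf 1$ are polynomial-in-$k(e)$ combinations of products of powers of $M_e$ (via the product rule applied to matrix powers), each entry becomes a polyexponential in $k(e)$ whose bases are eigenvalues of $M_e$, i.e., of $\cR(e)$. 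Summing over the finite set of projections $\phi$ preserves this structure, so $c_i(\mec k)$ is polyexponential with bases lying in the union of eigenvalues of the $\cR(e)$, which is contained in any set of eigenvalues of the type.

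For the error term, the number of wordings with edge-lengths $\mec k$ is at most the product over $e\in\Eor_T$ of the number of non-backtracking walks of length $k(e)$ in $B$, which is $O(\mu_1(B)^{\mec k\cdot\mec 1})$; multiplying by $g(\#\Edir_S)$ with $g$ of growth one gives an error $O(1/n^r)\,C_r(\mec k)$ with $C_r(\mec k)$ of growth $\mu_1(B)$. The main obstacle is the transfer-matrix bookkeeping—verifying that the polynomial-weighted sum over wordings, subject to endpoint consistency imposed by $\phi$ and the involution constraint $W(\iota_T e)=W(e)^R$ (so that one sums effectively only over an orientation), really produces a polyexponential whose bases lie strictly among the eigenvalues of the $\cR(e)$. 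Once these bookkeeping issues are resolved, the remaining estimates are routine.
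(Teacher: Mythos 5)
Your proposal follows essentially the same route as the paper's proof: rewrite $F_1$ as a sum over wordings, apply the algebraic expansion of $\EE[\#[S_\Bg^\og]\cap G]$ termwise, then recognize the resulting coefficient $c_i(\mec k)=\sum_W p_i(\mec a_W,\mec b_W)$ as a polynomial-weighted count over a regular language and evaluate it via a transfer-matrix/DFA argument, using differentiation of a weighted generating matrix to handle the polynomial weights. The paper packages these steps into Lemma~\ref{le_precursor}, Lemma~\ref{le_most_of_the_work}, and Corollaries~\ref{co_regular_unweighted}/\ref{co_regular}/Lemma~\ref{le_regular}; you carry them out in one pass, but the mechanism is identical. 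Two small remarks. First, the "bookkeeping obstacle" you flag at the end is already resolved by the axioms: condition~(3) in the definition of a $B$-type guarantees that any choice of $W(e)\in\cR(e)$ respecting the reversal constraint automatically yields a $B$-wording, so ${\rm Word}(\cR,\mec k)$ is precisely the cartesian product $\prod_{e\in\Eor_T}\cR(e)_{k(e)}$ with no further endpoint-compatibility constraint to track; your auxiliary $\phi$ is determined by (and bundled into) the $\cR(e)$'s. Second, you begin by invoking "algebraicity restricted to $T^{\rm Otype}$," but the definition of an algebraic model only supplies a finite decomposition of the $T^\og$-wordings into ordered $B$-types $T_j^{\rm Otype}$ on each of which the model is algebraic; the paper's proof starts with this decomposition and sums over $j$. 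If $T^{\rm Otype}$ in the lemma statement is not assumed to be one of these $T_j^{\rm Otype}$ (or an intersection-refinement of one), your first step needs this extra decomposition, after which everything else you wrote goes through. Your observation that $\mec b_W$ is affine in $\mec a_W$ once $T$ and the vertex-projection are fixed is correct and is one way to reduce to a polynomial purely in $\mec a_W$, matching what Lemma~\ref{le_precursor} uses.
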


Our third lemma is a general fact about dot convolutions.

\begin{lemma}[The Certified Dot Convolution Lemma]
\label{le_third_main_lemma}
Let $f,g$ be two functions $\naturals^s\to \complex$ for some integer
$n\ge 1$, and fix a $\bec\cert\in\naturals^s$.  Assume that
(1) $f$ is a polyexponential function whose bases are bounded in
absolute value by some real $\beta>0$, and that (2) the function
(of $M$)
$$
\omega(M) \eqdef
\max_{\bec\cert\cdot \mec m= M} |g(\mec m) |
$$
is of growth rate $\rho$.
Then the $\bec\cert$-certified dot convolution of $f$ and $g$, i.e.,
$$
(f\star_{\ge\bec\cert} g)(k) \eqdef
\sum_{\substack{\mec k\cdot\mec m=k,\ \mec k\ge \bec\cert\\
\mec k,\mec m\in\naturals^s}}
f(\mec k) g(\mec m)
$$
is an approximate polyexponential function whose
bases are those of $f$ and whose
error term is of growth
$$
\max \bigl( \beta^{1/2},\rho,1 \bigr).
$$
\end{lemma}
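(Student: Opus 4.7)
The strategy is a generating-function analysis: for each fixed multiplicity vector $\mec m$, the inner sum over $\mec k$ will be shown to be a polyexponential function of $k$ whose bases are explicitly controlled; summing over $\mec m$ then separates naturally into a polyexponential part (the main term) and an error of the claimed growth.

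By bilinearity of $\star_{\ge\bec\cert}$ in $f$, it suffices to treat $f(\mec k) = \bec\beta^{\mec k}$ for a single tuple $\bec\beta$ with $|\beta_i|\le\beta$; polynomial prefactors $p(\mec k)$ can be recovered by formal differentiation of the generating function below in the variables $z_i$, which leaves the pole locations (and hence the eventual bases) unchanged. For each $\mec m\in\naturals^s$ set
\[
S_{\mec m}(k) \eqdef \sum_{\substack{\mec k\cdot\mec m=k \\ \mec k\ge\bec\cert}} \bec\beta^{\mec k},
\]
whose generating function in $k$ is the rational function
\[
\sum_{k\ge 0} S_{\mec m}(k)\, z^k \;=\; \prod_{i=1}^{s} \frac{(\beta_i z^{m_i})^{\xi_i}}{1-\beta_i z^{m_i}}.
\]
Its poles lie at $z=\beta_i^{-1/m_i}\omega^{-1}$ with $\omega^{m_i}=1$, and a standard partial-fraction expansion shows that $S_{\mec m}(k)$ is a polyexponential in $k$ whose bases lie in $\{\beta_i^{1/m_i}\omega : i\in[s],\ \omega^{m_i}=1\}$. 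The decisive observation is that such a base has absolute value $|\beta_i|^{1/m_i}\le\beta^{1/m_i}$, which can exceed $\max(\beta^{1/2},1)$ only when $m_i=1$ (and $\omega=1$); in that case the base is exactly $\beta_i$, a base of $f$.

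Accordingly, the plan is to split $S_{\mec m}=S_{\mec m}^{\rm big}+S_{\mec m}^{\rm small}$, where $S_{\mec m}^{\rm big}$ collects the partial-fraction contributions with bases in $\{\beta_1,\ldots,\beta_s\}$ and $S_{\mec m}^{\rm small}$ collects the remaining contributions. This induces the decomposition $(f\star_{\ge\bec\cert}g)(k)=P(k)+E(k)$ with $P(k)=\sum_{\mec m}g(\mec m)S_{\mec m}^{\rm big}(k)$ and $E(k)=\sum_{\mec m}g(\mec m)S_{\mec m}^{\rm small}(k)$. For each $k$ only finitely many $\mec m$ can contribute --- namely those with $\bec\cert\cdot\mec m\le k$, since $\bec\cert\cdot\mec m\le\mec k\cdot\mec m=k$ follows from $\mec k\ge\bec\cert$ --- so $P(k)$ is a polyexponential in $k$ whose bases lie in $\{\beta_1,\ldots,\beta_s\}$, that is, among the bases of $f$. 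The main obstacle is bounding $|E(k)|$. The plan is to combine three estimates: (i) $|S_{\mec m}^{\rm small}(k)|\le Q_{\mec m}(k)(\max(\beta^{1/2},1)+\epsilon)^k$, where $Q_{\mec m}$ is a polynomial whose coefficients, from the explicit form of the partial-fraction residues, grow only polynomially in $\bec\cert\cdot\mec m$; (ii) $|g(\mec m)|\le\omega(\bec\cert\cdot\mec m)\le C(\max(\rho,1)+\epsilon)^{\bec\cert\cdot\mec m}$; and (iii) there are at most $O(k^s)$ admissible $\mec m$ for each $k$. Multiplying and summing yields $|E(k)|\le\mathrm{poly}(k)\cdot(\max(\beta^{1/2},\rho,1)+\epsilon)^k$, the required growth. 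The technically delicate step is (i): controlling the partial-fraction residues at the ``small'' poles uniformly enough in $\mec m$ that their growth is absorbed by the exponential bound on $g$.
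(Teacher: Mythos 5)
Your overall plan is close in spirit to the paper's (the key observation that a ``new'' base $\beta_i^{1/m_i}\omega$ has modulus at most $\beta^{1/m_i}\le\beta^{1/2}$ as soon as $m_i\ge 2$ is precisely the reason the exponent $1/2$ appears), but the proposed decomposition $(f\star_{\ge\bec\cert}g)=P+E$ has a genuine gap that the acknowledged ``technically delicate'' step does not repair.

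The issue is the assertion that $P(k)=\sum_{\mec m}g(\mec m)S_{\mec m}^{\rm big}(k)$ is a polyexponential because only $\mec m$ with $\bec\cert\cdot\mec m\le k$ contribute. That finiteness argument applies to $S_{\mec m}(k)$ (which vanishes for $\bec\cert\cdot\mec m>k$), not to the separately extracted partial-fraction piece $S_{\mec m}^{\rm big}(k)$, which is generically a \emph{nonzero} polyexponential for every $k$ (the small piece $S_{\mec m}^{\rm small}$ precisely cancels it for $k<\bec\cert\cdot\mec m$). So either you define $P(k)$ as the $k$-truncated sum $\sum_{\bec\cert\cdot\mec m\le k}g(\mec m)S_{\mec m}^{\rm big}(k)$, which is a different polyexponential for each range of $k$ and hence not a single polyexponential; or you define $P(k)$ as the infinite sum over all $\mec m$, in which case you must prove absolute convergence of $\sum_{\mec m}g(\mec m)\,({\rm residue~at~}\beta_i^{-1})$ for each big pole. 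The latter can actually fail: for $s=2$, $\bec\cert=(1,1)$, $|\beta_1|>1$, and $\mec m=(1,m_2)$, the residue of $\prod_j\frac{(\beta_jz^{m_j})^{\xi_j}}{1-\beta_jz^{m_j}}$ at $z=1/\beta_1$ decays like $|\beta_1|^{-m_2}$, while $|g(1,m_2)|$ may grow like $\rho^{m_2}$; the series $\sum_{m_2}g(1,m_2)C_{m_2}$ then diverges whenever $\rho>|\beta_1|$, a perfectly admissible regime for the lemma. (Of course in that regime $|\beta_1|<\rho$ so $\beta_1^k$ could be absorbed into the error---but your decomposition does not perform that absorption, and $P$ is simply ill-defined there.)

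The paper sidesteps this entirely. Instead of extracting residues term by term in $\mec m$, it first groups $\mec m$ by the (finitely many) subsets $N=\{i:m_i=1\}$, and for each $N$ rewrites $f\star_{\ge\bec\cert,N}g$ as an \emph{additive} convolution $F_1\ast F_2$: here $F_1(K_1)$ is a shifted additive convolution of the one-variable polyexponentials $f_i$ with $i\in N$ (hence a single polyexponential, with no infinite sum over $\mec m$), and $F_2(K_2)$ absorbs the entire sum over $\mec m''\ge\mec 2$ and is shown to be of growth $\max(\beta^{1/2},\rho,1)$ via the elementary Growth Lemma (Lemma~\ref{le_growth_lemma}), which uses only the inequality $2(k_i-\cert_i)+\cert_im_i\le k_im_i$. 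Lemma~\ref{le_poly_with_bounded} then gives the approximate-polyexponential conclusion. If you want to rescue your pole-by-pole decomposition, you would need, at minimum, to (a) restrict attention a priori to bases $\beta_i$ with $|\beta_i|>\max(\beta^{1/2},\rho,1)$ and fold the rest into the error, and (b) prove uniform residue bounds together with absolute convergence of the resulting sums over $\mec m$; that is essentially re-deriving the Growth Lemma in disguise, and the paper's route is cleaner.
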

The $\beta^{1/2}$ in the above equation is due to the contribution
of $\mec m\ge 2$ (see Subsection~\ref{su_methods_broder_shamir}
for its appearance, that essentially goes back to \cite{broder}).

The proofs of each of these lemmas are simple adaptations of the
proofs of the special cases given in \cite{friedman_random_graphs},
and the proofs are independent of one another.
Lemma~\ref{le_third_main_lemma} will be proven in the next section.

When we generalize our proof of 
Theorem~\ref{th_main_certified_walks} to
Theorem~\ref{th_main_certified_pairs}, 
Lemmas~\ref{le_first_main_lemma} and \ref{le_third_main_lemma} will
be used verbatim, but
Lemma~\ref{le_second_main_lemma} will require modification.
To make this modification easy, in the next
subsection we will state a precursor to
Lemma~\ref{le_second_main_lemma} that can be used to count the expected
number of SNBC walks of homotopy type $T^\og$ certified by $\bec\xi$,
and similarly for {\em certified pairs}
(defined in Section~\ref{se_pairs_prelim}).

Roughly speaking, the proof of Theorem~\ref{th_main_certified_walks} 
will apply Lemma~\ref{le_third_main_lemma} to each coefficient
$c_i(\mec k)$ and to $C_r(\mec k)$ of
\eqref{eq_Otype_lemma} in Lemma~\ref{le_second_main_lemma}.

\subsection{A Lemma on Types or Regular Languages}

The following lemma easily implies Lemma~\ref{le_second_main_lemma}
and the generalization of this lemma needed to count the expected number
of certified pairs.
This lemma is also stated so that we can easily use it when we 
generalize our proof of Theorem~\ref{th_main_certified_walks}
to that of Theorem~\ref{th_main_certified_pairs}.

If $B,T$ are graphs and $W$ is a $B$-wording of $T$, then we write
$\mec a_W$ for $\mec a_{S_\Bg}$ where $S_\Bg=\VLG(T,W)$; in other words
$\mec a_W\from E_B\to\integers_{\ge 0}$ is the function where
$\mec a_W(e)$ is the number of times $e_B$ occurs in either
orientation in $W$ restricted to any orientation of $T$.

\begin{lemma}\label{le_precursor}
Let $B$ be a graph and
$T^{\rm Otype}=(T^\og,\cR)$ an
ordered $B$-type.  Say that for each wording $W$
of type $T^{\rm Otype}$
we are given a function
$f_W=f_W(n)\from N\to\complex$ for an infinite set $N\subset \naturals$.
Consider any
$r\in\naturals$, $C>0$, and function
$g$ of growth $1$ for which
$$
f_W(n) = c_0(W)+ \cdots + c_{r-1}(W) /n^{r-1} + O(1/n^r)g(|W|)
$$
where $|W|$ is the sum of the edge-lengths of $W$,
for a universal constant in the $O(1/n^r)$
in the range $1\le |W|\le n^{1/2}/C$.  Furthermore assume that
there are polynomials $p_0,\ldots,p_{r-1}$ such that
$c_i(W)=p_i(\mec a_W)$ for all $i=0,\ldots,r-1$.  Then
$$
f(\mec k,n) \eqdef \sum_{W\in T^{\rm Otype}[\mec k]} f_W(n)
$$
has an asymptotic expansion
$$
f(\mec k,n)=
c_0(\mec k)+\cdots+c_{r-1}(\mec k)/n^{r-1}+
O(1/n^r)C_r(\mec k)
$$
in the range $\mec k\cdot\mec 1\le n^{1/2}/C$ (and for $n\in N$),
whose coefficients, $c_i(\mec k)$ are
polyexponential functions
of $\mec k$ (i.e., purely polyexponential, with no error term), whose
bases are a subset of the eigenvalues of $T^{\rm Otype}$
(i.e., of the regular languages $\cR(e)$ with $e$ varying over $\Edir_T$)
and where $C_r(\mec k)$ is a function
of growth $\mu_1(B)$.
Furthermore, if the $p_i$ vanish for $i<i_0$ for some $i_0$,
then the coefficients $c_i$ vanish for $i<i_0$.
\end{lemma}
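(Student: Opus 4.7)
The plan is to unwind the hypothesized expansion of each $f_W(n)$, separate the finite main terms from the tail, and then verify that each main-term sum over $W$ is a multivariate polyexponential in $\mec k$ with the claimed bases, while the tail has growth $\mu_1(B)$. First I would substitute the expansion $f_W(n)=p_0(\mec a_W)+\cdots+p_{r-1}(\mec a_W)/n^{r-1}+O(1/n^r)g(|W|)$ into $f(\mec k,n)=\sum_{W\in T^{\rm Otype}[\mec k]}f_W(n)$ and exchange the (finite) sum over $W$ with the expansion to obtain
$$
f(\mec k,n)=\sum_{i=0}^{r-1}\frac{c_i(\mec k)}{n^i}+\frac{O(1)}{n^r}\,C_r(\mec k),\qquad c_i(\mec k)=\sum_{W\in T^{\rm Otype}[\mec k]}p_i(\mec a_W),
$$
with $C_r(\mec k)=\sum_{W\in T^{\rm Otype}[\mec k]}g(|W|)$. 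The vanishing statement is immediate from this formula: if $p_i\equiv 0$ then $c_i\equiv 0$.

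Next I would bound the error term $C_r(\mec k)$. Since $|W|=\mec k\cdot\mec 1$ is constant on $T^{\rm Otype}[\mec k]$ and $g$ has growth $1$, it suffices to bound $\#T^{\rm Otype}[\mec k]$. After fixing an orientation $\Eor_T$, a wording is determined by a vertex map $\pi\from V_T\to V_B$ together with, for each $e\in\Eor_T$, a non-backtracking walk in $B$ of length $k(e)$ from $\pi(t_T e)$ to $\pi(h_T e)$ lying in $\cR(e)$. Dropping the $\cR(e)$ constraint only inflates the count to a product of entries of $H_B^{k(e)}$, and Perron--Frobenius theory gives the bound $\#T^{\rm Otype}[\mec k]\le C'\mu_1(B)^{\mec k\cdot\mec 1}$ (up to a polynomial factor in $\mec k\cdot\mec 1$); combined with the growth-$1$ bound on $g$, the product $C_r(\mec k)$ is of growth $\mu_1(B)$.

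The heart of the argument is to show that each $c_i(\mec k)$ is a multivariate polyexponential in $\mec k$ with bases among the eigenvalues of the type. Again fixing $\Eor_T$, the consistency constraints on a wording decouple once a vertex assignment $\pi\from V_T\to V_B$ is chosen: conditioned on $\pi$, the choices of $W(e)$ for distinct $e\in\Eor_T$ are made independently in the sublanguage $\cR_\pi(e)\subset \cR(e)$ of words starting at $\pi(t_Te)$ and ending at $\pi(h_Te)$. For a fixed monomial $\mec a^{\bec\alpha}$, I would expand $\mec a_W(e^*)=\sum_{e\in\Eor_T}a_{W(e)}(e^*)$ and use the multinomial theorem to write $\mec a_W^{\bec\alpha}$ as a sum of products $\prod_{e\in\Eor_T}M_e\bigl(W(e)\bigr)$ of per-edge monomials in occurrence counts. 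This reduces the computation of $c_i(\mec k)$ to a finite sum over $\pi$ and over multinomial indices of a product, over $e\in\Eor_T$, of weighted counts
$$
\Phi_{e,\pi,M_e}(k(e))\eqdef\sum_{w\in\cR_\pi(e),\ |w|=k(e)} M_e(w).
$$
A standard NFA enlargement (track the state in an automaton for $\cR(e)$ together with the running values of the finitely many occurrence counters, which grow only polynomially) realizes $\Phi_{e,\pi,M_e}(k(e))$ as an entry of $k(e)$-th powers of a matrix whose spectrum is contained in the eigenvalues of $\cR(e)$. Consequently $\Phi_{e,\pi,M_e}$ is univariate polyexponential in $k(e)$ with bases among the eigenvalues of the type, the product over $e\in\Eor_T$ is multivariate polyexponential in $\mec k$ with the same bases, and the finite sums over $\pi$ and over multinomial indices preserve this.

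The main obstacle will be the last step, in particular ensuring that the polynomial-moment weights $M_e(w)$ can be tracked without introducing new bases: the automaton-enlargement argument is routine but one must verify that the ``eigenvalues of a regular language'' in the paper's sense (the minimal set of bases needed to express word-counts as a polyexponential in length) are also the bases that arise after weighting by an arbitrary polynomial in occurrence counts. This comes down to the observation that appending polynomial counters to an automaton's state produces a block-triangular transition matrix whose spectrum agrees with that of the original automaton, and with care this shows $c_i(\mec k)$ is a multivariate polyexponential in $\mec k$ with the claimed bases, completing the proof.
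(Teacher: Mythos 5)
Your proposal follows the same overall architecture as the paper's proof: exchange the finite sum over $W$ with the expansion, bound the error via the count $\#T^{\rm Otype}[\mec k]$ (paper uses entries of $H_B^s$, you use an equivalent non-backtracking walk count), and show each $c_i(\mec k)=\sum_W p_i(\mec a_W)$ is multivariate polyexponential by decoupling the wordings edge-by-edge and invoking a regular-language fact (paper: Lemma~\ref{le_most_of_the_work}, hence Corollary~\ref{co_regular_unweighted}; you: a hand-rolled version via automaton enlargement). Two points are worth flagging.

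First, your intermediate layer of vertex assignments $\pi\from V_T\to V_B$ and sublanguages $\cR_\pi(e)$ is unnecessary, and in fact it creates the very worry you acknowledge about tracking bases. Condition~(3) in the definition of a $B$-type says that \emph{any} assignment $W(e)\in\cR(e)$ (compatible with $\iota_T$) is automatically a $B$-wording; this is exactly the statement that the endpoint-consistency you are trying to enforce via $\pi$ is already baked into $\cR$. Consequently, once any word is in $\cR(e)$, the starting vertex in $B$ is determined, so the only relevant $\pi$ is forced and $\cR_\pi(e)=\cR(e)$. The paper exploits this directly as the clean bijection
$${\rm Word}(\cR,\mec k) = \prod_{e\in E_T}\cR(e)_{k(e)},$$
which lets the sum over wordings factor as a product over $e\in E_T$ with no sum over $\pi$ and no sublanguage whose eigenvalues might differ from those of $\cR(e)$.

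Second, for the per-edge polyexponential step you use a counter/tensor enlargement of the automaton and argue block-triangularity; the paper instead uses generating functions: Lemma~\ref{le_regular} handles exponential weights $\bec\beta^{{\rm occur}(w)}$ as entries of $M^k$, Corollary~\ref{co_regular} differentiates in $\bec\beta$ to handle factorial powers of occurrence counts, and Corollary~\ref{co_regular_unweighted} specializes $\bec\beta=\mec 1$. Your route is a valid alternative, but the caveat you raise at the end --- verifying that polynomial weights don't enlarge the set of bases --- is precisely what the paper's differentiation argument settles cleanly (differentiating a rational function in $z$ does not introduce new poles). If you keep your approach, spelling out the block-triangular spectral argument is the one step you would actually need to write; otherwise, replacing it by the paper's generating-function differentiation removes the only lingering uncertainty in your sketch. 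Also note that $\mec a_W$ is the vector of row sums of the paper's occurrence matrix $X(W)$, so the reduction from ``polynomial in $\mec a_W$'' to ``product over $e\in E_T$ of per-edge polynomials in ${\rm occur}(W(e))$'' is slightly more transparently phrased via monomials in $X$ (paper's $X^M$ and linearity) than via the multinomial theorem, though the two are equivalent.
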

Note that this lemma is stated purely in terms of one
ordered $B$-type, and can
be viewed as a lemma on $\cR$, which is essentially a cartesian product of
regular languages.
Hence this lemma is merely a lemma regarding functions on
products of regular languages (separate from any applications).
Of course, in applications the $i_0$ above will be the order $T$,
as the $f_W(n)$ expansion comes from an algebraic model.

\section{Proof of The Certified Dot Convolution Lemma}
\label{se_cert_dot}

In this Section we prove the
Certified Dot Convolution
Lemma, i.e., Lemma~\ref{le_third_main_lemma}.
The case where $\bec\cert=1$ (i.e., the dot convolution)
is the case addressed in 
\cite{friedman_random_graphs}.  It suffices to adapt the proof there
to this more general situation, and check that everything still works.
We shall break the proof into a number of subsections, and complete
the proof in the last subsection.

\subsection{First Step in the Proof: The Growth Lemma}
\label{su_growth_lemma}

The main idea in the proof is to divide the sum in the certified
dot convolution by fixing some of the $m_1,\ldots,m_s$ to be $1$, and
the others to be at least $2$ (there are only $2^s$ ways to do this,
and $2^s$---for fixed $s$---is a constant).  We first consider part
of this sum in the case that
$m_i\ge 2$ for all $i$, and we prove a lemma about the growth rate
of this part of the sum.
The situation where some of the $m_i$ are fixed to
be $1$ will also use this growth lemma (applied to those $i$ for which
$m_i\ge 2$).

\begin{lemma}\label{le_growth_lemma}
Let $f,g$ be functions from $\naturals^s\to\complex$,
for some integer $s\ge 1$, and let $\bec\cert\in\naturals^s$;
also assume that $f$ is of growth $\beta$, 
and the function
$$
\omega(M) \eqdef
\max_{\bec\cert\cdot \mec m= M} |g(\mec m)|
$$
is (as a function of $M$) of growth rate $\rho$.
Consider
\begin{equation}\label{eq_restricted_dot_conv}
P(k)\eqdef
\sum_{\mec k\cdot\mec m = k,\ \mec k\ge\bec\cert,\ \mec m\ge \mec 2 } 
f(\mec k) g(\mec m),
\end{equation}
i.e., the certified dot convolution summed only over those $\mec m\ge \mec 2$
(where $\mec 2$ is the vector whose components are all $2$).
Then $P(k)$ is of growth 
$$
\max(\beta^{1/2},\rho,1).
$$
\end{lemma}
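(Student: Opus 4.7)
The plan is to break the sum defining $P(k)$ according to the value of $M=\bec\cert\cdot\mec m$. Since $\mec m\ge\mec 2$ and $\mec k\ge\bec\cert$ with $\mec k\cdot\mec m=k$, the index $M$ ranges over $2\bec\cert\cdot\mec 1\le M\le k$. For any $\epsilon>0$, the growth hypotheses give pointwise bounds $|f(\mec k)|\le C_\epsilon(\beta+\epsilon)^{\mec k\cdot\mec 1}$ and $|g(\mec m)|\le\omega(\bec\cert\cdot\mec m)\le C_\epsilon(\rho+\epsilon)^{\bec\cert\cdot\mec m}$. The structural observation that makes the whole argument work is this: for any fixed $\mec m\ge\mec 2$ with $\bec\cert\cdot\mec m=M$, every $\mec k\ge\bec\cert$ satisfying $\mec k\cdot\mec m=k$ also satisfies $\mec k\cdot\mec 1\le\bec\cert\cdot\mec 1+(k-M)/2$. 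To verify this, write $\mec k=\bec\cert+\mec k'$ with $\mec k'\ge\mec 0$; then $\mec k'\cdot\mec m=k-M$, and since every $m_i\ge 2$, maximizing the linear functional $\mec k'\cdot\mec 1$ subject to this linear constraint (by concentrating all weight on a coordinate with the smallest $m_i$) gives $\mec k'\cdot\mec 1\le(k-M)/2$.

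Combined with the elementary fact that the number of lattice points $(\mec k,\mec m)$ satisfying the constraints for a given $k$ and $M$ is polynomial in $k$ of degree depending only on the fixed $s$, these pointwise bounds yield
\[
|P(k)|\le \mathrm{poly}(k)\,(\beta+\epsilon)^{k/2}\sum_{M=2\bec\cert\cdot\mec 1}^{k}\Bigl(\frac{\rho+\epsilon}{\sqrt{\beta+\epsilon}}\Bigr)^{M},
\]
after absorbing the constant factor $(\beta+\epsilon)^{\bec\cert\cdot\mec 1}$ into the polynomial prefactor. A short case split on whether the common ratio is $\le 1$ or $>1$ bounds the geometric sum by $\mathrm{poly}(k)\max(1,(\rho+\epsilon)/\sqrt{\beta+\epsilon})^{k}$, so altogether
\[
|P(k)|\le \mathrm{poly}(k)\,\max\bigl(\sqrt{\beta+\epsilon},\,\rho+\epsilon,\,1\bigr)^{k}.
\]
Letting $\epsilon\to 0$ and absorbing the polynomial prefactor into a slightly larger exponential base then delivers the claimed growth rate $\max(\beta^{1/2},\rho,1)$.

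The only nonroutine ingredient is the structural bound $\mec k\cdot\mec 1\le\bec\cert\cdot\mec 1+(k-M)/2$: it couples the two constraints $\mec m\ge\mec 2$ and $\mec k\ge\bec\cert$ through the identity $\mec k\cdot\mec m=k$. The naive approach of using the decoupled inequalities $\mec k\cdot\mec 1\le k/2$ and $\bec\cert\cdot\mec m\le k$ independently would only deliver the loose product rate $\max(\sqrt{\beta},1)\cdot\max(\rho,1)$; stratifying by $M$ is precisely what replaces this product with a geometric series whose worst case is the maximum claimed in the lemma.
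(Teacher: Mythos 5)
Your argument rests on exactly the same structural inequality as the paper's: your observation $\mec k\cdot\mec 1\le\bec\cert\cdot\mec 1+(k-M)/2$ (for $\mec m\ge\mec 2$, $\mec k\ge\bec\cert$, $\mec k\cdot\mec m=k$, $M=\bec\cert\cdot\mec m$) is a restatement of the paper's per-coordinate inequality $2(k_i-\cert_i)+\cert_i m_i\le k_im_i$ summed over $i$ (which the paper writes as $K(\mec k)+M(\mec m)\le k$). The lattice-point count and the idea that $2^s$ decompositions by sign pattern are harmless are also shared. Your stratification by $M$ and the explicit geometric series is a slightly more transparent presentation of the same computation, but it is not a different proof.

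There is, however, a small gap. When you pass from $|f(\mec k)|\le C_\epsilon(\beta+\epsilon)^{\mec k\cdot\mec 1}$ to a bound of the form $C_\epsilon(\beta+\epsilon)^{\bec\cert\cdot\mec 1+(k-M)/2}$, you are using that $x\mapsto(\beta+\epsilon)^x$ is nondecreasing, which requires $\beta+\epsilon\ge 1$; when $\beta+\epsilon<1$ the direction of that inequality reverses and the displayed bound $\mathrm{poly}(k)(\beta+\epsilon)^{k/2}\sum_M((\rho+\epsilon)/\sqrt{\beta+\epsilon})^M$ is actually \emph{smaller} than what you are trying to bound. This is precisely where the extra $1$ in $\max(\beta^{1/2},\rho,1)$ comes from, and the paper handles it differently: it keeps the identity $\mec k\cdot\mec 1=K(\mec k)/2+\bec\cert\cdot\mec 1$ as an \emph{equality}, combines both factors into a single expression $\le C\,\max((\beta+\epsilon)^{1/2},\rho+\epsilon)^{K+M}$, and only then applies $K+M\le k$ together with the elementary fact $c^{K+M}\le\max(c,1)^k$, which is valid for all $c>0$. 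Your conclusion is still correct, and the fix is trivial (e.g., replace $\beta$ by $\max(\beta,1)$ at the outset, or treat $\beta+\epsilon<1$ separately using $(\beta+\epsilon)^{\mec k\cdot\mec 1}\le 1$), but as written the intermediate display fails when $\beta+\epsilon<1$, and you should note the monotonicity hypothesis explicitly.
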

We remark that the $1$ in the max function above is needed
in passing from
\eqref{eq_K_plus_M_in_exponent} to
\eqref{eq_K_plus_M_in_exponent_two}.
\begin{proof}
For any $\mec k\ge \bec\cert$ and $\mec m\ge \mec 2$
we have
$$
0 \le (k_i-\cert_i)(m_i-2)
$$
for any $i=1,\ldots,n$,
and hence
\begin{equation}\label{eq_ckm_ineq}
2(k_i-\cert_i)+\cert_i m_i\le k_im_i.
\end{equation}
So if we set 
$$
K(\mec k)  \eqdef \sum_i 2(k_i-\cert_i) , \quad
M(\mec m)\eqdef \bec\cert\cdot\mec m=\sum_i \cert_im_i,
$$
then summing over $i$ in \eqref{eq_ckm_ineq} yields
\begin{equation}\label{eq_K_M_sum_inequality}
K(\mec k) +M(\mec m)\le \sum_i k_im_i = k \ .
\end{equation} 
Keeping in mind that $\bec\cert$ is fixed, we see that for all
$\mec k,\mec m$ such that
\begin{equation}\label{eq_eligible_mc}
\mec k\cdot\mec m=k,\quad \mec k\ge \bec\cert, \quad \mec m\ge \mec 2
\end{equation}
all hold, we have that 
for any $\epsilon>0$:
\begin{enumerate}
\item we have $k_1+\cdots+k_s=K(\mec k)/2 + (\cert_1+\ldots+\cert_s)$;
\item
hence
\begin{align*}
|f(\mec k)| & \le C(\epsilon) (\beta+\epsilon)^{k_1+\cdots+k_s}
\\
& = C(\epsilon) (\beta+\epsilon)^{K(\mec k)/2 + (\cert_1+\ldots+\cert_s)} = 
C_1(\epsilon)(\beta+\epsilon)^{K(\mec k)/2}
\end{align*}
for a constant $C_1(\epsilon)=C(\epsilon)(\beta+\epsilon)^{(\cert_1+\cdots+
\cert_s)/2}$ (depending on $\epsilon,f,\beta,\bec\cert$);
\item the assumption on $\omega(M)$ implies that 
$$
|g(\mec m)| \le C_2(\epsilon)(\rho+\epsilon)^{M(\mec m)}
$$
for some constant $C_2=C_2(\epsilon)$;
\item 
and therefore
\begin{equation}\label{eq_K_plus_M_in_exponent}
|f(\mec k)|\,|g(\mec m)| 
\le C_3(\epsilon) \max\bigl( (\beta+\epsilon)^{1/2},\rho+\epsilon 
\bigr)^{K(\mec k)/2 + M(\mec m)};
\end{equation} 
in view of \eqref{eq_K_M_sum_inequality} 
we have
\begin{equation}\label{eq_K_plus_M_in_exponent_two}
|f(\mec k)|\,|g(\mec m)|
\le C_3(\epsilon) \max\bigl( (\beta+\epsilon)^{1/2},\rho+\epsilon,1 \bigr)^k
\end{equation} 
for a constant $C_3=C_3(\epsilon)$.
\end{enumerate}
But there are (crudely) at most $k^{2s}$ pairs $\mec k,\bec\cert$ satisfying
\eqref{eq_eligible_mc}, since each (of the $2s$ variables)
$k_i$ and $m_i$ is some integer
between $1$ and $k$.
Hence for any $\epsilon>0$ we have
$$
|P(k)| \le C_3(\epsilon) k^{2s} \max\bigl( (\beta+\epsilon)^{1/2},
\rho+\epsilon \bigr)^k,
$$
and therefore $P(k)$ is of growth rate at most
$\max( \beta^{1/2},\rho,1 )$.
\end{proof}

\subsection{Generating Functions and Convolutions}

We will be interested in how
polyexponentials and functions of bounded growth
behave under convolution.
We shall replace some of the proofs of \cite{friedman_random_graphs}
with simpler proofs based on generating functions.
The only mildly awkward point is that we are interested in
functions $\naturals\to\complex$, instead of the more standard
functions $\integers_{\ge 0}\to\complex$.
Let us first deal with the latter functions.

\begin{definition}\label{de_additive_convolution}
Let $P,Q$ be functions $\integers_{\ge 0}\to\complex$.
We define the {\em additive convolution of $P,Q$} to be
$$
(P \ast Q)(k) = \sum_{i=0}^k P(i) Q(k-i).
$$
If
$Q\from\integers_{\ge 0}\to \complex$ is any function, we define a formal
power series, called the {\em generating function of $Q$} via
$$
{\rm Gen}_Q(z) \eqdef \sum_{i=0}^\infty Q(i)z^i .
$$
By the {\em degree} of a non-zero polynomial $p=p(z)$ we mean the degree
of its leading term (so $\deg p=0$ if $p$ is a constant, and $\deg p$ is
undefined or $-\infty$
if $p(z)=0$ is the zero polynomial).
If $f\from\naturals\to\complex$, we define its {\em generating function}
to be that of $f$ where we view $f$ as a function on $\integers_{\ge 0}$
by setting $f(0)=0$.
\end{definition}
The following lemma is straightforward.
\begin{lemma}\label{le_additive_convolution}
Let $P,Q$ be functions $\integers_{\ge 0}\to\complex$.  Then
\begin{enumerate}
\item For any real $\rho>0$,
${\rm Gen}_P(z)$ is holomorphic in $|z|<1/\rho$
iff for any $\epsilon>0$, $|P(i)|<(\rho+\epsilon)^i$ for sufficiently
large $i$;
\item
\begin{equation}\label{eq_additive_conv}
{\rm Gen}_{P \ast Q}(z)
 = {\rm Gen}_P (z)
 {\rm Gen}_Q (z)  ;
\end{equation}
\item
for any nonzero $\beta\in\complex$ and any $d\in\integers_{\ge 0}$,
$P$ is of the form $P(j)=q(j)\beta^j$ for a polynomial $q$ of degree
$d$ iff
${\rm Gen}_P(z)=p(z)/(1-\beta z)^{d+1}$ in a neighbourhood of $z=0$
for a polynomial, $p$, of degree
at most $d$ with $p(1/\beta)\ne 0$.
\item
for any set of nonzero complex numbers $R$,
a nonzero function $P=P(j)$ is a sum of functions of the form $q(j)\beta^j$
with $q$ a polynomial and $\beta\in R$ iff
$$
{\rm Gen}_P(z)=p_1(z)/p_2(z)
$$
where $p_1,p_2$ are polynomials such that $\deg p_1<\deg p_2$ and
all roots of $p_2$ are reciprocals of elements of $R$.
\item
for any set of nonzero complex numbers $R$,
a nonzero function $P=P(j)$ is a sum of functions of the form $q(j)\beta^j$
with $q$ a polynomial and $\beta\in R$ iff the formal power series
${\rm Gen}_P(z)$ is a meromorphic function in the complex plane that
tends to zero as $z$ tends to infinity and whose poles are
reciprocals of $R$.
\end{enumerate}
\end{lemma}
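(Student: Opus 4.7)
The plan is to handle the five assertions in sequence, since each later part builds on earlier ones, and all five are classical facts about generating functions that I would merely tailor to this lemma's setup.

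For part~(1), I would invoke the Cauchy--Hadamard formula: the radius of convergence $R$ of $\mathrm{Gen}_P(z) = \sum_{i \geq 0} P(i) z^i$ equals $1/\limsup_i |P(i)|^{1/i}$. Holomorphy in $|z| < 1/\rho$ is equivalent to $R \geq 1/\rho$, which is equivalent to $\limsup_i |P(i)|^{1/i} \leq \rho$, and this is the precise translation of the stated ``$|P(i)|<(\rho+\epsilon)^i$ for all $\epsilon > 0$ and sufficiently large $i$'' condition. For part~(2), the identity is direct formal multiplication of power series: the coefficient of $z^k$ in $\mathrm{Gen}_P(z)\,\mathrm{Gen}_Q(z)$ is $\sum_{i+j=k} P(i) Q(j) = (P \ast Q)(k)$. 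No convergence issue arises since we are working in the ring of formal power series.

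For part~(3), the workhorse is the standard identity $\sum_{j \geq 0} \binom{j+k}{k} z^j = 1/(1-z)^{k+1}$. The sequence $\{\binom{j+k}{k}\}_{k=0}^{d}$ is a basis for polynomials in $j$ of degree $\leq d$, so I can write any such $q$ as $q(j) = \sum_{k=0}^d c_k \binom{j+k}{k}$, with $c_d \neq 0$ precisely when $\deg q = d$. Substituting $z \mapsto \beta z$ and summing $P(j) = q(j)\beta^j$ yields $\mathrm{Gen}_P(z) = p(z)/(1-\beta z)^{d+1}$ where $p(z) = \sum_{k=0}^d c_k (1-\beta z)^{d-k}$; evaluating at $z = 1/\beta$ telescopes to $p(1/\beta) = c_d$, which establishes the ``$p(1/\beta) \neq 0 \iff \deg q = d$'' equivalence. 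The converse direction re-expands any such $p(z)/(1-\beta z)^{d+1}$ via partial fractions over the single pole $z = 1/\beta$ and reads off $q$. For part~(4), partial fractions over the (distinct) reciprocals-of-$R$ poles decomposes $p_1/p_2$ into a finite sum of terms $\alpha/(1-\beta z)^{k+1}$ with $\beta \in R$; each such term, by part~(3), inverts to $q_\beta(j)\beta^j$ for a polynomial $q_\beta$. Conversely, starting from a sum $P(j) = \sum_{\beta \in R'} q_\beta(j)\beta^j$ with $R' \subseteq R$ finite, each summand gives generating function $p_\beta(z)/(1-\beta z)^{d_\beta+1}$ by part~(3); placing these over the common denominator $\prod_\beta (1-\beta z)^{d_\beta+1}$ yields the desired $p_1/p_2$, with the numerator degree strictly less than the denominator degree (because each individual term already has this property and the sum preserves it).

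Part~(5) is then a short translation between the rational-function characterization of part~(4) and a complex-analytic one. A rational function $p_1/p_2$ with $\deg p_1 < \deg p_2$ is manifestly meromorphic on $\mathbb{C}$, has its poles exactly at the zeros of $p_2$, and tends to $0$ as $|z| \to \infty$ because the degrees force decay of order at least $|z|^{-1}$. Conversely, a function meromorphic on all of $\mathbb{C}$ that tends to $0$ at infinity is bounded outside some large disk, hence has only finitely many poles; by standard complex analysis (subtracting off the principal parts at each pole gives an entire function bounded at infinity, so constant, in fact zero) it is a rational function, and the decay at infinity forces $\deg p_1 < \deg p_2$. The main subtlety I would flag --- and the only place I expect any real care is needed --- is the bookkeeping in part~(3) of the sharp equivalence ``$p(1/\beta) \neq 0$ versus $\deg q$ equals (rather than merely bounded by) $d$,'' because this is what later guarantees in parts~(4) and~(5) that the poles of $\mathrm{Gen}_P$ are \emph{exactly} the reciprocals of bases actually appearing in $P$; everything else is either a formal power series manipulation or an appeal to a textbook complex-analysis fact.
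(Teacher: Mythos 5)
Your proof is correct and follows essentially the same route as the paper: parts~(1) and~(2) are the classical Cauchy--Hadamard and formal-product facts, part~(3) reduces to the closed form for $\sum_j \binom{j+k}{k}z^j$ (the paper obtains the same identity by differentiating the geometric series $k$ times, which is just a different bookkeeping of the same calculation), part~(4) is partial fractions, and part~(5) translates~(4) into complex-analytic language. The only difference is in degree of detail: the paper dispatches~(5) with ``restatement of~(4)'' whereas you spell out the argument that a meromorphic function on $\complex$ vanishing at infinity is rational; and you take a bit more care with the sharp ``$p(1/\beta)\neq 0$ iff $\deg q = d$'' equivalence, which the paper leaves implicit.
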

\begin{proof}
Item~(1) follows from complex analysis, and Item~(2) is immediate.
Item~(3) follows from induction by differentiating $k$ times in $z$
the relation
$$
\sum_{j=0}^\infty (\beta z)^j = 1/(1-\beta z).
$$
Item~(4) follows from (3), and the standard
``partial fractions'' fact that
a rational function
$q(z)/p(z)$ where $p,q$ are polynomials with $\deg q<\deg p$ can be
written as a sum of $q_i(z)/(r_i-z)^{d_i}$ where each $q_i$ is a polynomial
of degree less than $d_i$ and the $r_i$ are the roots of $p$.
Item~(5) is a restatement of Item~(4).
\end{proof}

If $P\from\naturals\to\complex$, by the above definition we have
$$
{\rm Gen}_P(z)=\sum_{j=1}^\infty P(j)z^j = z \sum_{j=0}^\infty z^j P(j+1).
$$

\begin{corollary}
For a nonzero $P\from\naturals\to\complex$ and
$\beta_1,\ldots,\beta_k\in\complex\setminus\{0\}$ we have that
\begin{enumerate}
\item
$P$ is a polyexponential with bases $\beta_1,\ldots,\beta_k$
iff the power series ${\rm Gen}_P(z)$ defines
a meromorphic function in the complex plane
that vanishes at
$z=0$, that is bounded as $z$ tends to infinity, and whose poles are
reciprocals; and
\item
for any $\rho>0$ smaller than all $1/|\beta_i|$,
$P$ is a approximate
polyexponential with bases $\beta_1,\ldots,\beta_k$ and error
growth $\rho$
iff the power series ${\rm Gen}_P(z)$ defines
a meromorphic function in $|z|<1/\rho$ that vanishes at $z=0$
and each of whose poles
is of the form $1/\beta_j$ for some $j$.
\end{enumerate}
\end{corollary}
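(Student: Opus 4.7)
The plan is to derive the corollary from Lemma~\ref{le_additive_convolution}, items~(1), (4), and (5), the only novelty being that for $P\from\naturals\to\complex$ the generating function satisfies ${\rm Gen}_P(0)=0$, whereas the polyexponential formula $P(j)=\sum_i q_i(j)\beta_i^j$ may give a nonzero value when formally evaluated at $j=0$; this single discrepancy is what accounts for the \emph{bounded} (rather than \emph{tending to zero}) behaviour at infinity in item~(1).

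For the forward direction of item~(1), suppose $P$ is a polyexponential with nonzero bases $\beta_1,\ldots,\beta_k$.  Let $\tilde P\from\integers_{\ge 0}\to\complex$ denote the extension of $P$ given by the same polyexponential formula, so that $\tilde P(j)=P(j)$ for $j\ge 1$ and $c\eqdef\tilde P(0)$ is determined by that formula.  Lemma~\ref{le_additive_convolution}(4) applied to $\tilde P$ gives ${\rm Gen}_{\tilde P}(z)=p_1(z)/p_2(z)$ as a proper rational function with poles at reciprocals of (a subset of) the $\beta_i$; since $P(0)=0$ by convention we have
$$
{\rm Gen}_P(z) = {\rm Gen}_{\tilde P}(z) - c,
$$
which is again rational, meromorphic on $\complex$, vanishes at $z=0$ by construction, tends to $-c$ at infinity, and has the claimed poles.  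For the converse, a function meromorphic on all of $\complex$ and bounded at infinity is necessarily a rational function with $\deg p_1\le \deg p_2$; write it as a constant plus a proper rational function, apply Lemma~\ref{le_additive_convolution}(4) to the proper part to obtain a polyexponential $\tilde P$ with bases among $\{\beta_1,\ldots,\beta_k\}$, and observe that since $P(0)=0$ the corresponding $P$ and $\tilde P$ agree on all of $\naturals$.

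For item~(2), in one direction write $P=P_1+E$, where $P_1$ is a polyexponential with bases $\beta_1,\ldots,\beta_k$ and $E$ is of growth $\rho$.  By item~(1), ${\rm Gen}_{P_1}$ is a rational function whose poles lie among $\{1/\beta_j\}$, all of modulus strictly greater than $\rho$ by the hypothesis $\rho<1/|\beta_j|$; and by Lemma~\ref{le_additive_convolution}(1), ${\rm Gen}_E$ is holomorphic in $|z|<1/\rho$.  Adding these yields the desired meromorphic function on $|z|<1/\rho$.  For the converse, given ${\rm Gen}_P$ meromorphic in $|z|<1/\rho$ with poles only among the $1/\beta_j$, subtract the principal parts at these poles to obtain a decomposition ${\rm Gen}_P = r(z) + h(z)$, where $r$ is a rational function whose poles are at the $1/\beta_j$ and $h$ is holomorphic in $|z|<1/\rho$.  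A constant may be shifted between $r$ and $h$ so that both vanish at $z=0$; then item~(1) identifies $r$ as the generating function of a polyexponential $P_1$ with bases in $\{\beta_j\}$, while Lemma~\ref{le_additive_convolution}(1) says the Taylor coefficients $E$ of $h$ are of growth $\rho$, giving $P=P_1+E$ as required.

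The main (and rather mild) subtlety is the bookkeeping of the constant term, i.e., the simultaneous imposition of $P(0)=0$ and the possibility that the natural polyexponential extension gives a nonzero value at $j=0$; this is precisely what converts the ``tends to zero at infinity'' conclusion of Lemma~\ref{le_additive_convolution}(4) into the ``bounded at infinity'' condition of item~(1) above.  No new analytic input beyond Lemma~\ref{le_additive_convolution} is needed.
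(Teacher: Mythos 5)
Your proof is correct and is exactly the intended derivation: the paper states this corollary without proof, leaving it as an immediate consequence of Lemma~\ref{le_additive_convolution}, and the only nonobvious bookkeeping is precisely the one you isolate, namely the $P(0)=0$ convention for functions on $\naturals$, which shifts Lemma~\ref{le_additive_convolution}(4)'s ``tends to zero at infinity'' to the corollary's ``bounded at infinity.'' Your handling of both items---subtracting the constant $c=\tilde P(0)$ in item~(1), and in item~(2) splitting off principal parts and then renormalizing by a constant so that both pieces vanish at $z=0$---is sound, and the implicit appeal to the fact that a function meromorphic on $\complex$ and bounded near infinity has only finitely many poles (hence is rational with numerator degree at most denominator degree) is a standard and correct step.
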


\begin{corollary}\label{co_additive_convolution}
The set of polyexponentials whose bases lie in a set
$R\subset \complex$ is closed under convolution.
Similarly for the set of approximate polyexponentials
whose bases lie in a set $R\subset \complex$ and
error growth is $\rho$.
\end{corollary}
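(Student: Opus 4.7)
The plan is to derive Corollary~\ref{co_additive_convolution} directly from the preceding corollary (which characterizes polyexponentials and approximate polyexponentials by properties of their generating functions) together with Item~(2) of Lemma~\ref{le_additive_convolution} (which turns convolutions into products of generating functions). Nothing beyond routine complex-analytic bookkeeping should be required.

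Concretely, suppose $P,Q\from\naturals\to\complex$ are both polyexponentials with bases in a set $R\subset\complex\setminus\{0\}$. By Item~(1) of the preceding corollary, each of ${\rm Gen}_P(z)$ and ${\rm Gen}_Q(z)$ extends to a meromorphic function on all of $\complex$ that vanishes at $z=0$, remains bounded as $z\to\infty$, and whose poles are contained in the set $\{1/\beta:\beta\in R\}$. By Lemma~\ref{le_additive_convolution}(2), the generating function of $P\ast Q$ is the product ${\rm Gen}_P(z)\,{\rm Gen}_Q(z)$. This product is still meromorphic on $\complex$; its poles are again contained in $\{1/\beta:\beta\in R\}$ (the pole set of a product is contained in the union of the pole sets of the factors); it vanishes at $z=0$ (in fact to order $\ge 2$); and it stays bounded as $z\to\infty$, since it is the product of two functions that do. Applying Item~(1) of the preceding corollary in the reverse direction then yields that $P\ast Q$ is a polyexponential with bases in $R$.

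For the approximate case, the argument is verbatim the same with Item~(2) of the preceding corollary in place of Item~(1). If $P$ and $Q$ are approximate polyexponentials with bases in $R$ and error growth $\rho$, then ${\rm Gen}_P$ and ${\rm Gen}_Q$ are meromorphic in the disk $|z|<1/\rho$, each vanishes at $z=0$, and each has all its poles among the reciprocals of the bases in $R$. Their product is then meromorphic in the same disk, vanishes at $z=0$, and has poles among the same reciprocals; by Item~(2) of the preceding corollary this product is the generating function of an approximate polyexponential with bases in $R$ and error growth $\rho$. Invoking Lemma~\ref{le_additive_convolution}(2) identifies this with ${\rm Gen}_{P\ast Q}$, completing the proof.

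I do not expect any genuine obstacle: the generating function machinery developed in Lemma~\ref{le_additive_convolution} and the preceding corollary was precisely set up so that statements of this kind become routine, and the only mild points to verify are the behavior at $z=0$ (preserved under multiplication) and at $z=\infty$ (also preserved, since boundedness at infinity multiplies). The slight asymmetry that the functions live on $\naturals$ rather than $\integers_{\ge 0}$---handled by the convention $P(0)=0$ in Definition~\ref{de_additive_convolution}---causes no trouble, since $P(0)=Q(0)=0$ forces $(P\ast Q)(0)=0$ automatically.
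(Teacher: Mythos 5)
Your proof is correct and follows exactly the route the paper intends: the paper's own proof is just the one sentence ``This follows from \eqref{eq_additive_conv} and the above corollary,'' and you have simply spelled out the routine generating-function bookkeeping (products of meromorphic functions preserve the pole set, vanishing at the origin, and boundedness at infinity) that this one-liner elides.
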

\begin{proof}
This follows from \eqref{eq_additive_conv}
and the above corollary.
\end{proof}

[One can compare the above proof with the proof
of Lemma~2.14 of \cite{friedman_random_graphs} in the case
$m_1=\cdots=m_t=1$;
in this article we use generating functions to replace the
various individual computations and sublemmas in
\cite{friedman_random_graphs}.]

\subsection{Polyexponential Convolved with
a Bounded Function}

The last main result we need to prove Lemma~\ref{le_third_main_lemma}
involves convolving a polyexponential function with one of bounded
growth.

\begin{lemma}\label{le_poly_with_bounded}
Let $P,Q$ be functions $\integers_{\ge 0}\to\complex$, where
$P$ is a polyexponential function with bases in $R\subset \complex$,
and $Q$ is of growth $\rho$.
Then
$P\ast Q$ is an approximate polyexponential function, with bases in
$R$ and error base $\rho$.
\end{lemma}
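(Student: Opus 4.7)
\smallskip
\noindent\textbf{Proof plan for Lemma~\ref{le_poly_with_bounded}.}
The plan is to work entirely at the level of generating functions and use Lemma~\ref{le_additive_convolution} as a dictionary between sequences and formal power series. Since convolution of sequences corresponds to multiplication of generating functions, item~(2) of Lemma~\ref{le_additive_convolution} gives ${\rm Gen}_{P\ast Q}(z) = {\rm Gen}_P(z)\,{\rm Gen}_Q(z)$. Let $R_0\subset R$ be the finite set of (nonzero) bases of $P$; by item~(3) we may write
$$
{\rm Gen}_P(z) \;=\; \sum_{\beta\in R_0}\frac{p_\beta(z)}{(1-\beta z)^{d_\beta+1}},
$$
with each $p_\beta$ a polynomial of degree at most $d_\beta$ (the $\beta=0$ contribution, being finitely supported, is a polynomial in $z$ and can be absorbed into the error term below, so I ignore it). By item~(1), the hypothesis that $Q$ has growth $\rho$ is equivalent to ${\rm Gen}_Q(z)$ being holomorphic in the open disk $|z|<1/\rho$.

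Next, I partition $R_0=R_0^+\sqcup R_0^-$ according to whether $|\beta|>\rho$ or $|\beta|\le\rho$. For $\beta\in R_0^-$, the pole at $z=1/\beta$ lies on or outside the circle $|z|=1/\rho$, so the summand
$p_\beta(z)\,{\rm Gen}_Q(z)/(1-\beta z)^{d_\beta+1}$ is already holomorphic throughout $|z|<1/\rho$. For $\beta\in R_0^+$, the pole $z=1/\beta$ lies strictly inside this disk and $p_\beta(z)\,{\rm Gen}_Q(z)$ is holomorphic there, so I extract the principal part:
$$
\frac{p_\beta(z)\,{\rm Gen}_Q(z)}{(1-\beta z)^{d_\beta+1}}
\;=\; \frac{\tilde p_\beta(z)}{(1-\beta z)^{d_\beta+1}} \;+\; H_\beta(z),
$$
where $\tilde p_\beta$ is a polynomial of degree at most $d_\beta$, designed to match the principal part of the left-hand side at $z=1/\beta$, and $H_\beta$ is holomorphic at $z=1/\beta$ (and therefore throughout $|z|<1/\rho$, since that is the only pole of the summand inside the disk).

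Summing, I obtain
$$
{\rm Gen}_{P\ast Q}(z) \;=\; \sum_{\beta\in R_0^+}\frac{\tilde p_\beta(z)}{(1-\beta z)^{d_\beta+1}} \;+\; H(z),
$$
with $H$ holomorphic in $|z|<1/\rho$. By item~(3) of Lemma~\ref{le_additive_convolution} the first sum is the generating function of a polyexponential whose bases lie in $R_0^+\subset R$, and by item~(1) the remainder $H$ is the generating function of a function of growth $\rho$. This exhibits $P\ast Q$ as a polyexponential with bases in $R$ plus an error term of growth $\rho$, which is exactly the desired conclusion.

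The only point requiring any real care is the degree bound $\deg\tilde p_\beta\le d_\beta$ in the principal-part step, so that the extracted term lies in the exact form demanded by item~(3) of Lemma~\ref{le_additive_convolution} and hence inverts to a polyexponential with $\beta$ as a base. This is a routine consequence of the fact that the principal part of a meromorphic function at a pole of order $d_\beta+1$ is a polynomial of degree at most $d_\beta+1$ in $1/(z-1/\beta)$; rewriting in the variable $1-\beta z$ and clearing denominators yields a polynomial $\tilde p_\beta$ in $z$ of degree at most $d_\beta$. Beyond this bookkeeping, the argument is purely an application of items~(1)--(3) of Lemma~\ref{le_additive_convolution}, and replaces the direct sequence-level manipulations used for the analogous statement in \cite{friedman_random_graphs} by a cleaner complex-analytic decomposition.
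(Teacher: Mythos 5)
Your proof is correct and takes essentially the same generating-function route as the paper: the paper's proof is a two-sentence appeal to the unnumbered corollary preceding Corollary~\ref{co_additive_convolution} (the product $\mathrm{Gen}_P\cdot\mathrm{Gen}_Q$ is meromorphic in $|z|<1/\rho$ with poles at reciprocals of $R$, hence $P\ast Q$ is approximate polyexponential), while you unpack that corollary by explicitly splitting off principal parts at the poles inside the disk and invoking items~(1) and~(3) of Lemma~\ref{le_additive_convolution} directly. The extra detail is harmless and even clarifies a small point the paper glosses over, namely that only the $\beta\in R$ with $|\beta|>\rho$ contribute poles inside $|z|<1/\rho$.
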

\begin{proof}
We have that ${\rm Gen}_P(z)$ is a rational function with poles whose
reciprocals are in $R$, and ${\rm Gen}_Q(z)$ is holomorphic in
$|z|< 1/\rho$.  It follows that for all $|z|<1/\rho$,
$$
{\rm Gen}_{P\ast Q}(z)=
{\rm Gen}_P(z) {\rm Gen}_Q(z)
$$
is meromorphic with poles at the reciprocals of $R$.
\end{proof}

\subsection{Proof of Lemma~\ref{le_third_main_lemma}}

\begin{proof}[Proof of Lemma~\ref{le_third_main_lemma}.]
For each $\mec m\in\naturals^s$, the set of $i$ for which
$m_i=1$ determines a unique subset, $\cN=\cN(\mec m)$, of $\{1,\ldots,n\}$.
So for a fixed subset, $N$, of $\{1,\ldots,n\}$, let
$(f \star_{\ge \bec\cert,N} g)(k)$ denote the sum in 
\eqref{eq_cert_dot_conv_def}
restricted to those pairs $\mec k,\mec m$ for which $\cN(\mec m)=N$.
Clearly
\begin{equation}\label{eq_sum_over_N}
f \star_{\ge\bec\cert} g = \sum_{N\subset \{1,\ldots,n\}} f \star_{\ge\bec\cert,N} g,
\end{equation}
and so it suffices to prove the theorem with $f\star_{\ge\bec\cert} g$
replaced with $f\star_{\ge\bec\cert,N} g$.

Without loss of generality we may assume $N=\{1,\ldots,s\}$.
By the bilinearity of $f\star_{\mec\ge c,N} g$ in $f$, it suffices to
to prove the theorem for $f(\mec k)$ of the form
$$
f(\mec k) = f_1(k_1)f_2(k_2)\ldots f_s(k_s),
$$
where the $f_i$ are polyexponential functions with bases bounded
above in absolute value by $\beta$.
Setting
$$
\mec k'= (k_1,\ldots,k_{s'}),\quad
\mec k'' = (k_{s'+1},\ldots,k_s), \quad
\mec m'' = (m_{s'+1},\ldots,m_s),
$$
$$
\bec\cert' = (\cert_1,\ldots,\cert_{s'}),\quad 
\bec\cert'' = (\cert_{s'+1},\ldots,\cert_s),
$$
we have
$$
(f\star_{\ge\bec\cert,N} g)(k) =
\sum_{\mec k'\cdot\mec 1+\mec k''\cdot\mec m''=k,\ \mec m''\ge \mec 2,\ 
\mec k\ge\bec\cert}
f_1(k_1)f_2(k_2)\ldots f_s(k_s) g(\mec 1,\mec m''),
$$
where $\mec 1$ is the vector of $1$'s
(i.e., the vector $(m_1,\ldots,m_{s'})$, since these $m_i$ are fixed to 
to equal $1$).
In the above sum we can set
$$
K_1 = \mec k\cdot\mec 1,\quad K_2=\mec k''\cdot\mec m''
$$
and write
$$
(f\star_{\ge\bec\cert,N} g)(k)
=
\sum_{K_1+K_2=k} F_1(K_1) F_2(K_2) ,
$$
where
$$
F_1(K_1) = \sum_{\substack{\mec k'\cdot\mec 1=K_1 \\ \mec k'\ge \bec\cert'}}
f_1(k_1)\ldots f_{s'}(k_{s'}),
$$
and
$$
F_2(K_2) =
\sum_{\substack{ \mec k''\cdot\mec m''=K_2\\ \mec m''\ge \mec 2,\ \mec k''
\ge \bec\cert''}}
f_{s'+1}(k_{s'+1})\ldots f_s(k_s)g(\mec 1,\mec m'') .
$$
It follows that:
\begin{enumerate}
\item $f\star_{\ge\bec\cert,N} g = F_1 \ast F_2$;
\item By Lemma~\ref{le_growth_lemma}, 
$F_2$ is of growth bounded by
$$
\max\bigl( \beta^{1/2},\rho \bigr).
$$
\item $F_1$ is polyexponential with bases bounded 
(above in absolute value) by $\beta$ by the
following reasoning:
setting $\mec\ell=\mec k'-\bec\cert'+\mec 1$ we have
$$
F_1(K_1) = \sum_{\substack{\mec k'\cdot\mec 1=K_1 \\ \mec k'\ge \bec\cert'}}
f_1(k_1)\ldots f_{s'}(k_{s'})
$$
$$
= \sum_{\mec\ell\in\naturals^s} f_1(\ell_1-\cert_1+1)\ldots f_{s'}(\ell_{s'}-\cert_{s'}+1)
= (\widetilde f_1\ast\cdots\ast \widetilde f_{s'})(K_1-C),
$$
where $C=(\cert_1-1)+\cdots+(\cert_{s'}-1)$, and where
$\widetilde f_i(k_i)\eqdef f_i(k_i-\cert_i+1)$;
clearly the $\widetilde f_i$ are also polyexponential with bases
bounded by $\beta$, and so 
Lemma~\ref{le_additive_convolution} implies that 
$\widetilde f_1\ast\cdots\ast \widetilde f_{s'}$ is as well, and hence so
is the function of $K_1$ given by
$(\widetilde f_1\ast\cdots\ast \widetilde f_{s'})(K_1-C)$.
\end{enumerate}
Hence
Lemma~\ref{le_poly_with_bounded} implies that
$f\star_{\ge\bec\cert,N} g = F_1 \ast F_2$ is an approximate polyexponential
function with bases bounded by $\beta$ and error growth
$\max\bigl( \beta^{1/2},\rho,1 \bigr)$.
Summing over all $N\subset\{1,\ldots,n\}$, as in \eqref{eq_sum_over_N},
yields 
Lemma~\ref{le_third_main_lemma}.
\end{proof}

\subsection{Improvement to Lemma~\ref{le_third_main_lemma}}

We remark that the $\beta^{1/2}$ in
Lemma~\ref{le_third_main_lemma} can be improved to
$\beta^{1/N}$ for any positive integer, $N$, although
$(f\star_{\ge \bec\cert} g)(k)$ would conceivably need
to be a different polyexponential for
each congruence class of $k$ modulo the least common multiple
of $1,\ldots,N-1$;
see \cite{friedman_kohler}.  For example, the expected number of
closed, strictly non-backtracking walks of length $k$ in a random
$d$-regular graph whose trajectory is a
``simple loop'' is known  (\cite{friedman_alon}, Theorem~5.2) to equal
$$
p_0(k) + O(1/n) p_1(k),
$$
where
$$
p_0(k) = \sum_{k'|k} (d-1)^{k'} ,
$$
a sum over all $k'$ dividing $k$.  
So, for example, up to an error of growth
rate $(d-1)^{1/3}$, for $k$ even there is a term of order
$(d-1)^{k/2}$, and for $k$ odd there is no such term.
Similarly, up to an error of growth rate $(d-1)^{1/4}$, there are
terms of order $(d-1)^{k/2}$ and $(d-1)^{k/3}$ depending on the 
value of $k$ modulo $6$.
More generally, for the ``simple loop'' in the case where $B$ is $d$-regular,
the polyexponentials that appear up
to error $(d-1)^{1/N}$ clearly depend on which of $2,\ldots,N-1$
divide $k$.
In this series of papers we will need only the case $N=2$, 
i.e., that of Lemma~\ref{le_third_main_lemma},
where
the polyexponential in $k$ is the same for all integers $k$.

\section{Lemma on Regular Languages}
\label{se_regular}

The goal of this section is to generalize
Lemma~2.11 of \cite{friedman_random_graphs} (also quoted as
Lemma~5.10 in \cite{friedman_alon})---which is
our Corollary~\ref{co_regular_unweighted} below---needed to
prove Lemma~\ref{le_second_main_lemma}; the proof we give here
is much simpler and shorter than that in \cite{friedman_random_graphs}.

\begin{definition}
Let $\cA$ be an alphabet (i.e., a finite set).  For each
word $w\in\cA^*$ (i.e., a finite sequence of elements of $\cA$)
let ${\rm occur}(w)$ be the vector $\cA\to\integers_{\ge 0}$
taking $\sigma\in\cA$ to the number of occurrences of $\sigma$ in $w$.
If $\bec\beta\from \cA\to\complex$ is any function, we use the tensor
notation
$$
\bec\beta^{{\rm occur}(w)} \eqdef 
\prod_{\sigma\in\cA} \beta(\sigma)^{{\rm occur}(w)(\sigma)}
$$
\end{definition}

\begin{lemma}\label{le_regular}
Let $L$ be a regular language over an alphabet $\cA$, and $\cD$ a
finite automaton accepting $L$.  For each $k\in\integers_{\ge 0}$,
let $L_k=L\cap \cA^k$, i.e., the number of words in $L$ of length $k$.
For any $\bec\beta\from\cA\to\complex$, we have that
$$
f(k) \eqdef \sum_{w\in L_k} P(w), \quad\mbox{where}\quad
P(w) \eqdef \bec\beta^{{\rm occur}(w)}
$$
is a polyexponential whose bases are some subset of the eigenvalues
of the (square) matrix, $M$, indexed on the states of $\cD$, where
$M_{s_1,s_2}$ is the sum of $\beta(\sigma)$ over all $\sigma\in\cA$
which are transitions from $s_1$ to $s_2$.
\end{lemma}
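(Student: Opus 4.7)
The plan is to reduce the weighted word-counting problem to a matrix-power calculation on the transition matrix $M$, and then invoke the standard Jordan form analysis that underlies the definition of polyexponential.

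First, I would reduce to the case where $\cD$ is deterministic: if $\cD$ is nondeterministic, apply the subset construction to produce a deterministic automaton $\cD'$ accepting the same language $L$; each $\sigma \in \cA$ still induces a well-defined transition relation on states of $\cD'$, and one defines $M'$ by the same rule. (It is not essential that $\cD$ be minimal; only that each word in $L_k$ correspond to a unique path in $\cD'$ of length $k$.) Fix the initial state $s_0$ and the set $F$ of accepting states of $\cD'$.

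Next, I would set up the bijection between $L_k$ and accepting paths of length $k$ in $\cD'$. A word $w = \sigma_1 \sigma_2 \cdots \sigma_k \in L_k$ traces the unique path $s_0 \to s_1 \to \cdots \to s_k$ in $\cD'$ with $s_k \in F$, where $s_i$ is obtained from $s_{i-1}$ by the transition labelled $\sigma_i$. The weight of this word factors as
\[
\bec\beta^{{\rm occur}(w)} \;=\; \prod_{i=1}^{k} \beta(\sigma_i),
\]
so summing over all choices of the $\sigma_i$ that produce an accepting path gives exactly one factor of $M_{s_{i-1},s_i}$ at each step (by the definition of $M$, which sums $\beta(\sigma)$ over all letters inducing the transition). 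Consequently
\[
f(k) \;=\; \sum_{s_f \in F} (M^k)_{s_0, s_f} \;=\; e_{s_0}^{T} M^{k} \mathbf{1}_{F},
\]
where $e_{s_0}$ is the standard basis vector at $s_0$ and $\mathbf{1}_F$ is the indicator vector of $F$.

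Finally, I would invoke the Jordan decomposition of $M$: writing $M = P J P^{-1}$ with $J$ in Jordan form, each entry of $M^k$ is a $\complex$-linear combination of terms of the shape $\binom{k}{j}\lambda^k$ where $\lambda$ ranges over the eigenvalues of $M$ and $j$ is bounded by the size of the corresponding Jordan block. Hence each entry of $M^k$, and therefore $f(k)$ itself, is a polyexponential function of $k$ whose bases lie among the eigenvalues of $M$. The only subtlety is the zero eigenvalue: a Jordan block $J_0$ of eigenvalue $0$ of size $d$ satisfies $J_0^k = 0$ for $k \ge d$, so its contribution to $f(k)$ vanishes for all sufficiently large $k$, which agrees exactly with the convention adopted in the definition of polyexponential for the base $\mu = 0$. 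The main (very mild) obstacle is simply the initial passage to a deterministic automaton and keeping careful track of the weighting; the rest is the standard linear-algebraic fact that matrix powers produce polyexponentials in the exponent.
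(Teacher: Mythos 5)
Your proof is correct and takes essentially the same route as the paper's: write $f(k)=e_{s_0}^{T}M^{k}\mathbf{1}_{F}$ (the paper phrases this as a sum of entries $(M^{k})_{s_1,s_2}$ over accepting $s_2$) and then read off the polyexponential form of each fixed-entry function $k\mapsto (M^{k})_{s_1,s_2}$ from the Jordan canonical form of $M$. The one place you add detail is the explicit reduction to a deterministic automaton, which the paper's proof silently assumes (for nondeterministic $\cD$, $(M^{k})_{s_1,s_2}$ counts accepting \emph{paths} rather than accepted words); this is a worthwhile observation, but note that after determinization you are working with the new transition matrix $M'$ rather than the given $M$, so strictly the lemma's conclusion about $M$ requires $\cD$ to already be deterministic (or unambiguous), which is the intended reading.
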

\begin{proof}
Clearly for any $k$ we have that the $s_1,s_2$ entry of $M^k$ is the 
sum of $P(w)$ of all words of length $k$, $w$, taking $s_1$ to $s_2$.
Hence $L_k$ is the sum of entries $(M^k)_{s_1,s_2}$ where $s_1$ is 
the initial state of $\cD$, and $s_2$ is an accepting (i.e., final)
state of $\cD$.
By Jordan canonical form, each function $g(k)=(M^k)_{s_1,s_2}$ with
$s_1,s_2$ fixed is a polyexponential in the eigenvalues of $M$.
\end{proof}

\begin{corollary}\label{co_regular}
The conclusion of Lemma~\ref{le_regular} holds with $P$ replaced by
$$
P(w) = \prod_{\sigma} g_\sigma\bigl({\rm occur}(w)(\sigma)\bigr)
$$
where each $g_\sigma$ is a function of the form
$$
g_\sigma(x_\sigma) = 
x_\sigma(x_\sigma-1)\ldots (x_\sigma-\ell_\sigma+1) 
\beta_\sigma^{x_\sigma-\ell_\sigma}
$$
for $\beta_\sigma\in \complex$ and $\ell_\sigma\in\integers_{\ge 0}$.
\end{corollary}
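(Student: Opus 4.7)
The plan is to construct an enlarged weighted automaton $\tilde\cD$ whose weighted walk count equals $f(k)/\prod_\sigma \ell_\sigma!$, and whose transition matrix $\tilde M$ is block upper triangular with $M$ on each diagonal block; then the Jordan-form argument in the proof of Lemma~\ref{le_regular}, applied to $\tilde M$, will immediately yield the claim.

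First I would rewrite the summand via the identity
$$
g_\sigma(x) = \ell_\sigma! \binom{x}{\ell_\sigma} \beta_\sigma^{x-\ell_\sigma},
$$
so that $P(w)/\prod_\sigma\ell_\sigma!$ equals the sum, over all choices of a size-$\ell_\sigma$ subset of the positions of each $\sigma$ in $w$ (the ``marks''), of $\prod_\sigma \beta_\sigma^{n_\sigma(w)-\ell_\sigma}$, where $n_\sigma(w)\eqdef {\rm occur}(w)(\sigma)$. Next I would build $\tilde\cD$ with state set $\{(s,\bec j) : s \text{ a state of } \cD,\ \bec j \in \prod_{\sigma\in\cA}\{0,\ldots,\ell_\sigma\}\}$, initial state $(s_0,\bec 0)$, and accepting states $(s_f,\bec\ell)$ with $s_f$ accepting in $\cD$; for each transition $s\to s'$ on letter $\tau$ of $\cD$, I would place in $\tilde\cD$ a ``skip'' transition $(s,\bec j)\to(s',\bec j)$ of weight $\beta_\tau$ and, when $j_\tau<\ell_\tau$, a ``mark'' transition $(s,\bec j)\to(s',\bec j+e_\tau)$ of weight $1$.

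A direct check shows that walks in $\tilde\cD$ of length $k$ from $(s_0,\bec 0)$ to accepting states are in bijection with pairs $(w,S)$ where $w\in L_k$ and $S$ assigns to each $\sigma$ a size-$\ell_\sigma$ subset of the $\sigma$-positions of $w$, and that each such walk carries weight $\prod_\sigma\beta_\sigma^{n_\sigma(w)-\ell_\sigma}$; summing yields total weight $f(k)/\prod_\sigma\ell_\sigma!$. Ordering the states by $|\bec j|\eqdef\sum_\sigma j_\sigma$ (with ties broken arbitrarily), the matrix $\tilde M$ is block upper triangular because ``mark'' transitions strictly increase $|\bec j|$ while ``skip'' transitions preserve it, and each diagonal block, consisting of skip transitions at a fixed $\bec j$, is exactly the matrix $M$ of Lemma~\ref{le_regular}. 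Hence $\tilde M$ and $M$ share the same spectrum (as sets), and the Jordan-form argument of Lemma~\ref{le_regular} applied to $\tilde M$ shows that the sum of the relevant $(s_0,\bec 0)$-to-$(s_f,\bec\ell)$ entries of $\tilde M^k$ is a polyexponential in $k$ whose bases lie among the eigenvalues of $M$; multiplying by $\prod_\sigma \ell_\sigma!$ recovers $f(k)$.

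I expect the main obstacle to be the bookkeeping for the bijective verification that walks in $\tilde\cD$ correspond to (word, marks) pairs with exactly the right weight --- in particular, ensuring both ``skip'' and ``mark'' transitions are available (with their correct weights $\beta_\tau$ and $1$) when $j_\tau<\ell_\tau$, and only ``skip'' (with weight $\beta_\tau$) when $j_\tau=\ell_\tau$, so that nothing is over- or under-counted. Once this accounting is confirmed, the block upper triangular structure and the transfer of the eigenvalue argument from Lemma~\ref{le_regular} are routine.
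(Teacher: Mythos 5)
Your proof is correct, and it takes a genuinely different route from the paper's. The paper's proof is a one-line differentiation argument: noting that $g_\sigma(x) = \partial_{\beta_\sigma}^{\ell_\sigma}\bigl(\beta_\sigma^{x}\bigr)$, one recovers the new $f(k)$ by applying $\prod_\sigma\partial_{\beta_\sigma}^{\ell_\sigma}$ to the $f(k,\bec\beta)=\mec u^T M(\bec\beta)^k\mec v$ of Lemma~\ref{le_regular}, where the entries of $M(\bec\beta)$ are linear in $\bec\beta$. You instead construct, combinatorially, an explicit weighted automaton whose weighted walk count gives $f(k)/\prod_\sigma\ell_\sigma!$, and whose transition matrix $\tilde M$ is block upper triangular with copies of $M$ on the diagonal. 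The two are closely linked---your $\tilde M$ is exactly the matrix whose $k$-th power packages the iterated derivatives $\partial_{\bec\beta}^{\bec j}(M^k)/\bec j!$ as its blocks---but the execution differs. What your version buys is that it sidesteps a genuine subtlety that the paper's one-liner leaves implicit: after differentiating in $\bec\beta$ one must still argue that the result is polyexponential in $k$ with bases lying among the eigenvalues of $M(\bec\beta)$ \emph{at the given fixed value of $\bec\beta$}, which is not immediate from the termwise formula $\sum_\mu p_\mu(k)\mu^k$ since the $\mu$'s themselves depend on $\bec\beta$ and the Jordan structure can degenerate. Your block-triangular $\tilde M$ settles this directly by reducing to the already-proved Lemma~\ref{le_regular} for $\tilde M$, whose spectrum (as a set) equals that of $M$; the combinatorial bijection between $\tilde\cD$-walks and (word, mark-set) pairs supplies the identity $f(k)=\prod_\sigma\ell_\sigma!\cdot\bigl(\text{weighted walk count in }\tilde\cD\bigr)$ via $g_\sigma(x)=\ell_\sigma!\binom{x}{\ell_\sigma}\beta_\sigma^{x-\ell_\sigma}$. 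The price is a bit more bookkeeping, but in exchange you get a fully explicit and self-contained argument (and one that handles $\beta_\sigma=0$ and $\ell_\sigma=0$ uniformly without special cases).
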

\begin{proof}
Lemma~\ref{le_regular} proves this in the case $\ell_\sigma=0$ for
all $\sigma$.
Now partially differentiate this function $P(w)$
$\ell_\sigma$ times in the variable $x_\sigma$ for each $\sigma\in \cA$.
\end{proof}

The case where $\bec\beta=(1,\ldots,1)$ yields the following corollary.

\begin{corollary}\label{co_regular_unweighted}
The conclusion of Lemma~\ref{le_regular} holds for any function
$P(w)$ which is a polynomial in variables ${\rm occur}(w)$,
and in this case the bases of $f(k)$ are the eigenvalues of the
regular language.
\end{corollary}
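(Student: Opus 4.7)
The plan is to reduce to Corollary~\ref{co_regular} by exploiting the standard fact that any polynomial in a single variable $x$ can be written as a linear combination of falling factorials $x(x-1)\cdots(x-\ell+1)$ for $\ell=0,1,2,\ldots$.

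More precisely, I would first fix $\bec\beta = (1,\ldots,1)$ throughout, so that in Corollary~\ref{co_regular} every exponential factor $\beta_\sigma^{x_\sigma - \ell_\sigma}$ becomes $1$, and the allowed functions $g_\sigma$ become exactly the falling factorials $x_\sigma(x_\sigma-1)\cdots(x_\sigma - \ell_\sigma + 1)$. Since these falling factorials span (as $\ell_\sigma$ ranges over $\integers_{\ge 0}$) the space of univariate polynomials in $x_\sigma$, their products over $\sigma \in \cA$ span the space of multivariate polynomials in the variables $\{x_\sigma\}_{\sigma\in\cA}$.

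Given a general polynomial $P(w)$ in the variables ${\rm occur}(w)(\sigma)$, I would therefore expand
$$
P(w) = \sum_{j} \lambda_j \prod_{\sigma \in \cA} g_{\sigma,j}\bigl({\rm occur}(w)(\sigma)\bigr)
$$
as a finite linear combination of products of falling factorials, with scalar coefficients $\lambda_j \in \complex$. By linearity of $f(k) = \sum_{w\in L_k} P(w)$ in $P$, each term yields a function of $k$ to which Corollary~\ref{co_regular} applies (with $\bec\beta = \mec 1$), producing a polyexponential whose bases are a subset of the eigenvalues of the matrix $M$ associated to the accepting automaton $\cD$. A finite sum of polyexponentials with bases in a fixed set is again a polyexponential with bases in that set, yielding the claim.

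Finally, I would note that when $\bec\beta = \mec 1$, the matrix $M_{s_1,s_2}$ is simply the number of transitions from $s_1$ to $s_2$ in $\cD$, so the eigenvalues of $M$ are precisely the eigenvalues of the regular language $L$ in the sense defined in Subsection~\ref{su_B_ordered_strongly_alg}; this identifies the bases appearing in $f(k)$ with (a subset of) the eigenvalues of $L$. There is no serious obstacle here, since all the work has been done in Lemma~\ref{le_regular} and Corollary~\ref{co_regular}; the only minor point to be careful about is that the falling-factorial basis truly spans all polynomials, which is a standard and elementary linear-algebra fact.
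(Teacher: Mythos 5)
Your proof is correct and takes essentially the same route as the paper: the paper deduces this corollary in one line by setting $\bec\beta=(1,\ldots,1)$ in Corollary~\ref{co_regular}, leaving implicit the falling-factorial expansion of a general polynomial and the linearity of $f$ in $P$ that you spell out. The one slightly imprecise step---asserting the eigenvalues of the transition-count matrix $M$ are \emph{precisely} the eigenvalues of $L$, when for a non-minimal automaton $\cD$ they may form a strict superset---is equally glossed over in the paper, and is harmless since only the containment of the bases of $f(k)$ in a suitable eigenvalue set is ever used downstream.
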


We remark that some natural variants of the functions $f(k)$ in
the above lemma and corollaries are not polyexponential functions.
For example, if $f\from\integers_{\ge 0}^\cA\to\integers$
is the multivariate function
where $f(\mec x)$ is the number of words, $w$, in $L$ with 
${\rm occur}(w)=\mec x$, then $f(\mec x)$ is not generally polyexponential:
indeed, if $B$ has one vertex and two whole-loops, then for
fixed $s$ and large $t$,
$$
f(s,t) = 
\bigl( 1 + O_s(1/t) \bigr) \binom{s+t}{s} 2^{s-2} 2^t
= 2^t t^s \bigl( 2^{s-2}/s! + O_s(1/t) \bigr)
$$
which cannot be polyexponential, due to the $t^s$ term.
By contrast, Corollary~\ref{co_regular_unweighted} implies that
the sum of $f(s,t)$ over all $s+t=k$ is a polyexponential function of $k$.

\section{The Wording Summation Formula}
\label{se_wording_sum}

In this section we prove a lemma that is the main ingredient in the
proof of Lemma~\ref{le_second_main_lemma}.

\begin{definition}
Let $B$ be a graph and $T$ a graph.  For each $B$-wording, $W$, of $T$
we define the {\em matrix of $E_B,E_T$ occurrences in $W$}
to be the matrix
$X=X(W)\from E_B\times E_T\to\integers_{\ge 0}$ such that
$X(e_B,e_T)$ is the number of occurrences of $e_B$ in either direction
in $W(e_T')$ where $e_T'\in \Edir_T$ is either orientation of $e_T$.
\end{definition}

\begin{lemma}\label{le_most_of_the_work}
Let $B$ be a graph and $T^{\rm type}=(T,\cR)$ a $B$-type.  Let
$P=P(X)$ be a polynomial over a $E_B\times E_T$ matrix of indeterminates.
For each $\mec k\from E_T\to\naturals$, let
$$
{\rm Word}(\cR,\mec k) \eqdef \{ W \ | \ \mbox{$W$ is a $B$-wording of 
$T$ of edge-lengths $\mec k$} \}
$$
and let
$$
f(\mec k) \eqdef \sum_{W\in {\rm Word}(\cR,\mec k)} P\bigl( X(W) \bigr).
$$
Then $f$ is a polyexponential function of $\mec k$ whose bases are
some subset of the eigenvalues of $\cR$.
\end{lemma}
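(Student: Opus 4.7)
The plan is to factor the sum defining $f(\mec k)$ as a product of one-dimensional sums, one for each edge in a chosen orientation $\Eor_T\subset\Edir_T$, and then apply Corollary~\ref{co_regular_unweighted} to each factor.

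First I will factor the set of wordings. Condition~(3) in the definition of a $B$-type guarantees that any map $W\from\Edir_T\to(\Edir_B)^*$ with $W(e)\in\cR(e)$ and $W(\iota_T e)=W(e)^R$ for all $e\in\Edir_T$ is automatically a $B$-wording, while condition~(2) ensures that the reversal of a word in $\cR(e)$ lies in $\cR(\iota_T e)$. Fixing an orientation $\Eor_T\subset\Edir_T$ and identifying $E_T$ with $\Eor_T$ in the natural way, these two facts show that $W\mapsto(W(e))_{e\in\Eor_T}$ is a bijection
$$
{\rm Word}(\cR,\mec k)\;\longleftrightarrow\;\prod_{e\in\Eor_T}\bigl\{w\in\cR(e)\ \bigm|\ |w|=k(e)\bigr\},
$$
the remaining values of $W$ on $\Edir_T\setminus\Eor_T$ being recovered by reversal.

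Next I will factor the integrand. By linearity of the claim in $P$, I may assume $P$ is a single monomial $P(X)=\prod_{e\in E_T,\,e_B\in E_B}X(e_B,e)^{\ell(e_B,e)}$. Since $X(e_B,e)$ depends only on $W(e)$ for $e\in\Eor_T$---and in fact is a linear combination of the occurrence counts of the letters $e_B$ and $\iota_B e_B$ in the word $W(e)$---this monomial splits as $P(X(W))=\prod_{e\in\Eor_T}P_{e}(W(e))$, where each $P_e$ is a polynomial in the letter-occurrence counts ${\rm occur}(W(e))(\sigma)$ for $\sigma\in\Edir_B$. Combining this with the bijection above gives
$$
f(\mec k)=\prod_{e\in\Eor_T}f_e(k(e)),\qquad
f_e(j)\eqdef\sum_{\substack{w\in\cR(e)\\|w|=j}}P_e(w).
$$

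Finally, each $f_e(j)$ is of exactly the form handled by Corollary~\ref{co_regular_unweighted}, so it is a univariate polyexponential in $j$ whose bases are a subset of the eigenvalues of the regular language $\cR(e)$. A product over $e\in\Eor_T$ of univariate polyexponentials in distinct variables is a multivariate polyexponential in $\mec k$ whose scalar bases are drawn from the union of the univariate bases---by definition a subset of the eigenvalues of the $B$-type $T^{\rm type}$. The only step that requires real care is the first: the factorization of ${\rm Word}(\cR,\mec k)$ over $\Eor_T$ relies essentially on condition~(3) of the $B$-type definition, without which the vertex-compatibility constraints on the first letters of the $W(e)$ at each common tail $t_Te\in V_T$ would couple the choices and obstruct the clean product factorization.
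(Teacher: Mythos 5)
Your proof is correct and takes essentially the same route as the paper: reduce to a monomial $P$, factor ${\rm Word}(\cR,\mec k)$ over an orientation $\Eor_T$ into a product of per-edge length-restricted sets of words, and apply Corollary~\ref{co_regular_unweighted} edge by edge. You are somewhat more careful than the paper in two places that the paper glosses over with its tensor-power notation ${\rm occur}(w)^{{\rm Col}(M,e_T)}$: you explicitly note that the factorization of wordings is licensed by conditions~(2) and~(3) in the definition of a $B$-type, and you explicitly observe that $X(e_B,e)$ is only a linear function (not a coordinate) of ${\rm occur}(W(e))$, so that each factor $P_e$ is a polynomial in the letter-occurrence counts and hence falls within the scope of Corollary~\ref{co_regular_unweighted}.
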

\begin{proof}
By linearity,
it suffices to prove this in the case where $M$ is an $E_B\times E_T$
matrix of non-negative integers, and $P(X)=X^M$ using the ``tensor notation''
$$
X^M \eqdef \prod_{e_B\in E_B,\ e_T\in E_T} X(e_B,e_T)^{M(e_B,e_T)}.
$$
So fix such an $M$.

Choosing an orientation, $\Eor_T$, of $T$, there is a natural bijection
$$
{\rm Word}(\cR,\mec k) = 
\prod_{e_T\in E_T}  \cR(e_T)_{k(e_T)},
$$
where $\cR(e_T)_{k(e_T)}$ are the words of $\cR(e_T)$ of length $k(e_T)$.
Hence
$$
\sum_{W\in {\rm Word}(\cR,\mec k)} X^M
=
\prod_{e_T\in E_T} \sum_{w\in\cR(e_T)_{k(e_T)}} 
{\rm occur}(w)^{{\rm Col}(M,e_T)},
$$
where ${\rm Col}(M,e_T)$ is the column of $M$ corresponding to $e_T$.
Corollary~\ref{co_regular_unweighted} implies that each factor in the
right-hand-side product is a polyexponential function in $k(e_T)$, 
whose bases are some subset of the eigenvalues of $\cR(e_T)$.
Hence this product is a multivariate polyexponential function of
$\mec k$, whose bases are some subset of the eigenvalues of the type
$(T,\cR)$.
\end{proof}



\section{Proofs of Lemmas~\ref{le_precursor} and \ref{le_second_main_lemma} and
Theorem~\ref{th_main_certified_walks}}
\label{se_main_cert_walks}

In this section we finish the proofs of 
Lemmas~\ref{le_precursor} and \ref{le_second_main_lemma},
from which we easily prove Theorem~\ref{th_main_certified_walks}.

\begin{proof}[Proof of Lemma~\ref{le_precursor}]
In the range $\mec k\cdot\mec 1\le n^{1/2}/C$ (and for $n\in N$) 
we have
\begin{equation}\label{eq_our_goal_second_lemma}
f(\mec k,n)
=
\sum_{W\in T^{\rm type}[\mec k]}
\Bigl( p_0(\mec a_W)+\cdots+p_{r-1}(\mec a_W)/n^{r-1} +
O(1/n^r)
g(\mec k\cdot\mec 1) \Bigr)
\end{equation} 
for a function $g$ of growth $1$.
It follows that 
$\mec a_W$ equals the vector of row sums of the $E_B\times E_T$ matrix,
$X(W)$, of occurrences of $W$.  Hence each $p_i(\mec a_W)$ is a polynomial
in $X(W)$, so exchanging summations and applying
Lemma~\ref{le_most_of_the_work} show that
\eqref{eq_our_goal_second_lemma} equals
\begin{equation}\label{eq_asymptotic_for_second_lemma}
c_0(\mec k)+\cdots+c_{r-1}(\mec k)/n^{r-1} + 
O(1/n^r)
g(\mec k\cdot\mec 1) \bigl( \#  T^{\rm type}[\mec k] \bigr)
\end{equation} 
where the $c_i(\mec k)$ are polyexponentials whose bases are the
eigenvalues of $\cR$ (i.e., of $\cR(e)$ with $e$ varying over $\Edir_T$).  
Of course, if $p_i=0$ for any $i$, then the resulting $c_i$, which is 
a summation of values of $p_i$, vanishes.

Since $g$ is of growth $1$, it suffices to 
show that
\begin{equation}\label{eq_number_of_type_T_length_k}
\#  T^{\rm type}[\mec k] 
\end{equation} 
is a function of growth $\mu_1(B)$ in $\mec k$
(since clearly the product of a function of growth $\rho_1$ and
one of growth $\rho_2$ is a function of growth $\rho_1\rho_2$).

To bound \eqref{eq_number_of_type_T_length_k},
we use the crude estimate that the total number of non-backtracking 
walks of 
length $s$ in $B$ is bounded by the sum of all entries in 
$H_B^s$, which is a function $\tilde g(s)$ of growth $\mu_1(B)$ by
Jordan canonical form.  Hence
$$
\#  T^{\rm type}[\mec k] \le
\prod_{e\in E_T} \tilde g\bigl( k(e) \bigr)
$$
which for any $\epsilon>0$ is bounded by
$$
\prod_{e\in E_T} 
\Bigl( C(\epsilon) \bigl(\mu_1(B)+\epsilon\bigr)^{k(e)} \Bigr)
\le C(\epsilon)^{\# E_T} \bigl(\mu_1(B)+\epsilon\bigr)^{\mec k\cdot\mec 1} .
$$
Hence \eqref{eq_number_of_type_T_length_k} is of growth
$\mu_1(B)$.
\end{proof}

\begin{proof}[Proof of Lemma~\ref{le_second_main_lemma}]
By definition of an algebraic model, there exist
a finite number of ordered $B$-types 
$T_j^{\rm Otype}=(T^\og,\cR_j)$ such that
(1) every $S_\Bg^\og$ of homotopy type $T^\og$ is of ordered $B$-type
$T_j^{\rm Otype}$,
(2) $\cC_n(B)$ is algebraic when restricted to $T_j^{\rm Otype}$.
Hence
$$
F_1(\mec k,n) = 
\EE_{G\in\cC_n(B)}\Bigl[\subgr_B\bigl(T^\og,\mec k;G_\Bg)\Bigr]
$$
is the sum over all $j$ of
$$
\EE_{G\in\cC_n(B)}\Bigl[\subgr_B\bigl(T_j^{\rm Otype},\mec k;G_\Bg)\Bigr].
$$
Now we apply Lemma~\ref{le_precursor} to each $j$ and take the sum.
Since any finite sum of polyexponentials over a set of bases is
again such a function, and similarly for
functions of growth $\mu_1(B)$, 
Lemma~\ref{le_second_main_lemma} follows.
\end{proof}

\begin{proof}[Proof of Theorem~\ref{th_main_certified_walks}]
As mentioned in Section~\ref{se_defs_review},
we may assume $\mu_1(B)\ge 1$, and hence
$\nu\ge 1$
(otherwise $B$ contains no SNBC walks, so neither does any
$G\in\cC_n(B)$).

According to Lemma~\ref{le_length_mult}, we have
\eqref{eq_certified_type} is the certified dot convolution of
$F_1(\mec k,n)$ with $F_2(\mec m)$ given as in
\eqref{eq_define_F_1}.
According to Lemma~\ref{le_second_main_lemma}, in the
range $\mec k\cdot\mec 1\le n^{1/2}/C$, $F_1(\mec k,n)$ is a
sum of functions $c_i(\mec k)/n^i$, plus a function
$O(1/n^r)C_r(\mec k)$,
where each $c_i(\mec k)$ and $C_r(\mec k)$ are
polyexponential functions whose bases are a set of eigenvalues of 
$T^\og$ with respect to $\cC_n(B)$.
We easily see that the certified dot convolution
$f\star_{\ge\bec\xi}g$
is bilinear in $f$ and $g$;
the linearity in $f$ implies that
\eqref{eq_certified_type} equals the sum
$$
(c_0\star_{\ge \bec\xi} g)(k)+\cdots+
(c_{r-1}\star_{\ge \bec\xi} g)(k)/n^{r-1}
+ O(1/n^r) (C_r\star_{\ge \bec\xi} g)(k)
$$
Now we apply Lemmas~\ref{le_first_main_lemma} and \ref{le_third_main_lemma}
to conclude that each function
$(c_i\star_{\ge \bec\xi} g)(k)$
and $(C_r\star_{\ge \bec\xi} g)(k)$ are $(B,\nu)$-bounded functions,
whose bases are some subset of the eigenvalues of $\cR$.
\end{proof}

\section{Pairs and Their Homotopy Type}
\label{se_pairs_prelim}

In this section we introduce some preliminary notation and ideas that
we will use to adapt the proof of
Theorem~\ref{th_main_certified_walks}
to prove Theorem~\ref{th_main_certified_pairs}.
[Theorem~\ref{th_main_certified_pairs}, 
in the case where $\psi_B^\og$ is the empty graph,
reduces to Theorem~\ref{th_main_certified_walks}.]

Our proof notes that
\begin{equation}\label{eq_ord_subgr_times_walks}
\bigl(\#[\psi_\Bg^\og]\cap G_B \bigr) 
\snbc(T^\og;\ge\bec\xi,G_B,k) 
\end{equation} 
in \eqref{eq_subgraphs_times_walks} equals 
\begin{equation}\label{eq_pairs_of_interest}
\#\bigl\{ (w,\tilde S_\Bg^\og) \;\bigm|\;  
\mbox{$w\in\SNBC(T^\og;\ge\bec\xi,G_B,k)$ and
$\tilde S_\Bg\subset G_B$ with
$\tilde S_\Bg^\og\isom \psi_\Bg^\og$} \bigr\}  ,
\end{equation} 
which is, in other words, the number of pairs $(w,\tilde S_\Bg^\og)$ where 
\begin{enumerate}
\item $w$ is an SNBC walk of length $k$ in $G_\Bg$ such
that $S_\Bg^\og=\ViSu_\Bg^\og$ lies in
$$
\subgr(T^{\rm type},\ge\bec\xi,G_\Bg,k) =
\bigcup_{\mec k\ge\bec\xi} \ \ %
\bigcup_{[S_\Bg^\og]\in T^{\rm type}[\mec k]} 
[S_\Bg^\og]\cap G_\Bg
$$
and
\item
$\tilde S_\Bg^\og$
is an ordered $B$-graph such that $\tilde S_\Bg^\og$ is isomorphic to
$\psi_\Bg^\og$ and $\tilde S_\Bg$ is a subgraph of $G_\Bg$.
\end{enumerate}
The proof of Lemma~\ref{le_length_mult}, specifically
\eqref{eq_expected_walks_fixed_vs},
immediately implies the following formula.

\begin{lemma}\label{le_length_mult_one_G}
Let $B$ be a graph, $\psi_\Bg^\og$ an ordered $B$-graph,
and $T^\og$ an ordered graph.
Then for every $B$-graph, $G_\Bg$, 
and $\bec\xi\from E_T\to\naturals$ we have 
\eqref{eq_pairs_of_interest}
(and \eqref{eq_ord_subgr_times_walks}) equals
\begin{equation}\label{eq_length_mult_pairs}
\sum_{\mec k\ge\bec\xi}
\ \sum_{\mec k\cdot \mec m=k} 
\left( 
\sum_{[S_\Bg^\og]\in T^\og[\mec k]}
{\rm pairs}\bigl(G_\Bg,S_\Bg^\og,\psi_\Bg^\og\bigr)
\right)
\legal(T^\og,\mec m)
\end{equation} 
where
\begin{align}
\nonumber
& {\rm PAIRS}\bigl(G_\Bg,S_\Bg^\og,\psi_\Bg^\og\bigr) \\ 
\label{eq_pairs_formula_def}
& \eqdef 
\biggl\{
\Bigl( (G_1)_\Bg^\og,(G_2)_\Bg^\og \Bigr) 
\ \Bigm| \ %
(G_1)_\Bg,(G_2)_\Bg\subset G_\Bg,
\ (G_1)_\Bg^\og\isom S_\Bg^\og, 
\ (G_2)_\Bg^\og\isom \psi_\Bg^\og 
\biggr\}
\end{align}
and whose cardinality is denoted by replacing ``PAIRS'' with ``pairs.''
\end{lemma}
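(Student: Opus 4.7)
The plan is to reduce the claim to the pointwise (one-$G_\Bg$) analogue of equations \eqref{eq_expect_walks_isom_to_fixed}--\eqref{eq_individual_term_length_mult}, i.e., to repeat the proof of Lemma~\ref{le_length_mult} without taking expectations and then factor out the choice of $\tilde S_\Bg^\og$. First I would observe that \eqref{eq_pairs_formula_def} places no coupling constraint between $(G_1)_\Bg$ and $(G_2)_\Bg$ beyond their both being subgraphs of $G_\Bg$, so the pair count factors as
$$
{\rm pairs}(G_\Bg, S_\Bg^\og, \psi_\Bg^\og) = \bigl(\#[S_\Bg^\og]\cap G_\Bg\bigr)\cdot \bigl(\#[\psi_\Bg^\og]\cap G_\Bg\bigr);
$$
similarly, the set \eqref{eq_pairs_of_interest} is a Cartesian product, and its cardinality is $\snbc(T^\og;\ge\bec\xi,G_\Bg,k)\cdot(\#[\psi_\Bg^\og]\cap G_\Bg)$. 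Pulling the factor $\#[\psi_\Bg^\og]\cap G_\Bg$ out of the triple sum on both sides, the identity reduces to
$$
\snbc(T^\og;\ge\bec\xi,G_\Bg,k)=\sum_{\mec k\ge\bec\xi}\ \sum_{\mec k\cdot\mec m=k}\ \sum_{[S_\Bg^\og]\in T^\og[\mec k]}\bigl(\#[S_\Bg^\og]\cap G_\Bg\bigr)\,\legal(T^\og,\mec m).
$$

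For this reduced identity I would partition $\SNBC(T^\og;\ge\bec\xi,G_\Bg,k)$ by the isomorphism class of the ordered visited subgraph $\ViSu_\Bg^\og(w)$. Any such $w$ has homotopy type $T^\og$ and a well-defined edge-length vector $\mec k_S\ge\bec\xi$, so $[\ViSu_\Bg^\og(w)]\in T^\og[\mec k_S]$; moreover, by \eqref{eq_simplify_vis}, the number of walks with a prescribed $\ViSu_\Bg^\og(w)=\tilde S_\Bg^\og$ equals $\vis(\tilde S^\og,k)$ and depends only on the underlying ordered graph. This yields
$$
\snbc(T^\og;\ge\bec\xi,G_\Bg,k)=\sum_{\mec k\ge\bec\xi}\sum_{[S_\Bg^\og]\in T^\og[\mec k]}\bigl(\#[S_\Bg^\og]\cap G_\Bg\bigr)\,\vis(S^\og,k),
$$
and Lemma~\ref{le_fundamental_legal_formula} replaces $\vis(S^\og,k)$ by $\sum_{\mec k\cdot\mec m=k}\legal(T^\og,\mec m)$, producing the required formula.

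The proof presents no real obstacle; the entire content is the two observations that (i) the pair count factors as a product, since \eqref{eq_pairs_formula_def} imposes no coupling between the two ordered subgraphs, and (ii) the combinatorial computation behind Lemma~\ref{le_length_mult} is pointwise in $G_\Bg$, so removing the expectation is automatic. The only bookkeeping to verify is that the $\mec k$ stratifying $T^\og[\mec k]$ in the decomposition of walks is the same $\mec k$ appearing in Lemma~\ref{le_fundamental_legal_formula}, which it is: both refer to the edge-length vector of the homotopy reduction of $\ViSu_\Bg^\og(w)$ to $T^\og$.
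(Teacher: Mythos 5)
Your proof is correct and follows essentially the same route as the paper's (one-line) proof, which simply points to the pointwise content of \eqref{eq_expected_walks_fixed_vs}; you have spelled out the two observations the paper leaves implicit, namely that ${\rm PAIRS}$ has no coupling constraint and therefore factors as $\bigl(\#[S_\Bg^\og]\cap G_\Bg\bigr)\bigl(\#[\psi_\Bg^\og]\cap G_\Bg\bigr)$, and that the identity behind \eqref{eq_expected_walks_fixed_vs} holds for each fixed $G_\Bg$ before taking expectations. After pulling out the common factor $\#[\psi_\Bg^\og]\cap G_\Bg$, partitioning walks by the isomorphism class of $\ViSu_\Bg^\og(w)$ and invoking Lemma~\ref{le_fundamental_legal_formula} is exactly what the paper's citation of \eqref{eq_expected_walks_fixed_vs} encodes.
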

To prove Theorem~\ref{th_main_certified_pairs} we will take
$G\in\cC_n(B)$-expected values in
\eqref{eq_length_mult_pairs} and prove that it has
an appropriate asymptotic expansion in powers of $1/n$.
To do this we will define the notion of the {\em homotopy type 
of a pair} (or simply {\em pair homotopy type}), such that
(1) every pair is of a unique homotopy type, (2) for fixed $T^\og$ and
$\psi_\Bg^\og$ there are only finitely many possible
pair homotopy types, and (3) the methods of
proving Theorem~\ref{th_main_certified_walks} generalize
easily to prove that the $\cC_n(B)$ expected value of the number
of pairs of a given homotopy type and edge-length constraints
have the desired asymptotic expansions.

The point of this section is to define a reasonable notion
of the homotopy type of a pair and to develop some of its properties.

If $\psi_\Bg^\og$ above is the empty graph, denoted $\emptyset_\Bg^\og$,
then we will see in Subsection~\ref{su_pairs_empty_graph}
that
$$
{\rm PAIRS}\bigl(G_\Bg,S_\Bg^\og,\emptyset_\Bg^\og\bigr) 
\isom
[S_\Bg^\og]\cap G_\Bg.
$$
In this sense, Theorem~\ref{th_main_certified_pairs} will reduce
to Theorem~\ref{th_main_certified_walks} in the case where
$\psi_\Bg^\og=\emptyset_\Bg^\og$; to understand the technicalities
regarding pairs, it is helpful to check
that all aspects of our proof of Theorem~\ref{th_main_certified_pairs}
reduce to that of Theorem~\ref{th_main_certified_walks}
in the case $\psi_\Bg^\og=\emptyset_\Bg^\og$.

\subsection{Motivating Remarks on the Homotopy Type of a Pair
and the Proof of Theorem~\ref{th_main_certified_pairs}}

The methods of Section~\ref{se_proof_outline},
namely Lemma~\ref{le_first_main_lemma} and \ref{le_third_main_lemma}, 
show that to
prove Theorem~\ref{th_main_certified_pairs} it suffices to
prove the analog of Lemma~\ref{le_second_main_lemma} for the function
\begin{equation}\label{eq_F_3}
F_3(\mec k,n) \eqdef
\sum_{[S_\Bg^\og]\in T^{\rm type}[\mec k]}
{\rm pairs}\bigl(G_\Bg,S_\Bg^\og,\psi_\Bg^\og\bigr)
\end{equation}
Therefore we will define the {\em homotopy type} of
a pair $( (G_1)_\Bg^\og, (G_2)_\Bg^\og )$ in a way that
we can count all such pairs as we do for the homotopy type
of $(G_1)_\Bg^\og=\ViSu_\Bg^\og(w)$ for a walk, $w$, in a $B$-graph;
of course, the problem is that
$G_1$ and $G_2$ may intersect in a complicated fashion in
$G$.
However, 
since $G_2$ is isomorphic to a fixed graph, $\psi$, it has a bounded
number of vertices and edges; so our homotopy type will
remember all of $(G_2)_\Bg^\og$ in this homotopy type, since this
is a finite amount of information; otherwise we will suppress all
vertices in $(G_1)^\og$ that are not vertices of $G_2$ and would
otherwise be suppressed in the homotopy type of $(G_1)^\og$,
i.e., all beads of $G_1$ that are not the first or last 
vertices of $G_1$.
Of course, we need to remember how $G_1,G_2$ sit inside of $G$
and how they intersect.

In this section we give one way to do the above and define a reasonable
notion of the {\em homotopy type} of a {\em pair}; there are undoubtedly
a number of possible variants.
Roughly speaking our notion of {\em pair homotopy type} has the following
properties:
\begin{enumerate}
\item
as $(G_1)_\Bg^\og$ varies over all graphs of a given $B$-type, $T^{\rm type}$,
and $(G_2)_\Bg^\og$ varies over all graphs isomorphic to a fixed
graph $\psi_\Bg^\og$,
each pair $( (G_1)_\Bg^\og, (G_2)_\Bg^\og )$ must be of a unique
``pair homotopy type,'' $X^{\rm pairH}$,
for some {\em finite} number of possible pair homotopy types;
\item
each pair homotopy type $X^{\rm pairH}$ has an ``underlying graph''
$X^\og$ such that 
each pair $( (G_1)_\Bg^\og, (G_2)_\Bg^\og )$ of this pair homotopy type
induces a $B$-wording on $X$;
\item
the set of all such wordings on $X^\og$ can be written as a
disjoint union of 
``$B$-pair types,'' $(X^\og,\cR')$, where $(X^\og,\cR')$ is an
ordered $B$-type.
\end{enumerate}
In this case all ordered $B$-graphs of homotopy type $X^\og$ can be
written as a disjoint union of 
ordered $B$-types $(X^\og,\cR)$ where the
the model is algebraic; for a fixed $\cR$ we consider all the
$B$-pair types, $(X^\og,\cR')$ of $B$-pair types above, and apply
Theorem~\ref{th_main_certified_walks} to the $B$-type $(X^\og,\cR'\cap\cR)$
where $\cR'\cap\cR$ is the $B$-type on $X$ given as 
$$
(\cR'\cap\cR)(e) = \cR'(e)\cap \cR(e).
$$

Defining the homotopy type of a pair seems to require some care to
which we now attend.
Let us begin with some preliminary definitions and remarks.

\subsection{Packaged Pairs of Ordered Graphs}

\begin{definition}
By a {\em packaged pair of ordered graphs}, we mean
a triple $\cU=(U;G_1^\og,G_2^\og)$ such that $G_1,G_2\subset U$
and $U=G_1\cup G_2$; we refer to $U$ as the {\em union} of 
$\cU$.
By a {\em morphism} from $\cU$ to another packaged pair
$\cU'=(U';{G_1'}^\og,{G_2'}^\og)$ we mean a morphism of graphs
$U\to U'$ such that for $i=1,2$, the morphism restricted to $G_i$ yields
an order preserving morphism
$G_i^\og\to {G_i'}^\og$.
For any graph $B$ we similarly define a {\em packaged
pair of ordered $B$-graphs}
$\cU_\Bg=(U_\Bg;{G_1}_\Bg,{G_2}_\Bg)$.
\end{definition}
It is easy to see that any $\cU$ as above has only
one automorphism [indeed, such a morphism restrict to the identity on
$V_{G_i}\subset V_U$ for $i=1,2$ and hence be the identity on $V_U$,
similarly for $E_U$, and, by definition, be the identity on the 
orientation].
It follows that there is at most one isomorphism $\cU\to \cU'$
for any $\cU,\cU'$ as above.

It is instructive to note that, more generally,
one can similarly define a 
{\em packaged $k$-tuple of ordered graphs}, $(U;G_1^\og,\ldots,G_k^\og)$, for
any $k\in\naturals$, and define morphisms thereof, and
similarly each such $k$-tuple has only the identity morphism as an
automorphism.  The case $k=2$ is the above case, and the case $k=1$
is a tuple $(U;G_1^\og)$ with $U=G_1$, which is effectively
just a single ordered graph, $G_1^\og$.
We also mention that a packaged pair of ordered graphs,
$(U;G_1^\og,\emptyset_\Bg^\og)$ can be identified with a $1$-tuple
of ordered graphs $(U;G_1^\og)$, and similarly for pairs of 
ordered $B$-graphs.

\subsection{Wordings, Reduction, and Homotopy Type for Pairs}

\begin{definition}\label{de_union_of_a_pair}
Let $B$ be a graph and $G_\Bg$ be a $B$-graph.
By a {\em walk-subgraph pair}
in $G_\Bg$ we mean a pair $P=(w,\tilde S_\Bg^\og)$ such that
$w$ is an SNBC walk in $G_\Bg$ and $\tilde S_\Bg^\og$ is an ordered graph such
that $\tilde S_\Bg$ is a subgraph of $G_\Bg$.
By the {\em union} of $P$ we mean 
$U_\Bg=\ViSu_\Bg(w)\cup \tilde S_\Bg$ (i.e., the smallest subgraph of
$G_\Bg$ containing both $\ViSu_\Bg(w)$ and $\tilde S_\Bg$);
we refer to $(U_\Bg;\ViSu_\Bg^\og(w),\tilde S_\Bg^\og)$
as the packaged pair of ordered $B$-graphs {\em associated to} $P$.
\end{definition}

One could allow $w$ in the above definition to be merely non-backtracking;
however, we will only be interested in SNBC walks so we limit our
discussion to them.

Let us first define the reduction and homotopy type in terms of
suppression, and then give the alternative description in terms of 
wordings on a minimal pair of ordered graphs.

\begin{definition}\label{de_pair_reduction}
Let $B$ be a graph, and $G$ be a $B$-graph.
Let $P=(w,\tilde S_\Bg^\og)$ be a walk-subgraph
pair in a $B$-graph, $G_\Bg$;
let $S_\Bg^\og=\ViSu_\Bg^\og(w)$ and
$(U;S_\Bg^\og,\tilde S_\Bg^\og)$ be its associated packaged 
pair.
By the {\em reduction} of $(w,\tilde S_\Bg^\og)$ we mean
the pair of ordered graphs $(U/V';S^\og/V',\tilde S_\Bg^\og)$,
where $V'\subset V_U$ is the subset of vertices of $V_S$ that
(1) do not lie in $V_{\tilde S}$, (2) that are of degree two in
$S$ (or, equivalently, in $U$) and not incident upon any self-loop,
and (3) are not the first vertex of $S^\og$.
We say that $P$ {\em induces the $B$-wording} on $S$ that $w$ induces on
$S/V'$ and we say that the edge-lengths of $w$ are those of $w$
on $S/V'$.
\end{definition}

Given the above definitions, 
we have a bunch of formalities that result.
The first regards Lemma~\ref{le_length_mult_one_G}, in that
we may count pairs $P=(w,\tilde S_\Bg^\og)$ in a $B$-graph by 
grouping together all $w$ with the same ordered visited $B$-subgraph
$S_\Bg^\og$ and using
\eqref{eq_length_mult_pair}.

\begin{definition}
Let $B$ be a graph, and 
$(U_\Bg;S_\Bg^\og,\tilde S_\Bg^\og)$ be a packaged pair of $B$-graph
$S_\Bg^\og=\ViSu_\Bg^\og(w)$ for some SNBC walk, $w$, in $U_\Bg$.
Then we define the {\em union}, {\em reduction}, {\em induced wording},
and {\em edge lengths} of the tuple
$(S_\Bg,\tilde S_\Bg)$ to be that of $(w,\tilde S_\Bg)$ in
Definitions~\ref{de_union_of_a_pair}
and~\ref{de_pair_reduction} (which are clearly dependent only on
$S_\Bg^\og$ rather than the particular $w$ with $S_\Bg^\og=\ViSu_\Bg^\og(w)$).
\end{definition}

An example of the reduction of packaged pair
$(U_\Bg;S_\Bg^\og,\tilde S_\Bg^\og)$ is given in 
Figure~\ref{fi_pair_reduction}, and we now make some remarks
on this figure.

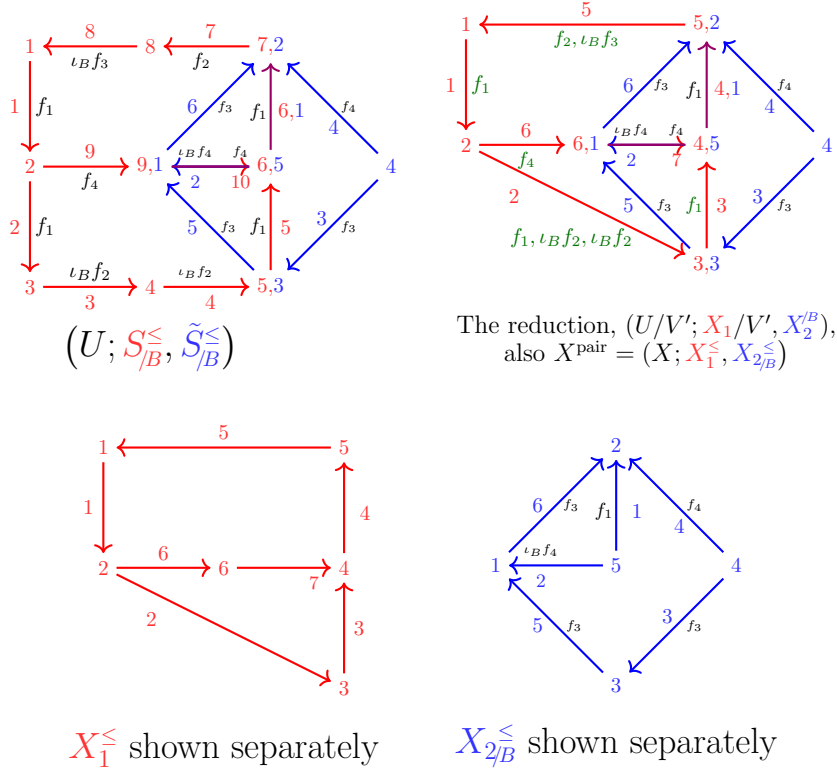
\begin{figure}
\begin{tikzpicture}[thick,scale=0.4, every node/.style={scale=0.8}]
\colorlet{bothcol}{red!60!blue}
\node at (0,-6){
  \huge 
  $\bigl(U;{\red S_\Bg^\og},{\blue \tilde S_\Bg^\og}\bigr)$
  } ;
\node at (-4,4)(1){\red $1$};
\node at (-4,0)(2){\red $2$};
\node at (-4,-4)(3){\red $3$};
\node at (0,-4)(4){\red $4$};
\node at (4,-4)(5){\red $5$,\blue $3$};
\node at (4,0)(6){\red $6$,\blue $5$};
\node at (4,4)(7){\red $7$,\blue $2$};
\node at (0,4)(8){\red $8$};
\node at (0,0)(9){\red $9$,\blue $1$};
\node at (8,0)(10){\blue $4$};
\draw[->,red] (1) -- (2) ;
\draw[->,red] (2) -- (3) ;
\draw[->,red] (3) -- (4) ;
\draw[->,red] (4) -- (5) ;
\draw[->,red] (5) -- (6) ;
\draw[->,bothcol] (6) -- (7) ;
\draw[->,red] (7) -- (8) ;
\draw[->,red] (9) -- (6) ;
\draw[->,blue] (6) -- (9) ;
\draw[bothcol] (6) -- (9) ;
\draw[->,red] (2) -- (9) ;
\draw[->,red] (8) -- (1) ;
\draw[->,blue] (10) -- (7) ;
\draw[->,blue] (10) -- (5) ;
\draw[->,blue] (9) -- (7) ;
\draw[->,blue] (5) -- (9) ;
\node at (-4.5,2){\red $1$} ;
\node at (-3.5,2){$f_1$} ;
\node at (-4.5,-2){\red $2$} ;
\node at (-3.5,-2){$f_1$} ;
\node at (-2,-4.5){\red $3$} ;
\node at (-2,-3.5){$\iota_B f_2$} ;
\node at (2,-4.5){\red $4$} ;
\node at (1.5,-3.5){\tiny$\iota_B f_2$} ;
\node at (4.5,-2){\red $5$} ;
\node at (3.6,-2){\small $f_1$} ;
\node at (4.7,1.8){\red $6$,\blue $1$} ;
\node at (3.6,1.8){\small $f_1$} ;
\node at (2,4.5){\red $7$} ;
\node at (1.7,3.5){\small $f_2$} ;
\node at (-2,4.5){\red $8$} ;
\node at (-2,3.5){\small $\iota_B f_3$} ;
\node at (-2,0.5){\red $9$} ;
\node at (-2,-0.5){\small $f_4$} ;
\node at (3,0.5){\tiny $f_4$} ;
\node at (3,-0.5){\small\red $10$} ;
\node at (1.5,0.5){\tiny $\iota_B f_4$} ;
\node at (1.5,-0.5){\small\blue $2$} ;
\node at (2.5,2){\tiny $f_3$} ;
\node at (1.4,2){\blue $6$} ;
\node at (2.6,-2){\tiny $f_3$} ;
\node at (1.4,-2){\blue $5$} ;
\node at (6.6,2){\tiny $f_4$} ;
\node at (6.1,1.3){\blue $4$} ;
\node at (6.6,-2){\tiny $f_3$} ;
\node at (5.7,-1.7){\blue $3$} ;
\end{tikzpicture}
\quad%
\begin{tikzpicture}[thick,scale=0.4, every node/.style={scale=0.8}]
\colorlet{bothcol}{red!60!blue}
\node at (2,-6){
  \Large The reduction, $(U/V';{\red X_1}/V',{\blue X_2^\Bg})$, };
\node at (2,-7){
  \Large also $X^{\rm pair}=(X;{\red X_1^\og},{\blue {X_2}_\Bg^\og}\bigr)$} ;
\node at (-4,4)(1){\red $1$};
\node at (-4,0)(2){\red $2$};
\node at (4,-4)(5){\red $3$,\blue $3$};
\node at (4,0)(6){\red $4$,\blue $5$};
\node at (4,4)(7){\red $5$,\blue $2$};
\node at (0,0)(9){\red $6$,\blue $1$};
\node at (8,0)(10){\blue $4$};
\draw[->,red] (1) -- (2) ;
\draw[->,red] (2) -- (5) ;
\draw[->,red] (5) -- (6) ;
\draw[->,bothcol] (6) -- (7) ;
\draw[->,red] (7) -- (1) ;
\draw[->,red] (9) -- (6) ;
\draw[->,blue] (6) -- (9) ;
\draw[bothcol] (6) -- (9) ;
\draw[->,red] (2) -- (9) ;
\draw[->,blue] (10) -- (7) ;
\draw[->,blue] (10) -- (5) ;
\draw[->,blue] (9) -- (7) ;
\draw[->,blue] (5) -- (9) ;
\node at (-4.5,2){\red $1$} ;
\node at (-3.5,2){\darkgreen $f_1$} ;
\node at (-2.4,-1.7){\red $2$} ;
\node at (-0.5,-3.1){\darkgreen $f_1,\iota_B f_2,\iota_B f_2$} ;
\node at (4.5,-2){\red $3$} ;
\node at (3.6,-2){\darkgreen \small $f_1$} ;
\node at (4.7,1.8){\red $4$,\blue $1$} ;
\node at (3.6,1.8){\small $f_1$} ;
\node at (0,4.5){\red $5$} ;
\node at (0,3.5){\darkgreen \small $f_2,\iota_B f_3$} ;
\node at (-2,0.5){\red $6$} ;
\node at (-2,-0.5){\darkgreen \small $f_4$} ;
\node at (3,0.5){\tiny $f_4$} ;
\node at (3,-0.5){\small\red $7$} ;
\node at (1.5,0.5){\tiny $\iota_B f_4$} ;
\node at (1.5,-0.5){\small\blue $2$} ;
\node at (2.5,2){\tiny $f_3$} ;
\node at (1.4,2){\blue $6$} ;
\node at (2.6,-2){\tiny $f_3$} ;
\node at (1.4,-2){\blue $5$} ;
\node at (6.6,2){\tiny $f_4$} ;
\node at (6.1,1.3){\blue $4$} ;
\node at (6.6,-2){\tiny $f_3$} ;
\node at (5.7,-1.7){\blue $3$} ;
\end{tikzpicture}
\vskip 0.2truein
\begin{tikzpicture}[thick,scale=0.4, every node/.style={scale=0.8}]
\colorlet{bothcol}{red!60!blue}
\node at (0,-6){
  \huge ${\red X_1^\og}$ shown separately} ;
\node at (-4,4)(1){\red $1$};
\node at (-4,0)(2){\red $2$};
\node at (4,-4)(5){\red $3$};
\node at (4,0)(6){\red $4$};
\node at (4,4)(7){\red $5$};
\node at (0,0)(9){\red $6$};
\draw[->,red] (1) -- (2) ;
\draw[->,red] (2) -- (5) ;
\draw[->,red] (5) -- (6) ;
\draw[->,red] (6) -- (7) ;
\draw[->,red] (7) -- (1) ;
\draw[->,red] (9) -- (6) ;
\draw[->,red] (2) -- (9) ;
\node at (-4.5,2){\red $1$} ;
\node at (-2.4,-1.7){\red $2$} ;
\node at (4.5,-2){\red $3$} ;
\node at (4.7,1.8){\red $4$} ;
\node at (0,4.5){\red $5$} ;
\node at (-2,0.5){\red $6$} ;
\node at (3,-0.5){\small\red $7$} ;
\end{tikzpicture}
\quad\quad%
\begin{tikzpicture}[thick,scale=0.4, every node/.style={scale=0.8}]
\node at (4,-6){
  \huge $\blue{{X_2}_\Bg^\og}$ shown separately} ;
\node at (4,-4)(5){\blue $3$};
\node at (4,0)(6){\blue $5$};
\node at (4,4)(7){\blue $2$};
\node at (0,0)(9){\blue $1$};
\node at (8,0)(10){\blue $4$};
\draw[->,blue] (6) -- (7) ;
\draw[->,blue] (6) -- (9) ;
\draw[->,blue] (10) -- (7) ;
\draw[->,blue] (10) -- (5) ;
\draw[->,blue] (9) -- (7) ;
\draw[->,blue] (5) -- (9) ;
\node at (4.7,1.8){\blue $1$} ;
\node at (3.6,1.8){\small $f_1$} ;
\node at (1.5,0.5){\tiny $\iota_B f_4$} ;
\node at (1.5,-0.5){\small\blue $2$} ;
\node at (2.5,2){\tiny $f_3$} ;
\node at (1.4,2){\blue $6$} ;
\node at (2.6,-2){\tiny $f_3$} ;
\node at (1.4,-2){\blue $5$} ;
\node at (6.6,2){\tiny $f_4$} ;
\node at (6.1,1.3){\blue $4$} ;
\node at (6.6,-2){\tiny $f_3$} ;
\node at (5.7,-1.7){\blue $3$} ;
\end{tikzpicture}
\caption{A packaged pair and its homotopy type}
\label{fi_pair_reduction}
\end{figure}

Figure~\ref{fi_pair_reduction} depicts a $B$-graph
$S_\Bg^\og$ (in red) with $9$ vertices and $10$ edges, whose ordering
is depicted (in red) by numbering the vertices, the edges, and showing
the orientation of each edge; $S_\Bg^\og$ is necessarily ordered
as the first encountered ordering of an SNBC walk.
We do not describe $B$ in the figure, but $B$ has four directed
edges $f_1,\ldots,f_4$ 
which are whole-loops about a single vertex;
$B$ could have additional vertices and edges;
the structure map $S\to B$ is indicated by writing (in black)
an $f_i$ or $\iota_B f_i$.
By contrast, $\tilde S_\Bg^\og$ (depicted in blue)
has an arbitrary ordering.
$S$ and $\tilde S$ share two edges (depicted in purple): 
the horizontal edge they
share is ordered differently: in $S_\Bg^\og$, the $10$-th edge
is traversed from its $9$-th vertex to its $6$-th vertex,
whereas this same edge is oriented in the other direction in
$\tilde S_\Bg^\og$;
the vertical edge they share is oriented in the same direction.
In the picture of $(U;S_\Bg^\og,\tilde S_\Bg^\og)$ we have
kept the wording in dark green (see below) that $S_\Bg^\og$ induces on
the directed edges that lie only in $S_\Bg^\og$ and not
$\tilde S_\Bg^\og$, even though $X_1^\og$ does not include this
data in the pair type $X^{\rm pair}=(X;X_1^\og,{X_2}_\Bg^\og)$.
Note that $X_1^\og$ has three vertices of degree two, namely its first
vertex and two vertices that it shares with $X_2$.

Of course, we treat $S_\Bg^\og$ very differently from
$\tilde S_\Bg^\og$ in defining the pair homotopy type of
a packaged pair $(U;S_\Bg^\og,\tilde S_\Bg^\og)$ because of our
application: namely we apply this notion to the situation 
where $\tilde S_\Bg^\og$ is in a fixed isomorphism class of
ordered $B$-graphs, while $S_\Bg^\og=\ViSu(w)$ for an SNBC walk
varying over an entire homotopy class (or perhaps an entire ordered
$B$-type).

Now we say what we mean by a {\em pair homotopy type} and for
walk-subgraph pairs or packaged pairs of $B$-graphs to be
{\em of} such a pair homotopy type.

\begin{definition}
Let $B$ be a graph.
By a {\em pair homotopy type (over $B$)} we mean any triple 
$X^{\rm pairH}=(X;X_1^\og,{X_2}^\og_\Bg)$ where
$(X;X_1^\og,X_2^\og)$ is a packaged pair of ordered graphs and 
$X_2$ is endowed
with the structure of a $B$-graph.
We say that $X^{\rm pairH}$ is {\em isomorphic} to another pair homotopy type
$Y^{\rm pairH}=(Y;Y_1^\og,{Y_2}^\og_\Bg)$ if
there exists an isomorphism of packaged pairs
from $(X;X_1^\og,X_2^\og)$
to $(Y;Y_1^\og,Y_2^\og)$ such that this unique isomorphism,
given by the isomorphism $X\to Y$, restricts to an isomorphism
$X_2\to Y_2$ that respects their $B$-graph structure
(i.e., is an isomorphism $(X_2)_\Bg\to (Y_2)_\Bg$).
We say that a walk-subgraph pair in a $B$-graph 
is {\em of homotopy type $X^{\rm pairH}$} if its reduction is
isomorphic to $X^{\rm pairH}$, and similarly for 
packaged pair $(U_\Bg;S_\Bg^\og,\tilde S_\Bg^\og)$.
\end{definition}

Next we need an abstract notion of a $B$-wording and of a $B$-type
for pair homotopy types (over $B$).
The point is that for $X^{\rm pairH}=(X;X_1^\og,{X_2}^\og_\Bg)$,
the $B$-structure on $X_2$ is determined; hence these notions
reduce to the analogous notions in $X_1$, provided that they
are compatible on the $\Edir_{X_1}\cap \Edir_{X_2}$.

\begin{definition}
Let $B$ be a graph, and
$X^{\rm pairH}=(X;X_1^\og,{X_2}^\og_\Bg)$ a pair homotopy type over $B$.
By a {\em $B$-wording} on $X^{\rm pair}$ we mean a $B$-wording
$W$ on $X_1$ such that if $e\in\Edir_{X_1}\cap \Edir_{X_2}$ then
$W(e)$ is the one-letter word $e_B\in\Edir_B$ lying under $e$ in the
structure map ${X_2}_\Bg$.
The {\em edge-lengths} of a $B$-wording are its edge-lengths on $X_1$.
By a {\em $B$-type} on $X^{\rm pair}$ we mean a $B$-type,
$(X_1,\cR)$, on $X_1$, such that
if $e\in\Edir_{X_1}\cap \Edir_{X_2}$ then
$\cR(e)$ consists of one word, namely $e_B$ as above.
\end{definition}

Figure~\ref{fi_pair_reduction} illustrates the wording that
$S_\Bg^\og$ of the packaged pair
$(U_\Bg;S_\Bg^\og,\tilde S_\Bg^\og)$ induces $X_1$.
[The wording on the edges in $\tilde S_\Bg^\og$ are just the
one-letter words given by the $B$-structure $\tilde S\to B$, which
we remember as part of the data of $X^{\rm pair}$.]

It will be convenient to speak of edge-lengths and wordings
of a pair $(X;X_1^\og,{X_2}^\og_\Bg)$ as living on $X$ as opposed to
$X_1$, since ${X_2}_\Bg$ determines the edge-lengths and wordings of
each directed edge in $X$ that lies in $X_2$.
Let us formalize.

\begin{definition}
Let $B$ be a graph, and
$X^{\rm pairH}=(X;X_1^\og,{X_2}^\og_\Bg)$ a pair homotopy type over $B$.
We will identify a set of edge-lengths $\mec K\from E_{X_1}\to\naturals$
as a function $\mec K\from E_X\to\naturals$ by setting extending it via
$K(e)=1$ for $e\in E_{X_2}$.
If $W$ is a $B$-wording (respectively, $(X_1,\cR)$ a
$B$-type) on $X^{\rm pair}$, we will extend it as a $B$-wording
(respectively, $B$-type) on $X$ by setting
by setting for each $e\in\Edir_{X_2}$
the value $W(e)=e_B$ where $e_B$ is the edge under $e$ in the
$B$-structure map $X_2\to B$ (respectively, the language consisting
of the single, one-letter word $e_B$).
\end{definition}

\subsection{Pairs Involving the Empty Graph}
\label{su_pairs_empty_graph}

If a pair homotopy type has
$X^{\rm pairH}=(X;X_1^\og,{X_2}^\og_\Bg)$ has
$X_2$ equal to the {\em empty graph}, then all the definitions
in this section reduce to those for homotopy types of walks.
We carefully explain this, since we
believe one can get (mostly) good intuition for
the definitions in this section by considering this special
case.

(The reader who dislikes empty sets and empty graphs can, of
course, skip this subsection.)

There is a unique {\em empty graph}, whose vertex and directed edge sets
are the empty set, $\emptyset$,
and $h,t,\iota$ are the unique maps $\emptyset\to\emptyset$;
this graph has a unique $B$-structure and ordering, and we use
$\emptyset_\Bg^\og$ to denote this (unique) empty ordered $B$-graph.

To each SNBC walk, $w$, in a $B$-graph, we associate
the pair $(w,\emptyset_\Bg^\og)$.  All definitions in this section
regarding pairs $(w,\tilde S_\Bg^\og)$ with 
$\tilde S_\Bg^\og=\emptyset_\Bg^\og$ reduce to the corresponding
definition on $w$.

For example, the 
homotopy type of $(w,\emptyset_\Bg^\og)$ is that of a pair 
$X^{\rm pairH}=(X;X_1^\og,{X_2}^\og_\Bg)$ where 
${X_2}^\og_\Bg=\emptyset_\Bg^\og$, and the knowledge of
a pair homotopy type $(X;X_1^\og,\emptyset_\Bg^\og)$ is equivalent
to knowing $X_1^\og$ and $X$ where $X$ is the union of $X_1$ and the
empty graph.
Therefore the pair homotopy type of $(w,\emptyset_\Bg^\og)$ is equivalent
to knowing the 1-tuple of ordered graphs $(X;X_1^\og)$ which is equivalent
to knowing $X_1^\og$, which
is just the homotopy type of $w$.

Note also that Theorem~\ref{th_main_certified_walks} is 
special case of
Theorem~\ref{th_main_certified_pairs}, thanks to the empty graph:
indeed, for any $B$-graph, $G_\Bg$, we easily unwind the definitions to
see that
$[\emptyset_\Bg^\og]\cap G_\Bg$ consists of a single element.
Hence for $\psi_\Bg^\og=\emptyset_\Bg^\og$,
Theorem~\ref{th_main_certified_pairs} reduces to
Theorem~\ref{th_main_certified_walks}.

\subsection{The Homotopy Type of the Walk of a Pair}

In view of \eqref{eq_pairs_of_interest},
Theorem~\ref{th_main_certified_walks} concerns the 
expected number of walk-subgraph
pairs $(w,\tilde S_\Bg^\og)$ in a $G\in\cC_n(B)$ such that
\begin{equation}\label{eq_walk_subgraph_conditions}
w\in\SNBC(T^\og;\ge\bec\xi,G_B,k), \quad
\tilde S_\Bg^\og\isom \psi_\Bg^\og.
\end{equation} 
We now express the conditions in \eqref{eq_walk_subgraph_conditions}
as conditions in terms of the pair homotopy type
$X^{\rm pairH}=(X;X_1^\og,\psi^\og_\Bg)$ of $(w,\tilde S_\Bg^\og)$.

The main point is evident once we carefully keep track of things:
the homotopy type of $X_1$ and $w$ must be
the same, say $T^\og$, and the edge-lengths and wordings that
$w$ induces on $T^\og$ can be read off from those that
$(w,\tilde S_\Bg^\og)$ induces on $X^{\rm pair}$ (but not vice versa).
An example is given in Figure~\ref{fi_pair_reduction}:
there the vertices of $X_1^\og$ (in red) numbered $3,5,6$ are beads, which
are not suppressed in $S_1/V'$,
since they lie also in $X_2$; once we suppress these
vertices we keep only the vertices $1,2,4$, and we see that the homotopy
type of $S^\og$ (or of an SNBC walk, $w$, such that
$S^\og=\ViSu^\og(w)$) is $T^\og$ where $T$ has the homotopy type
of a theta graph and the first vertex of $T^\og$ lies in the middle
of one of the edges that forms the theta.

\begin{lemma}\label{le_homo_type_walk_pair}
Let $B$ be a graph, $T^\og$ an ordered graph, and 
$\psi_\Bg^\og$ an ordered $B$-graph.
Let $(w,\tilde S_\Bg^\og)$ be a walk-subgraph pair in
a $B$-graph, $G_\Bg$, and let $S_\Bg^\og=\ViSu_\Bg^\og(w)$.
Then the following are equivalent:
\begin{enumerate}
\item we have
$$
w\in\SNBC(T^\og;\ge\bec\xi,G_B,k), \quad
\tilde S_\Bg^\og\isom \psi_\Bg^\og,
$$
and
\item
$(w,\tilde S_\Bg^\og)$ is of
pair homotopy type
$X^{\rm pairH}=(X;X_1^\og,\psi^\og_\Bg)$ for some ordered graph
$X_1^\og$ of homotopy type $T^\og$
and some $X$ containing $X_1,\psi$ as subgraphs.
\end{enumerate}
Moreover, if these conditions hold and we fix such an $X^{\rm pair}$,
then:
\begin{enumerate}
\item
if $\mec k\from E_T\to\naturals$ are the edge-lengths that
$S_\Bg^\og$ induces on
$T^\og$, and $\mec K\from E_{X_1}\to\naturals$ are the edge-lengths that
$(w,\tilde S_\Bg^\og)$ induces on $X^{\rm pair}$,
then for each $e_T\in E_T$ we have
\begin{equation}\label{eq_edge_lengths_relation}
k(e_T) = \sum_{e_X\in e_T} K(e_X),
\end{equation} 
where $e_X\in e_T$ means that the path in $T^\og$ corresponding
to $e_T$ contains $e_X$ as one of its edges; similarly
\item
if $W_T$ is the wording $\Edir_T\to{\rm NBWALKS}(B)$ that $w$
induces on $T^\og$, and
$W_X$ the wording $\Edir_X\to{\rm NBWALKS}(B)$ that $(w,\tilde S_\Bg^\og)$
induces on $X^{\rm pair}$, then for each $e_T\in\Edir_T$ we have
\begin{equation}\label{eq_wording_relation}
w_T(e_T) = w_X(e_{X,1}) \circ \ldots \circ w_X(e_{X,s})
\end{equation} 
where $\circ$ denotes concatenation of strings, and 
$e_{X,1},\ldots,e_{X,s}$ is the beaded path in $X_1$ corresponding
to $e_T$; and
\item
each directed edge in $\Edir_X$ lies in exactly
one non-backtracking walk $e_{X,1},\ldots,e_{X,s}$ forming a directed edge
in $E_T$.
\end{enumerate}
\end{lemma}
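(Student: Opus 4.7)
The plan is to unwind both the definition of $\SNBC(T^\og;\ge\bec\xi,G_\Bg,k)$ and the definition of pair homotopy type, and observe that both are instances of bead suppression applied to $S_\Bg^\og=\ViSu_\Bg^\og(w)$, but with different sets of suppressed vertices. Set $V_T'\subset V_S$ to be the set of all beads of $S$ except the first vertex of $S^\og$ (if it is a bead), so that by definition the homotopy type of $w$ is $S^\og/V_T'$ endowed with its induced ordering. Let $V'\subset V_S$ be as in Definition~\ref{de_pair_reduction}, i.e., the beads of $S$ that are neither in $V_{\tilde S}$ nor the first vertex of $S^\og$; then by definition $X_1^\og=S^\og/V'$. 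The key observation is the trivial set inclusion $V'\subset V_T'$, with $V_T'\setminus V'$ equal to those beads of $S$ that lie in $V_{\tilde S}$ (and are not the first vertex of $S^\og$).

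First I would verify compositionality of bead suppression: because $V'\subset V_T'$ and every vertex of $V_T'\setminus V'$ remains a bead inside $S/V'$ (suppressing $V'$ only turns two-edge paths through $V'$-vertices into single edges, which does not change the degree of vertices outside $V'$ or create self-loops at them), one has a canonical isomorphism of ordered graphs
\[
S^\og/V_T'\;\isom\;(S^\og/V')\big/(V_T'\setminus V').
\]
Taking this as a definition-chase, $X_1^\og=S^\og/V'$ has homotopy type $T^\og$ if and only if $S^\og$ does, i.e., if and only if $w$ has homotopy type $T^\og$. Combined with the fact that the $B$-graph isomorphism class of $\tilde S_\Bg^\og$ equals the second slot of the pair homotopy type by definition, this establishes the equivalence of the two conditions in the lemma, modulo the $\ge\bec\xi$ edge-length condition, which I handle in the next step.

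Next I would address the edge-length and wording assertions simultaneously. Each $e_T\in\Edir_T$ corresponds to a $V_T'$-beaded non-backtracking path in $S$; inside this path, the intermediate vertices lying in $V_T'\setminus V'=V_{\tilde S}\cap V_T'$ are exactly the places where the path, regarded inside $S/V'$, is subdivided into sub-paths. Each such sub-path is a $V'$-beaded path in $S$, hence is a single directed edge $e_{X,i}\in\Edir_{X_1}$, and concatenating these sub-paths recovers the original $V_T'$-beaded path. Adding lengths and concatenating the $B$-words that $w$ traverses along each $e_{X,i}$ gives exactly \eqref{eq_edge_lengths_relation} and \eqref{eq_wording_relation}. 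Assertion~(3) then follows because the $V_T'$-beaded paths partition the multiset of directed edges of $S$ that survive outside $V_T'$, and each $V'$-beaded path is contained in a unique $V_T'$-beaded path by the inclusion $V'\subset V_T'$; directed edges of $X_1$ lying also in $X_2$ correspond to trivial $V'$-beaded paths of length one and sit inside the unique $e_T$-beaded path of $T$ that passes through them. Finally, the $\ge\bec\xi$ condition on the $T^\og$-edge-lengths $\mec k$ is equivalent to the corresponding condition on $\mec K$ obtained by the sum relation \eqref{eq_edge_lengths_relation}.

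The only mild obstacle is bookkeeping the edge case where the first vertex of $S^\og$ is a bead of $S$, which requires checking that this vertex is excluded from both $V'$ and $V_T'$ so that the compositional identity above holds verbatim, and that the ordering carries through both suppressions in the evident way. Beyond that, the proof is a direct verification from the definitions of Section~\ref{se_pairs_prelim}, and no nontrivial counting or probabilistic argument is needed.
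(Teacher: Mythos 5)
Your proposal is correct and takes essentially the same route as the paper: both prove and then apply the compositionality of bead suppression (your identity $S^\og/V_T'\isom(S^\og/V')/(V_T'\setminus V')$ is exactly the paper's ``more general principle'' $S^\og/V'\isom(S^\og/V'')/V'''$ applied with the partition $V_T'=V'\sqcup(V_T'\setminus V')$). The one place you are slightly glib is the parenthetical claim that suppressing $V'$ cannot create self-loops at surviving vertices---this is true but needs the observation that $V_T'$ (hence $V'$) is a \emph{proper} bead set because $S=\ViSu(w)$ is connected and contains the unsuppressed first vertex of $w$; the paper spells this out explicitly when verifying that $V'''$ remains a proper bead subset of $S/V''$.
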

The proof is straightforward, but a bit long
to write out carefully.
\begin{proof}
Let us prove the following more general principle:
let $V'$ be any proper set of beads of an ordered
$B$-graph $S_\Bg^\og$, and $V'',V'''$ a partition of $V'$ (i.e., $V'',V'''$
are disjoint subsets whose union is $V'$).  Let us show that
that $V''$ is a proper set of beads of $S$, $V'''$ is a proper set
of beads of $S/V''$, and there
is a natural isomorphism
\begin{equation}\label{eq_iterated_suppression}
S^\og / V' \isom (S^\og/V'')/V''' ;
\end{equation} 
moreover, setting $T=S^\og/V/$ and $X=S^\og/V''$, this isomorphism
takes a directed edge $e_T\in \Edir_T$ 
to a $V'''$-beaded path in $X$
whose directed edges are $e_{X,1},\ldots,e_{X,s}$ such that
if $W_T,W_X$ respectively denote the wordings that $S_\Bg^\og$
induces on $T,X$, then \eqref{eq_wording_relation} holds.
Then \eqref{eq_edge_lengths_relation} is an immediate consequence.

Once we verify the above principle, then Lemma~\ref{le_homo_type_walk_pair}
follows by taking $V'$ to be the set of all beads of $S_1^\og$ that
exclude the first vertex (if it is a bead), and setting
$V''=V'\cap V_{S_2}$ and $V'''=V'\setminus V''=V'\setminus V_{S_2}$.

The isomorphism \eqref{eq_iterated_suppression} and all the properties
claimed in that paragraph follow by unwinding the definitions,
which we now do.

First, since $V''\subset V'$, $V''$ is a proper bead set of $V$;
if $v\in V'''$ then $v$ is a bead in $S$, so it is incident upon
two edges of $S$, and each of those edges lie on a distinct edge
of $X=S/V''$; since passing from $S$ to $X=S/V''$ does not create any
new half-loops, $v$ is not incident upon a half-loop of $X=S/V''$.
Finally, $v$ cannot be incident upon a whole-loop $X=S/V''$, since
otherwise $V''\cup\{v\}$ would contain the entire connected component
of $v$ in $S$, contradicting the fact that $V'$ is a proper bead
set in $V$.
Hence $V'''$ is a set of beads of $X=S/V''$; similarly, $V'''$
is a proper bead subset of $X=S/V''$, or otherwise $V''\cup V'''=V'$
would contain an entire connected component of $S$,
contradicting the fact that $V'$ is a proper bead
set in $V$.

Second, we see that, by definition,
$$
V_T = V_S\setminus V' = (V_S\setminus V'')\setminus V'''
= V_X\setminus V''' = V_{X/V'''}.
$$
Hence $S/V'$ and $X/V'''=(S/V'')/V'''$ have identical vertex sets;
we define the morphism \eqref{eq_iterated_suppression} to be 
the identity from $V_T$ to $V_{X/V'''}$.

Third, a directed edge, $e_T\in \Edir_T$, is, by definition,
a $V'$-beaded path of $S$, meaning a non-backtracking walk,
$w_S$, in $S$
whose endpoints lie in $V_S\setminus V'$ and whose intermediate
vertices lie in $V'$.  Hence $w$ is a concatenation
$e_{T,1},\ldots,e_{T,s}$ of non-backtracking walks such that
$e_{T,1},\ldots,e_{T,s-1}$ terminate in vertices of $V'''$
and such that the intermediate vertices of each of $e_{T,j}$
lie in $V''$.  This gives a morphism
$$
\Edir_T \to \Edir_{X/V'''} ;
$$
conversely, we see that any element of $\Edir_{X/V'''}$
is such a non-backtracking walk $e_{T,1},\ldots,e_{T,s}$,
which gives the inverse map.

Fourth, since any directed edge of a graph lies in exactly
one directed edge of any of its suspensions, we see that
any $e_X\Edir_X$ lies in exactly one directed edge of
$E_{X/V'''}$, and hence one directed edge in $\Edir_T$.
\end{proof}

\subsection{Finiteness of Homotopy Types of Pairs}

\begin{lemma}
Let $B$ be a graph,
$T^\og$ an ordered graph, and $\psi_B^\og$ an ordered $B$-graph.
There are only finitely many possible homotopy types
$X^{\rm pairH}=(X;X_1^\og,{X_2}^\og_\Bg)$ of walk-subgraph
pairs $(w,\tilde S_\Bg^\og)$ (or, equivalently
packaged pairs $(S_\Bg^\og,\tilde S_\Bg^\og)$ with
$S_\Bg^\og=\ViSu_\Bg^\og(w)$) such that
$w$ (or $S^\og$) is of homotopy type $T^\og$ and $\tilde S_\Bg^\og$
is isomorphic to $\psi_B^\og$.
\end{lemma}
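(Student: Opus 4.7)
The plan is to prove finiteness by bounding, in terms of $T^\og$ and $\psi_\Bg^\og$ alone, both $\#V_X$ and $\#E_X$; since up to isomorphism there are only finitely many packaged pairs $(X;X_1^\og,{X_2}_\Bg^\og)$ with $\#V_X,\#E_X$ bounded, $X_2^\og_\Bg\isom \psi_\Bg^\og$, and $X_1^\og$ of homotopy type $T^\og$, the conclusion will follow immediately.

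First I would unwind the definition of the pair reduction. If $(U_\Bg;S_\Bg^\og,\tilde S_\Bg^\og)$ is the packaged pair associated to $(w,\tilde S_\Bg^\og)$ and $V'\subset V_U$ is as in Definition~\ref{de_pair_reduction}, then by construction $V'\subset V_S\setminus V_{\tilde S}$, so $V'\cap V_{\tilde S}=\emptyset$ and the suppression $\tilde S/V'$ is canonically $\tilde S$ itself. Therefore $X_2\isom \tilde S\isom \psi$, so that $\#V_{X_2}=\#V_\psi$ and $\#E_{X_2}=\#E_\psi$.

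Next I would bound the size of $X_1=S/V'$. By Lemma~\ref{le_homo_type_walk_pair}, $X_1^\og$ has homotopy type $T^\og$; hence the only vertices of $X_1$ that are \emph{not} in $V_T$ are beads of $X_1$ (other than the first vertex of $S^\og$ if it happens to be a bead). The beads of $S$ that are kept by the suppression are exactly those lying in $V_{\tilde S}$ (plus possibly the first vertex of $S^\og$), so the count of ``extra'' vertices is at most $\#V_\psi+1$, giving $\#V_{X_1}\le \#V_T+\#V_\psi+1$. Correspondingly, each directed edge $e_T\in \Edir_T$ corresponds in $X_1$ to a non-backtracking path that is subdivided by some subset of these extra beads, and adding a single bead on an edge increases the edge count by one; therefore $\#E_{X_1}\le \#E_T+\#V_\psi+1$. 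Since $X=X_1\cup X_2$, we obtain $\#V_X\le \#V_T+2\#V_\psi+1$ and $\#E_X\le \#E_T+\#V_\psi+\#E_\psi+1$.

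Finally I would assemble the count. Up to isomorphism there are only finitely many graphs $X$ satisfying these bounds; for each such $X$ there are only finitely many choices of subgraphs $X_1,X_2\subset X$ with $X_1\cup X_2=X$, finitely many orderings of $X_1$, and finitely many $B$-graph structures on $X_2$ realizing $X_2^\og_\Bg\isom \psi_\Bg^\og$. The main (very minor) obstacle is the careful bookkeeping in the vertex count: one must verify that the only beads of $S$ that survive the suppression are those lying in $V_{\tilde S}$, together with the first vertex of $S^\og$ when it is a bead, which is exactly what is needed to translate the fixed finite size of $\psi$ into a uniform bound on $\#V_{X_1}$ and $\#E_{X_1}$.
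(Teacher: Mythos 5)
Your proposal is correct and takes essentially the same route as the paper: both arguments observe that $X_2\isom\psi$ is fixed, that $X_1$ is obtained from $T$ by inserting at most $\#V_\psi$ beads along edges (so $X_1$ has bounded size, up to isomorphism), and that $X=X_1\cup X_2$ therefore has bounded size, with only finitely many ways to realize $X_1,\psi$ as subgraphs of $X$. One small inefficiency in your accounting: the ``$+1$'' for the first vertex of $S^\og$ is unnecessary, since that vertex is never suppressed in forming $T^\og$ and hence is already counted among $V_T$; the extra beads reinstated in passing from $T$ to $X_1$ all lie in $V_{\tilde S}$, so $\#V_{X_1}\le\#V_T+\#V_\psi$ suffices. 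This does not affect the validity of the finiteness conclusion.
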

\begin{proof}
We need to produce a finite set of pair homotopy types 
$X^{\rm pairH}=(X;X_1^\og,{X_2}^\og_\Bg)$
that includes all walk-subgraph pairs (or packaged pairs) as above.
First, clearly we can take ${X_2}^\og_\Bg=\psi^\og_\Bg$.
Second, $X_1$ is obtained from $T^\og$ by introducing at most
$\#V_\psi$ vertices as beads along its edges.
Hence there are a finite number of possible $X_1^\og$, up to
isomorphism.  Then the union of $X_1^\og$ and $\psi$ has a
bounded number of vertices (i.e., bounded by $\#V_{X_1}+\#V_\psi$)
and a bounded number of directed edges, yielding a finite number
of possible graphs $X$.  Then $X_1$ and $\psi$ can be subgraphs
of $X$ in only finitely many ways, which therefore yields a finite
number of pair homotopy
types $X^{\rm pairH}=(X;X_1^\og,{X_2}^\og_\Bg)$.
\end{proof}

\section{Proof of Theorem~\ref{th_main_certified_pairs}}
\label{se_main_cert_pairs}

We now adapt the proof of Theorem~\ref{th_main_certified_walks},
using the new notions introduced in Section~\ref{se_pairs_prelim}, to
prove Theorem~\ref{th_main_certified_pairs}.

\subsection{Pairs with Certified Edge-Lengths}

In this section we give a natural generalization of
Theorem~\ref{th_main_certified_walks} to pairs with certified edge-lengths.
It will easily---although not immediately---imply
Theorem~\ref{th_main_certified_pairs}.

\begin{definition}
Let $B$ be a graph and
$X^{\rm pairH}=(X;X_1^\og,{X_2}^\og_\Bg)$ a homotopy type of a
walk-subgraph pair in a $B$-graph.
For any $\mec K\from E_X\to \naturals$, let
$$
\mbox{WS-PAIRS}(X^{\rm pairH},\mec K; G_\Bg,k)
$$
be the set of walk-subgraph pairs
in $G_\Bg$ of pair homotopy type $X^{\rm pairH}$ whose edge lengths are
$\mec K$ and whose walk has length $k$; similarly we let
$$
\mbox{PACK-PAIRS}(X^{\rm pairH},\mec K; G_\Bg)
$$
the set of packaged pairs of
$B$-subgraphs $(U_\Bg,S_\Bg^\og,\tilde S_\Bg^\og)$
in $G_\Bg$ (i.e., with $U_\Bg\subset G_\Bg$)
of homotopy type $X^{\rm pairH}$ whose edge lengths are
$\mec K$.
For
$\bec\Xi\from E_X\to\naturals$, the {\em walk-subgraph pairs 
certified by $\Xi$ (over $X^{\rm pairH}$)}
is the set 
\begin{equation}\label{eq_de_edge_cert_pairs}
\mbox{WS-PAIRS}(X^{\rm pairH},\ge\bec\Xi; G_\Bg,k) \eqdef
\bigcup_{\mec K\ge\bec\Xi}\mbox{WS-PAIRS}(X^{\rm pairH},\mec K; G_\Bg,k)  ;
\end{equation}
we similarly define the set of {\em packaged pairs certified
by by $\Xi$ (over $X^{\rm pairH}$)} to be
$$
\mbox{PACK-PAIRS}(X^{\rm pairH},\ge\bec\Xi; G_\Bg,k) \eqdef
\bigcup_{\mec K\ge\bec\Xi}\mbox{PACK-PAIRS}(X^{\rm pairH},\mec K; G_\Bg,k)  ;
$$
we use ws-pairs and pack-pairs 
instead of WS-PAIRS and pack-pairs for the cardinality of these sets.
\end{definition}

\begin{lemma}\label{le_pair_certified_thm}
Let $B$ be a graph, and $\cC_n(B)$ an algebraic model.
For any pair homotopy type
$X^{\rm pairH}=(X;X_1^\og,{X_2}^\og_\Bg)$ and any
$\bec\Xi\from E_X\to\naturals$, let
$$
\nu = \max\Bigl( \mu_1^{1/2}(B), 
\mu_1\bigl(\VLG({X_1},\bec\Xi|_{X_1})\bigr) \Bigr).
$$
Then for any $r\ge 1$,
\begin{equation}\label{eq_expected_ws_pairs}
f(k,n)\eqdef \EE_{G\in\cC_n(B)}
[\mbox{\rm ws-pairs}(X^{\rm pairH},\ge\bec\Xi;G,k)]
\end{equation} 
has a $(B,\nu)$-bounded expansion 
$$
c_0(k)+\cdots+c_{r-1}(k)+ O(1) c_r(k)/n^r,
$$
to order $r$; the bases (larger with respect to $\nu$) of
of the coefficients of the expansion is a subset of the
eigenvalues of the model, and $c_i(k)=0$ for any $i$ less than the
order of any $B$-graph that contains a walk of type $T^{\rm type}$
and a subgraph isomorphic to $\psi_\Bg$.
\end{lemma}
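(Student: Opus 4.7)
The plan is to mirror the proof of Theorem~\ref{th_main_certified_walks}, with the length-multiplicity formula now applied to walk-subgraph pairs and with $F_1$ replaced by an expected pack-pair count. First I would establish a pair version of Lemma~\ref{le_length_mult}. By Lemma~\ref{le_homo_type_walk_pair}, any walk-subgraph pair $(w,\tilde S_\Bg^\og)$ of pair homotopy type $X^{\rm pairH}$ has its walk $w$ of a fixed homotopy type $T^\og$—the suppression of $X_1^\og$ keeping all vertices in $X_2$—with edge-lengths $\mec k=\mec k(\mec K)\from E_T\to\naturals$ determined from $\mec K\from E_X\to\naturals$ by \eqref{eq_edge_lengths_relation}. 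Fixing a pack-pair $(U_\Bg;S_\Bg^\og,\tilde S_\Bg^\og)$ inside some $G_\Bg$, the number of SNBC walks $w$ of length $k$ in $G_\Bg$ with $\ViSu_\Bg^\og(w)=S_\Bg^\og$ is $\sum_{\mec k(\mec K)\cdot\mec m=k}\legal(T^\og,\mec m)$ by Lemma~\ref{le_fundamental_legal_formula}. This gives
$$
\EE[\text{ws-pairs}(X^{\rm pairH},\ge\bec\Xi;G,k)] = \sum_{\mec K\ge\bec\Xi} F_1(\mec K,n)\sum_{\mec k(\mec K)\cdot\mec m=k} F_2(\mec m),
$$
where $F_1(\mec K,n)=\EE_G[\text{pack-pairs}(X^{\rm pairH},\mec K;G_\Bg)]$ and $F_2(\mec m)=\legal(T^\og,\mec m)$.

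Next I would prove a pair analog of Lemma~\ref{le_second_main_lemma}: an asymptotic expansion $F_1(\mec K,n)=c_0(\mec K)+\cdots+c_{r-1}(\mec K)/n^{r-1}+O(1/n^r)C_r(\mec K)$ valid for $\mec K\cdot\mec 1\le n^{1/2}/C$, with $c_i(\mec K)$ polyexponential in $\mec K$ (bases among the model's eigenvalues) and $C_r$ of growth $\mu_1(B)$. Because $X^{\rm pairH}$ has only the trivial automorphism, each pack-pair in $G$ of type $X^{\rm pairH}$ with edge-lengths $\mec K$ corresponds bijectively to an injection from the pair into $G_\Bg$, and, after summing over a bounded number of orderings of $U=X_1\cup X_2$ compatible with the orderings on $X_1,X_2$, to a count of ordered $B$-subgraphs of $G_\Bg$. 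Partitioning $B$-wordings on $X^{\rm pair}$ (those that restrict to $X_2$'s pre-assigned $B$-structure) into finitely many ordered $B$-types on $X_1$, applying algebraicity of $\cC_n(B)$ type by type, and then invoking Lemma~\ref{le_precursor} (which is stated in exactly the form needed—purely in terms of ordered $B$-types) yields the expansion, with vanishing of $c_i$ for $i$ less than the order of any $B$-graph occurring in $\cC_n(B)$ that contains both a walk of homotopy type $T^\og$ and a subgraph isomorphic to $\psi_\Bg$, since this order is the minimum possible for the underlying graph $U=X_1\cup X_2$.

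The final step is to substitute the expansion of $F_1$ into the length-multiplicity formula and estimate each term. The key observation is that $\mec k(\mec K)\cdot\mec m=\mec K\cdot\tilde{\mec m}$, where $\tilde m\from E_{X_1}\to\naturals$ is constant along each beaded path with value $m(e_T)$ (and $K(e)=1$ is forced for edges of $X$ not in $X_1$, contributing a harmless finite factor). Hence each inner sum is a certified dot convolution of the polyexponential $c_i$ with the function $\legal(T^\og,\mec m)$, certified by $\bec\Xi|_{X_1}$. Lemma~\ref{le_first_main_lemma} applied with $T$ replaced by $X_1$ shows the relevant sum of $\legal$-values grows like $\mu_1\bigl(\VLG(X_1,\bec\Xi|_{X_1})\bigr)$—noting that $\VLG(T,\mec k)$ and $\VLG(X_1,\bec\Xi|_{X_1})$ are isomorphic since bead-suppression followed by re-expansion recovers the same graph—and then Lemma~\ref{le_third_main_lemma} produces an approximate polyexponential in $k$ with error growth $\max(\mu_1^{1/2}(B),\mu_1(\VLG(X_1,\bec\Xi|_{X_1})),1)=\nu$, and with bases a subset of those of $c_i$, hence of the model's eigenvalues. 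Summing the $r$ resulting terms and the error term gives the desired $(B,\nu)$-bounded expansion.

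The main obstacle is the bookkeeping in the middle step: pack-pair counts must be related carefully to ordered $B$-subgraph counts when $U=X_1\cup X_2$ has only partial ordering data coming from $X_1^\og$ and $X_2^\og$, and when the edges of $X_1\cap X_2$ are forced by the $X_2$-structure to carry one-letter $B$-wordings. The triviality of $\text{Aut}(X^{\rm pairH})$ eliminates any automorphism factors, and the partition of wordings on $X^{\rm pair}$ into $B$-types respecting the $X_2$-constraint is inherited from the algebraicity of the model on $X_1^\og$ by intersecting each $\cR_j(e)$ with $\{e_B\}$ for $e\in\Edir_{X_1}\cap\Edir_{X_2}$; once these technical matters are settled, the remaining arguments are formal reuses of Lemmas~\ref{le_first_main_lemma}, \ref{le_third_main_lemma}, and \ref{le_precursor}.
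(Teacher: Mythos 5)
Your proposal follows essentially the same route as the paper's proof: a pair length-multiplicity formula, an expansion of the expected pack-pair count via Lemma~\ref{le_precursor} after intersecting the $B$-types on $X$ against the one-letter constraints forced by $X_2$'s $B$-structure, and then Lemmas~\ref{le_first_main_lemma} and~\ref{le_third_main_lemma}. The one cosmetic difference---you pass through $T^\og$ and then convert to $\tilde{\mec m}\from E_{X_1}\to\naturals$ constant on beaded paths, while the paper works with $F_2(\mec m)=\legal(X_1^\og,\mec m)$ on $E_{X_1}$ from the start---is equivalent, since $\legal(X_1^\og,\tilde{\mec m})$ vanishes unless $\tilde{\mec m}$ is constant on beaded paths, so your inner sum is indeed a genuine certified dot convolution on $E_{X_1}$.
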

[In the above lemma we use $\bec\Xi|_{X_1}$ to emphasize that
we are viewing $\bec\Xi$ as restricted to
$E_{X_1}$, since the lemma views $\bec\Xi$ as defined on all of $E_X$.]

\begin{proof}
We have $\mbox{pairs}(X^{\rm pairH},\mec K;G,k)=0$ if $K(e)>1$ for 
any $e\in E_\psi$; hence for all such $e$ we may assume $\Xi(e)=1$ 
and that $\mec K\ge \bec\Xi$ in \eqref{eq_de_edge_cert_pairs} refers to
to those $\mec K$ with $K(e)=1$ for all $e\in E_\psi$.

Next we prove the following analog of Lemma~\ref{le_length_mult}:
for $f(k,n)$
as in \eqref{eq_expected_ws_pairs},
\begin{equation}\label{eq_length_mult_pair}
f(k,n) = \sum_{\mec K|_{X_1}\cdot\mec m=k, \ \mec K|_{X_1}\ge \bec\Xi|_{X_1}} 
F_1(\mec K|_{X_1},n)F_2(\mec m),
\end{equation} 
where ($\mec m\from E_{X_1}\to\naturals$ and)
\begin{align*}
F_1(\mec K|_{X_1}) =
\tilde F_1(\mec K,n) & = 
\EE_{G\in\cC_n(B)}\Bigl[
\mbox{pack-pairs}(X^{\rm pairH},\mec K,G_\Bg)\Bigr]
\\
F_2(\mec m)  & = \legal({X_1}^\og,\mec m) ,
\end{align*}
where $\legal$ is as in Definition~\ref{de_legal}, and where we
write $F_1(\mec K|_{X_1},n)=\tilde F_1(\mec K,n)$ since $K$ is
determined by its values on $X_1$.
To prove \eqref{eq_length_mult_pair} we note that
to each walk-subgraph pair in $G_\Bg$ there
corresponds a 
unique packaged pair $(U_\Bg,S_\Bg^\og,\tilde S_\Bg^\og)$,
and for each such packaged pair of edge-lengths $\mec K$,
the number of legal walks in $S^\og$ of length $k$ is given
as in Lemma~\ref{le_fundamental_legal_formula},
and $\mec k_S$ in 
\eqref{eq_fundamental_num_walks} equals $\mec K|_{X_1}$.

Now we follow the proof of Theorem~\ref{th_main_certified_walks},
where the role of $T^\og$ is played by $X^\og$.

Since $\cC_n(B)$ is algebraic, there are ordered $B$-types
$(X^\og,\cR_j)$ such that each wording $W$ of $X^\og$ lies
in exactly one of $\cR_j$; let
$\pi\from X_2\to B$ be the $B$-structure of ${X_2}_\Bg$, and
define $\cR_j'$ by
$$
\cR_j'(e) = \left\{ \begin{array}{ll}
\cR_j(e) & \mbox{if $e\in \Edir_X\setminus \Edir_{X_2}$, and} \\
\{\pi(e)\}\cap \cR_j(e) & \mbox{if $e\in \Edir_{X_2}$.} 
\end{array}
\right.
$$
Since the one-word language $\{\pi(e)\}$ has eigenvalue $0$
(its number of words of a given length vanishes for length greater
than one),
all the eigenvalues of $\cR_j'$ are either $0$ or those of $\cR$.

Applying Lemma~\ref{le_precursor} to each $(X^\og,\cR_j')$ and summing 
over $j$ shows that $\tilde F_1'(\mec K,n)$ has 
expansions to any order $r$ whose
coefficients are polyexponential functions, $p_i$, 
in $\mec a(\mec K)$, whose bases are those of $\cR_j'$,
and whose error term is $n^{-r}$ times a function
of growth $\mu_1(B)$.
Note that we may write
$$
\mec a(\mec K) = \mec a(\mec K|_{X_1}) + \mec a(\mec K|_{\psi\setminus {X_1}})
$$
where the rightmost $\mec a$ counts $E_B$ edge occurrences in
$\psi_\Bg^\og$ edges that are not in ${X_1}$; hence 
(since $X^{\rm pairH}$ is fixed), 
$\mec a(\mec K)$ is a linear function of $\mec a(\mec K|_{X_1})$,
and therefore each $p_i(\mec a(\mec K))$ also a polyexponential function
of $\mec K|_{X_1}$.  
Hence $F_1(\mec K|_{X_1},n)$ has expansions to any order
$r$, whose coefficients are polyexponentials whose bases are $0$ plus those
of the model, and whose error term is $n^{-r}$ times a function of
growth $\mu_1(B)$.
Furthermore,
Lemma~\ref{le_precursor}
implies that these coefficients $c_i$ vanish for all $i<\ord(X)$.

Lemma~\ref{le_first_main_lemma} shows that 
$$
\omega(M) \eqdef
\sum_{\bec\cert\cdot \mec m= M} F_2(\mec m)
$$
is of growth $\nu$.  Now we use Lemma~\ref{le_third_main_lemma} to 
conclude that 
\eqref{eq_length_mult_pair} has $(B,\nu)$-bounded expansions to any 
order, whose coefficients have bases that are those of the model
(the base $0$, introduced in passing from $\cR_j$ to $\cR_j'$ above,
does not change the fact that
each coefficient is a $(B,\nu)$-bounded function,
since $\nu\ge \mu_1^{1/2}(B)$ and hence $\nu\ge 0$).
[Again, we use the fact that one can assume $\mu_1(B)\ge 1$.]

Furthermore the coefficients $c_i$ of this asymptotic expansion
vanish if $i<\ord(X)$, since those of each $F_1(\mec K|_{X_1},n)$
do.
\end{proof}

\subsection{Proof of Theorem~\ref{th_main_certified_pairs}}

Before giving the proof, we need the following lemma.

\begin{lemma}\label{le_vlg_compare}
Let $T$ be a graph, and $\mec k,\mec k'$ be
two maps $E_T\to\naturals$ with $\mec k\le \mec k'$
(i.e., $k(e)\le k'(e)$ for all $e\in E_T$).  Then
\begin{equation}\label{eq_vlg_compare}
\mu_1\bigl( \VLG(T,\mec k) \bigr)
\ge
\mu_1\bigl( \VLG(T,\mec k') \bigr) \ .
\end{equation}
\end{lemma}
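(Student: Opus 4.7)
The plan is to reduce the inequality to a single-edge subdivision statement and handle that by a walk-counting argument.

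First I would induct on the nonnegative integer $N(\mec k,\mec k')\eqdef\sum_{e\in E_T}(k'(e)-k(e))$. The base case $N=0$ is trivial, so for $N\ge 1$ I pick an edge $e_0\in E_T$ with $k(e_0)<k'(e_0)$ and define $\mec k''$ by $k''(e_0)=k'(e_0)-1$ and $k''(e)=k'(e)$ for $e\ne e_0$. Then $\VLG(T,\mec k')$ is obtained from $\VLG(T,\mec k'')$ by subdividing a single edge (any edge inside the path of length $k'(e_0)-1$ corresponding to $e_0$). By the inductive hypothesis applied to $\mec k\le \mec k''$ we have $\mu_1\bigl(\VLG(T,\mec k)\bigr)\ge \mu_1\bigl(\VLG(T,\mec k'')\bigr)$, so everything reduces to the following:

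\emph{Sublemma.} If $G'$ is obtained from $G$ by subdividing a single edge, then $\mu_1(G')\le\mu_1(G)$.

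To prove the sublemma, let $e$ be the subdivided edge of $G$, which becomes a two-edge path $f_1f_2$ in $G'$. I would construct a bijection between SNBC walks in $G'$ of length $L'$ and SNBC walks in $G$ of length $L$ traversing $e$ in either direction exactly $m$ times, with $L+m=L'$ and $0\le m\le L$: each traversal of $e$ in a walk in $G$ lifts uniquely to $f_1f_2$ or $f_2f_1$ in $G'$, and conversely every SNBC walk in $G'$ uses $f_1$ and $f_2$ only in such adjacent pairs (since the new intermediate vertex has degree $2$ and non-backtracking walks through it are forced), which contract back to $e$. Using the fact (via Jordan form and the line-graph interpretation of $H_G$) that $\Tr(H_G^L)$ equals the number of SNBC walks of length $L$, the bijection yields
$$\Tr(H_{G'}^{L'})\le\sum_{L=\lceil L'/2\rceil}^{L'}\Tr(H_G^L).$$

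Finally, if $G$ is a forest, so is $G'$, and both Hashimoto matrices are nilpotent, giving $\mu_1(G)=\mu_1(G')=0$. Otherwise $G$ contains a cycle, so $\mu_1(G)\ge 1$; for any $\epsilon>0$, $\Tr(H_G^L)\le C_\epsilon(\mu_1(G)+\epsilon)^L$, whence $\Tr(H_{G'}^{L'})\le L'C_\epsilon(\mu_1(G)+\epsilon)^{L'}$. Taking $L'$-th roots and letting $L'\to\infty$ then $\epsilon\to 0$ gives $\mu_1(G')\le\mu_1(G)$, proving the sublemma and hence \eqref{eq_vlg_compare}. The main obstacle is the careful verification of the bijection at the seams where the walk enters and leaves the subdivided edge, namely checking that the SNBC condition at the endpoints of the walk (including the wrap-around from $e_L$ to $e_1$) is preserved under the lift and contraction; but since the new vertex has degree $2$ and is not incident to a self-loop, this is forced and the bookkeeping goes through cleanly.
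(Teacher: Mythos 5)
Your argument is correct in conclusion and takes a genuinely different, more elementary route than the paper. The paper does not argue from first principles: it cites ``Shannon's algorithm'' and the majorization machinery of Theorem~3.5 of \cite{friedman_alon}, observing that each entry of the matrix $Z_G(z)$ for $G = \Line\bigl(\VLG(T,\mec k)\bigr)$ majorizes the corresponding entry for $\VLG(T,\mec k')$, so the Perron eigenvalues compare. You instead reduce by induction to a single-edge subdivision and bound $\Tr(H_{G'}^{L'})$ by sums of $\Tr(H_G^L)$ via a walk correspondence. That is self-contained and gives the same monotonicity; the paper's route buys a one-line derivation at the cost of importing an external theorem, yours buys self-containment at the cost of the trace asymptotics.

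One small inaccuracy to flag, though it does not affect the outcome: the correspondence between SNBC walks is not a bijection. Recall that $\Tr(H_{G'}^{L'})$ counts closed non-backtracking walks in the line graph, which amounts to counting SNBC walks in $G'$ together with a distinguished starting directed edge. Walks in $G'$ whose starting vertex is the new subdivision vertex $v$ (equivalently, whose first directed edge is $f_2$ or $\iota f_1$) are not the lift of any walk in $G$ with its marked start, because upon contraction the marked position falls in the interior of the contracted pair. Those walks are accounted for only after a cyclic shift, which injects them into the walks already counted; the upshot is an inequality of the form $\Tr(H_{G'}^{L'})\le 2\sum_{L}\Tr(H_G^L)$ rather than the stated bound without the factor $2$. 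Since any constant factor (indeed any factor polynomial in $L'$) washes out after taking $L'$-th roots, your growth-rate argument goes through unchanged. You also implicitly use that $\mu_1(G')=\limsup_{L'}\Tr(H_{G'}^{L'})^{1/L'}$ and that $\mu_1(G)\in\{0\}\cup[1,\infty)$; both are standard for the Hashimoto matrix, but are worth a sentence rather than being left tacit.
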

Its proof
is a standard consequence of ``Shannon's algorithm,''
and {\em majorization}
as described just above Theorem~3.5 of
\cite{friedman_alon}.
In the terminology there,
each entry of the matrix $Z_G(z)$, where $G$ is the oriented line graph of
$\VLG(T,\mec k)$, majorizes each of $Z_H(z)$ where $H$ is the oriented
line graph of
$\VLG(T,\mec k')$; hence each entry of $M_G(z)$ majorizes that of
$M_H(z)$; hence
equation~(12) and Theorem~3.5 of \cite{friedman_alon}
imply \eqref{eq_vlg_compare}.

\begin{proof}[Proof of Theorem~\ref{th_main_certified_pairs}]
It suffices to prove Theorem~\ref{th_main_certified_pairs} when we 
restrict to counting pairs of a given pair homotopy type
$X^{\rm pairH}$, since all pairs counted in
\eqref{eq_subgraphs_times_walks} are of a finite number of 
possible pair homotopy types.

So fix a pair homotopy type
$X^{\rm pairH}=(X^\le,{X_1}^\le,\psi_\Bg^\og)$ (there is no harm in assuming
the last element of $X^{\rm pairH}$ equals $\psi_\Bg^\og$, since it
is isomorphic to $\psi_\Bg^\og$).
If $(w,\tilde S_\Bg^\og)$ is a pair of this pair-homotopy,
then the homotopy type of the walk $T^\le$ is determined by ${X_1}^\le$ and,
according to \eqref{eq_edge_lengths_relation}
the condition $\mec k\ge \bec\xi$ is equivalent to
\begin{equation}\label{eq_pair_inequality_sum}
\xi(e_T) \le \sum_{e_B\in e_T} K(e_B),
\end{equation}
using the notion $e_B\in e_T$ in \eqref{eq_edge_lengths_relation}.

Next, with the same $e_B\in e_T$ notation,
consider all vectors $\Xi\from E_X\to\naturals$ such that for each
$e_T\in E_T$ we have
\begin{equation}\label{eq_pair_equality_sum}
\xi(e_T) = \sum_{e_B\in e_T} \Xi(e_B);
\end{equation}
since each $e_T$ contains a fixed subset of $e_B$ in $E_B$,
for each $e_T$ there are only finitely many possible values of $\Xi(e_B)$
satisfying \eqref{eq_pair_equality_sum}; since each $e_B$ lies in
some $e_T$, there are only finitely many such vectors $\Xi$;
denote these vectors by
$\Xi_1,\ldots,\Xi_s$.

Any $\mec K$ satisfying \eqref{eq_pair_inequality_sum} must satisfy
$\mec K\ge \bec\Xi_i$ for at least one $i$.  For each subset
$S\subset [s]$, let 
$$
\Xi_S \eqdef \max_{s\in S} \Xi_s,
$$
where the maximum is taken component by component.  In this way
$\mec K\ge \Xi_s$ for all $s\in S$ is equivalent to the condition
$\mec K\ge \Xi_S$.
For any fixed $n$, inclusion-exclusion implies that
$$
f(k,n)=\EE_{G\in\cC_n(B)}
[{\rm snbc}(T^\og,\ge\bec\xi;\psi_B^\og;G,k)]
$$
equals
\begin{equation}\label{eq_inclusion_exclusion_Xis}
\sum_{S\subset [s],\ S\ne\emptyset} (-1)^{1+(\#S)} f_S(k,n), 
\quad\mbox{where}\quad
f_S(k,n)\eqdef \EE_{G\in\cC_n(B)}
[{\rm snbc}(X^{\rm pairH},\ge\bec\Xi_S;G,k)] \ .
\end{equation} 
By Lemma~\ref{le_pair_certified_thm}, setting
$$
\nu_S = \max\Bigl( \mu_1^{1/2}(B), 
\mu_1\bigl(\VLG({X_1},{\bec\Xi_S}|_{X_1})\bigr) \Bigr).
$$
each of the functions
$f_S(k,n)$ has $(B,\nu_S)$-bounded expansions to any order, whose bases
of coefficients are those of the model and whose $i$-th order coefficient
vanishes if $i<\ord(X)$.

Since each $S$ in \eqref{eq_inclusion_exclusion_Xis} is nonempty,
each such $S$ has $j\in S$ for some $j\in[s]$; for such $j,S$ we have
$\bec\Xi_S\ge\bec\Xi_j$, and therefore
$$
\mu_1\bigl(\VLG({X_1},{\bec\Xi_S}|_{X_1})\bigr) 
\le
\mu_1\bigl(\VLG({X_1},{\bec\Xi_j}|_{X_1})\bigr) .
$$
But in view of \eqref{eq_pair_equality_sum} we have
$$
\VLG({X_1},{\bec\Xi_j}|_{X_1})
\isom
\VLG(T,\xi).
$$
Hence $\nu_S\le \nu$ with $\nu$ as in 
\eqref{eq_nu_as_in_main_cert_pairs}.
Hence the expansion coefficients of each $f_S(k,n)$ are also
$(B,\nu)$-bounded,
and then \eqref{eq_inclusion_exclusion_Xis}
also has such an expansion to any order whose coefficients are
$(B,\nu)$-bounded.
This proves the existence of expansions of
\eqref{eq_subgraphs_times_walks}
claimed in Theorem~\ref{th_main_certified_pairs},
when \eqref{eq_subgraphs_times_walks} is summed over pairs
of a given pair homotopy type.
Summing over the finite number of pair homotopy types
of walk-subgraph pairs
$X^{\rm pairH}=(X;X_1^\og,{X_2}^\og_\Bg)$ with walks of
homotopy type $T^\og$ and ${X_2}_\Bg^\og=\psi_\Bg^\og$
yields the desired expansion theorem 
for \eqref{eq_subgraphs_times_walks}.

Furthermore, we know that the coefficients $c_i$ of this expansion
vanish whenever $i<\ord(X)$ for all pair homotopy types
$X^{\rm pairH}=(X;X_1^\og,{X_2}^\og_\Bg)$ of a walk-subgraph
pair whose union occurs
in $\cC_n(B)$ (i.e., whose union occurs with nonzero probability
in $\cC_n(B)$ for some $n$, and hence for all $n$ sufficiently large); 
but the smallest such $\ord(X)$ 
is the smallest order of a subgraph, $U_\Bg$, that occurs in $\cC_n(B)$
and contains a subgraph of homotopy type
$T^\og$ and a $B$-graph isomorphic to $\psi_\Bg$.
This establishes the last claim
in Theorem~\ref{th_main_certified_pairs}.
\end{proof}


\providecommand{\bysame}{\leavevmode\hbox to3em{\hrulefill}\thinspace}
\providecommand{\MR}{\relax\ifhmode\unskip\space\fi MR }
\providecommand{\MRhref}[2]{%
  \href{http://www.ams.org/mathscinet-getitem?mr=#1}{#2}
}
\providecommand{\href}[2]{#2}

\end{document}